\newcommand{\myparagraph}[1]{\smallskip\noindent\textbf{#1}}
\newcommand{\rephrase}[3]{\smallskip\noindent\textbf{#1 #2}.~\emph{#3}\smallskip}
\let\doendproof\endproof
\renewcommand\endproof{~\hfill$\qed$\doendproof}
\newtheorem{chr}[theorem]{Characterization} 
\newtheorem{obs}[theorem]{Observation} 
\newcommand{\ver}{arxiv}
\newcommand{\arxapp}[2]{\ifthenelse{\equal{\ver}{arxiv}}{#2}{#1}}
\begin{document}
\title{Efficient~Generation~of~Different~Topological Representations of Graphs Beyond-Planarity\thanks{This project was supported by DFG grant KA812/18-1.}}
\titlerunning{Generation of Topological Representations of Graphs Beyond-Planarity}

\author{%
Patrizio~Angelini\orcidID{0000-0002-7602-1524},
Michael~A.~Bekos\orcidID{0000-0002-3414-7444},
Michael~Kaufmann\orcidID{0000-0001-9186-3538},
Thomas~Schneck\orcidID{0000-0003-4061-8844}}

\institute{
Institut f\"ur Informatik, Universit\"at T\"ubingen, T\"ubingen, Germany\\
\email{\{angelini,bekos,mk,schneck\}@informatik.uni-tuebingen.de}
}

\maketitle

\begin{abstract}
Beyond-planarity focuses on combinatorial properties of clas\-ses of non-planar graphs that allow for representations satisfying certain local geometric or topological constraints on their edge crossings.~Beside the study of a specific graph class for its maximum edge density, another parameter that is often considered in the literature is the size of the largest complete or complete bipartite graph belonging to it.

Overcoming the limitations of standard combinatorial arguments, we present a technique to systematically generate all non-isomorphic topological representations of complete and complete bipartite graphs, taking into account the constraints of the~specific class. As a proof of concept, we apply our technique to various beyond-planarity classes and achieve new tight bounds for the aforementioned parameter.
\end{abstract}

\keywords{Beyond planarity \and Complete (bipartite) graphs \\\and Generation of topological representations}

\section{Introduction}
\label{sec:introduction}

Beyond-planarity is an active research area concerned with combinatorial properties of non-planar graphs that lie in the ``neighborhood'' of planar graphs.~More concretely, these graphs allow for non-planar drawings in which certain geometric~or topological crossing configurations are forbidden. The most studied beyond-planarity classes, with early results dating back to 60's~\cite{avital-66,MR0187232}, are the~\emph{$k$-planar} graphs~\cite{PachT97}, which forbid an edge to be crossed more than $k$ times, and the \emph{$k$-quasiplanar} graphs~\cite{AgarwalAPPS97}, which forbid $k$ mutually crossing edges; see Figs.~\ref{fig:1-planar}-\ref{fig:quasi}. 

More recently, several other classes have been suggested (e.g.,~\cite{%
DBLP:journals/tcs/AngeliniBKKS18,DBLP:journals/tcs/BaeBCEE0HKMRT18%
}), 
also motivated by cognitive experiments~\cite{HuangHE08,DBLP:journals/siamjo/Mutzel01} indicating that the absence of certain types of crossings helps in improving the readability of a drawing; for a survey, refer to~\cite{DBLP:journals/corr/abs-1804-07257}. 
Some of the most studied are: 
\begin {inparaenum}[(i)]
\item \emph{fan-planar} graphs, in which no edge can be crossed by two independent edges or by two adjacent edges from different
directions~\cite{BekosCGHK14,%
BinucciGDMPST15,KaufmannU14}, 
\item \emph{fan-crossing free} graphs, in which no edge can be crossed by two adjacent edges~\cite{DBLP:journals/ipl/Brandenburg18,DBLP:journals/algorithmica/CheongHKK15}, 
\item \emph{gap-planar} graphs, in which each crossing is assigned to one of its two involved edges, such that each edge can be assigned at most one crossing~\cite{DBLP:journals/tcs/BaeBCEE0HKMRT18}, 
and
\item \emph{RAC graphs}, in which edge crossings occur only at right angles~\cite{DidimoEL11,Didimo2013,EadesL13}; see Figs.~\ref{fig:fan-planar}-\ref{fig:rac}.
\end{inparaenum}
Note that all the aforementioned graph classes are \textit{topological}, i.e., each edge is represented as a simple curve, with the only exception of the class of RAC graphs, which is a purely \textit{geometric} graph class, i.e., each edge must be represented as a straight-line segment. In this work, we refer to the aforementioned topological graph classes as \emph{beyond-planarity classes of topological graphs}.

\begin{figure}[t]
	\centering
	\subcaptionbox{\label{fig:1-planar}}{\includegraphics[width=0.15\textwidth,page=1]{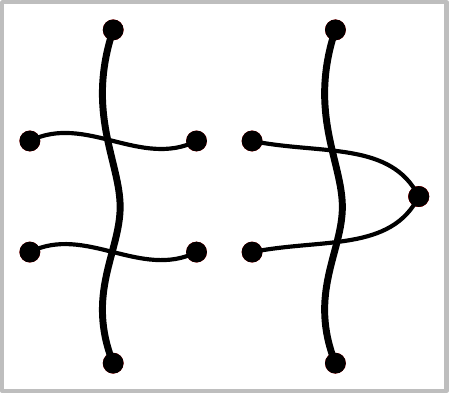}}
	\hfil
	\subcaptionbox{\label{fig:quasi}}{\includegraphics[width=0.15\textwidth,page=5]{beyond-planarity}}
	\hfil
	\subcaptionbox{\label{fig:fan-planar}}{\includegraphics[width=0.15\textwidth,page=4]{beyond-planarity}}
	\hfil
	\subcaptionbox{\label{fig:fan-crossing-free}}{\includegraphics[width=0.15\textwidth,page=2]{beyond-planarity}}
	\hfil
	\subcaptionbox{\label{fig:rac}}{\includegraphics[width=0.15\textwidth,page=6]{beyond-planarity}}
	\hfil
	\subcaptionbox{\label{fig:gap}}{\includegraphics[width=0.15\textwidth,page=3]{beyond-planarity}}
   \caption{Different forbidden crossing configurations in:
   (a)~1-planar,
   (b)~3-quasiplanar,
   (c)~fan-planar,
   (d)~fan-crossing~free, 
   (e)~gap-planar, and
   (f)~RAC graphs.}
\label{fig:beyond-planarity}
\end{figure}

A common characteristic of these graph classes is that their edge density is at most linear in the number of vertices, e.g., $1$-planar graphs with $n$ vertices~have at most $4n-8$ edges~\cite{PachT97}; see Table~\ref{table:density}. Another common measure to determine the extent of a specific class is the size of the largest complete or complete bipartite graph belonging to it~\cite{DBLP:journals/tcs/BaeBCEE0HKMRT18,franz-quasi-planar,DBLP:journals/dam/CzapH12,DidimoEL10}, which also provides a lower bound on their chromatic number~\cite{Hadwiger43} and has been studied in related fields (e.g.,~\cite{%
DBLP:journals/comgeo/ArleoBGEGLMMWW18,%
DBLP:journals/jgaa/BruckdorferCGKM17,%
DBLP:journals/algorithmica/EppsteinKKLLMMV18,%
DBLP:journals/gc/HartsfieldJR85
}). 

For $1$-planar graphs, Czap and Hud\'ak~\cite{DBLP:journals/dam/CzapH12} proved that the complete graph~$K_n$ is $1$-planar if and only if $n \leq 6$, and that the complete bipartite graph $K_{a,b}$, with $a \leq b$, is $1$-planar if and only if $a\leq 2$, or $a=3$ and $b \leq 6$, or $a=b=4$. An analogous characterization is known for the class of RAC graphs by Didimo et al.~\cite{DidimoEL10,DidimoEL11}, who proved that $K_n$ is a RAC graph if and only if $n \leq 5$, while $K_{a,b}$, with $a \leq b$, is a RAC graph if and only if $a \leq 2$, or $a=3$ and $b \leq 4$. For the classes of $3$-quasiplanar (also known as \emph{quasiplanar}), gap-planar, and fan-crossing free graphs, characterizations exist only for complete graphs, i.e.,~$K_n$~is quasiplanar if and only if $n \leq 10$~\cite{DBLP:journals/jct/AckermanT07,franz-quasi-planar}, gap-planar if and only if $n \leq 8$~\cite{DBLP:journals/tcs/BaeBCEE0HKMRT18}, and fan-crossing free if and only if $n \leq 6$~\cite{DBLP:journals/algorithmica/CheongHKK15,DBLP:journals/dam/CzapH12}; Table~\ref{table:density}~gives~more~details. 

To prove the ``if part'' of these characterizations, one has to provide a certificate drawing of the respective graph.
The proof for the ``only if part'' is generally more complex, as it requires arguments to show that no such drawing exists. 

One of the main techniques is provided by the linear edge density of the graph classes; e.g., $K_7$ is neither $1$-planar nor fan-crossing free, as it has more than $4n-8$ edges~\cite{DBLP:journals/algorithmica/CheongHKK15,PachT97}. However, this technique has a limited applicability; e.g., for $2$-planar and fan-planar graphs, which have at most $5n-10$ edges,~it~only ensures that $K_9$ is not a member of these classes. Proving that $K_8$ is also not a member requires a different approach. The limitations are even more evident~for complete bipartite graphs, as they are sparser than the complete~ones~(see~Section~\ref{sec:applications}).

\begin{table}[t!]
  \caption{%
  Known results and our findings. For each class, we present the largest complete and complete bipartite graphs that belong to this class (col. ``$\in$''), and the smallest ones that do not (col. ``$\notin$''). Color gray indicates weaker results that follow from other~entries.
  }
  \label{table:density}
  \medskip
  \resizebox{\columnwidth}{!}{
  \begin{tabular}{lc@{\hspace{.9em}}c@{\hspace{.01em}}c@{\hspace{.01em}}c@{\hspace{.01em}}ccc@{\hspace{.01em}}c@{\hspace{.01em}}c@{\hspace{.01em}}cc}
    \toprule
     & & \multicolumn{4}{c}{complete} & \multicolumn{4}{c}{complete bipartite}\\
    \cmidrule(r{8pt}){3-6} \cmidrule(r{8pt}){7-10}
    Class & Density & $\in$ & Ref. & $\notin$ & Ref. & $\in$ & Ref. & $\notin$ & Ref. \\
    \midrule
    1-planar & $4n-8$ &$K_{6}$ & \cite[Fig.1]{DBLP:journals/dam/CzapH12} & $K_{7}$ & \cite[Thm.1]{PachT97} & $K_{3,6}$ & \cite[Fig.2]{DBLP:journals/dam/CzapH12} & $K_{3,7}$ & \cite[Lem.4.2]{DBLP:journals/dam/CzapH12} \\
     & & & & & & $K_{4,4}$ & \cite[Fig.3]{DBLP:journals/dam/CzapH12} & $K_{4,5}$ & \cite[Lem.4.3]{DBLP:journals/dam/CzapH12} \\
     \midrule
    2-planar & $5n-10$ & $K_{7}$ & \cite[Fig.7]{BinucciGDMPST15} & $K_{8}$ & Char.\ref{th:complete:kplanar}  & $K_{3,10}$ & \cite[Lem.1]{DBLP:journals/tcs/AngeliniBKKS18} & $K_{3,11}$ & \cite[Lem.1]{DBLP:journals/tcs/AngeliniBKKS18} \\
     & & & & & & $K_{4,6}$ & Char.\ref{th:bipartite:2planar} &  $K_{4,7}$ & Char.\ref{th:bipartite:2planar} \\
     & & & & & & \textcolor{gray}{$K_{4,5}$}  &   &  $K_{5,5}$ & Char.\ref{th:bipartite:2planar} \cite{Kehribar18} \\[0.5ex]
     \midrule
    3-planar & $\frac{11}{2}n-11$ & $K_{8}$ & Char.\ref{th:complete:kplanar} & $K_{9}$ & Char.\ref{th:complete:kplanar}  & $K_{3,14}$ & \cite[Lem.1]{DBLP:journals/tcs/AngeliniBKKS18} & $K_{3,15}$ & \cite[Lem.1]{DBLP:journals/tcs/AngeliniBKKS18} \\
     & & & & & &  $K_{4,9}$ & Char.\ref{th:bipartite:3planar} &  $K_{4,10}$ & Char.\ref{th:bipartite:3planar} \\
     & & & & & &  $K_{5,6}$ & Char.\ref{th:bipartite:3planar} &  $K_{5,7}$ & Char.\ref{th:bipartite:3planar}\\
     & & & & & & \textcolor{gray}{$K_{5,6}$} &  &  $K_{6,6}$ & Char.\ref{th:bipartite:3planar} \\
     \midrule
    4-planar & $6n-12$ & $K_{9}$ & Char.\ref{th:complete:kplanar} & $K_{10}$ & Char.\ref{th:complete:kplanar}  & $K_{3,18}$ & \cite[Lem.1]{DBLP:journals/tcs/AngeliniBKKS18} & $K_{3,19}$ & \cite[Lem.1]{DBLP:journals/tcs/AngeliniBKKS18} \\
     & & & & & &  $K_{4,11}$ & Obs.\ref{th:bipartite:4planar} &  \textcolor{gray}{$K_{4,19}$} &  \\
     & & & & & &  $K_{5,8}$ & Obs.\ref{th:bipartite:4planar} &  \textcolor{gray}{$K_{5,19}$} &  \\
     & & & & & &  $K_{6,6}$ & Obs.\ref{th:bipartite:4planar} &  \textcolor{gray}{$K_{6,19}$} &  \\
     \midrule
    fan-planar & $5n-10$ & $K_{7}$ & \cite[Fig.7]{BinucciGDMPST15} & $K_{8}$ & Char.\ref{th:complete:fanplanar}  & $K_{4,n}$ & \cite[Fig.3]{KaufmannU14} & $K_{5,5}$ & Char.\ref{th:bipartite:fanplanar} \\[0.5ex]
     \midrule
    fan-crossing & $4n-8$ & $K_{6}$ & \cite[Fig.1]{DBLP:journals/dam/CzapH12} & $K_{7}$ & \cite[Thm.1]{DBLP:journals/algorithmica/CheongHKK15}  & \textcolor{gray}{$K_{3,6}$} &  & $K_{3,7}$ & Char.\ref{th:bipartite:fcf} \\
    free & & & & & &  $K_{4,6}$ & Char.\ref{th:bipartite:fcf} &  \textcolor{gray}{$K_{4,7}$} &  \\
     & & & & & & \textcolor{gray}{$K_{4,5}$} &  &  $K_{5,5}$ & Char.\ref{th:bipartite:fcf} \\
     \midrule
    gap-planar & $5n-10$ & $K_{8}$ & \cite[Fig.7]{DBLP:journals/tcs/BaeBCEE0HKMRT18} & $K_{9}$ & \cite[Thm.23]{DBLP:journals/tcs/BaeBCEE0HKMRT18}  & $K_{3,12}$ & \cite[Fig.7]{DBLP:journals/tcs/BaeBCEE0HKMRT18} & $K_{3,14}$ & \cite[Thm.1]{DBLP:conf/gd/BachmaierRS18} \\
     & & & & & &  $K_{4,8}$ & \cite[Fig.9]{DBLP:journals/tcs/BaeBCEE0HKMRT18} & $K_{4,9}$ & Obs.\ref{th:bipartite:gapplanar} \\
     & & & & & &  $K_{5,6}$ & \cite[Fig.9]{DBLP:journals/tcs/BaeBCEE0HKMRT18} & $K_{5,7}$ & \cite{DBLP:journals/tcs/BaeBCEE0HKMRT18} \\
     & & & & & &  \textcolor{gray}{$K_{5,6}$} &  &  $K_{6,6}$ & \cite[Thm.1]{DBLP:conf/gd/BachmaierRS18}  \\
		\midrule
    RAC & $4n-10$ & $K_{5}$ & \cite[Fig.5]{EadesL13} & $K_{6}$ & \cite[Thm.1]{DidimoEL11}  & $K_{3,4}$ & \cite[Fig.4]{DidimoEL10} & $K_{3,5}$ & \cite[Thm.2]{DidimoEL10} \\
     & & & & & & \textcolor{gray}{$K_{3,4}$} & & $K_{4,4}$ & \cite[Thm.2]{DidimoEL10} \\
     \midrule
    quasiplanar & $\frac{13}{2}n-20$ & $K_{10}$ & \cite[Fig.1]{franz-quasi-planar} & $K_{11}$ & \cite[Thm.5]{DBLP:journals/jct/AckermanT07}  & $K_{4,n}$ & \cite[Fig.3]{KaufmannU14} & -- &  \\
     & & & & & &  $K_{5,18}$ & Obs.\ref{th:bipartite:quasiplanar} &  ? &  \\
     & & & & & &  $K_{6,10}$ & Obs.\ref{th:bipartite:quasiplanar} &  ? &  \\
     & & & & & &  $K_{7,7}$ & Obs.\ref{th:bipartite:quasiplanar} &  $K_{7,52}$ & \cite[Thm.5]{DBLP:journals/jct/AckermanT07}  \\
    \bottomrule
  \end{tabular}
  }
\end{table}

Another technique consists of showing that the minimum number of crossings required by \emph{any} drawing of a certain graph (as derived by, e.g., the Crossing Lemma~\cite{%
DBLP:journals/corr/Ackerman15,DBLP:books/daglib/0019107,%
PachRTT06%
} or closed formulas~\cite{%
Zarankiewicz54%
}) exceeds the maximum number of crossings allowed in the considered graph class. However, this technique only applies to classes that impose such restrictions, e.g., gap- and $1$-planar graphs~\cite{DBLP:conf/gd/BachmaierRS18,DBLP:journals/dam/CzapH12}.

This difficulty in finding combinatorial arguments to prove that certain complete (bipartite) graphs do not belong to specific classes 
often results in the need of a large case analysis on the different topological representations of the graph. Beside the proofs in~\cite{DidimoEL10,Kehribar18}, we give in \arxapp{\cite{arxiv}}{Appendix~\ref{app:bipartite:fcf}} another example~of~a combinatorial proof that, based on a tedious case analysis, yields a characterization of the complete bipartite fan-crossing free graphs. The range of the cases~in~these proofs justifies the need of a tailored approach to systematically explore them. 

\smallskip\noindent\textbf{Our contribution.} We suggest a technique to engineer the analysis of all topological representations of a graph that satisfy certain beyond-planarity constraints. Our technique does not extend to classes of geometric graphs, and is tailored for complete and complete bipartite graphs, as we exploit their symmetry to reduce the search space, by discarding equivalent topological representations.

In Section~\ref{sec:enumeration}, we present an algorithm to generate all possible representations of such graphs under different topological constraints on the crossing configurations. Our algorithm builds on two key ingredients, which allow to drastically reduce the search space. First, the representations are constructed by adding a vertex at a time, directly taking into account the topological constraints, thus avoiding constructing unnecessary representations. Second, at each intermediate step, the produced drawings are efficiently tested for equivalence (up to a relabeling of the vertices), which usually allows to discard a large set of them. Using this algorithm, we derived characterizations for several classes, as described in Section~\ref{sec:applications}; Table~\ref{table:density} positions our results with respect to the state of the art.  We give preliminary definitions in Section~\ref{sec:preliminaries} and discuss future directions in Section~\ref{sec:conclusions}.

\section{Preliminaries}
\label{sec:preliminaries}

We assume familiarity with standard definitions on planar graphs and drawings \arxapp{(see, e.g.,~\cite{arxiv})}{(see Appendix~\ref{app:preliminaries})}. We assume \emph{simple} drawings, in which there are no self-crossing edges, two edges cross at most once, and adjacent edges do not cross; note this assumption is not without loss of generality~\cite{DBLP:journals/jct/AckermanT07}. Given a planarization $\Gamma$ of a graph $G$, a \emph{half-pathway for a vertex} $u$ in $\Gamma$ is a path in the dual of $\Gamma$ from a face incident to $u$ to some face in $\Gamma$, called its \emph{destination}; see Fig.~\ref{subfig:prohibitedEdgesA}. The \emph{length} of a half-pathway is the number of edges in this path. A half-pathway for $u$ is \emph{valid} with respect to a beyond-planarity class~$\mathcal{C}$ of topological graphs, if $\Gamma$ can be augmented such that
\begin {inparaenum}[(i)]
\item a vertex $v$ is placed in its destination,
\item edge $(u,v)$ is drawn as a curve from $u$ to $v$ that crosses only the edges that are dual to the edges in this half-pathway, in the same order, and 
\item the drawing of $(u,v)$ violates neither the simplicity of the resulting drawing nor the crossing restrictions of class $\mathcal{C}$.
\end{inparaenum}
Accordingly, a \emph{pathway for an edge} $(u,v)$ is a half-pathway for vertex $u$ in $\Gamma$, whose destination is a face incident to vertex $v$. A \emph{valid pathway} is defined analogously, with the only exception that $v$ is already part~of~$\Gamma$.

Another ingredient of our algorithm is an equivalence-relationship between~different drawings of a graph $G$, i.e, drawings $D_1$ and $D_2$ of $G$ are \emph{isomorphic}~\cite{DBLP:journals/dcg/Kyncl11} if there exists a homeomorphism of the sphere transforming $D_1$ into $D_2$\arxapp{}{; see Fig.~\ref{fig:k5} in Appendix~\ref{app:preliminaries}}. Namely, $D_1$ and $D_2$ are isomorphic if $D_1$~can~be~transformed into $D_2$ by relabeling vertices, edges, and faces of $D_1$, and by moving~vertices and edges of $D_1$, so that at no time of this process new~crossings are~introduced, existing crossings are eliminated, or the order of the crossings along an edge is modified. We define a \emph{valid} bijective mapping between vertices, crossings, edges, and faces of the planarizations $\Gamma_1$ and $\Gamma_2$ of $D_1$ and $D_2$ such that:
%
\begin{inparaenum}[\bf ({P.}1)]
\item \label{p:vm2} if an edge $(v_1,w_1)$ is mapped to an edge~$(v_2,w_2)$~in $\Gamma_1$ and $\Gamma_2$, respectively, and $v_1$ is mapped to $v_2$, then $w_1$ is mapped~to~$w_2$;
\item \label{p:vm3} if a face $f_1$ is mapped to a face $f_2$ in $\Gamma_1$ and $\Gamma_2$, respectively, and an edge $e_1$ incident to $f_1$ is mapped to an edge $e_2$ incident to $f_2$, then the predecessor (successor) of $e_1$ is mapped to the predecessor (successor) of $e_2$ when walking along the boundaries of $f_1$ and $f_2$ in clockwise direction. Also, the face incident to the other side of $e_1$ is mapped to the face incident to the other side of~$e_2$.
\end{inparaenum}
%
Clearly, Properties~P.\ref{p:vm2} and~P.\ref{p:vm3} are sufficient for $D_1$ and $D_2$ to be isomorphic. We believe they are also necessary, but this is beyond the scope of this work. Note that Property~P.\ref{p:vm3} guarantees that two vertices are mapped to each other only if they have the same degree.

\begin{figure}[t]
	\centering
	\subcaptionbox{\label{subfig:prohibitedEdgesA}}{
		\includegraphics[page=4, scale=0.9]{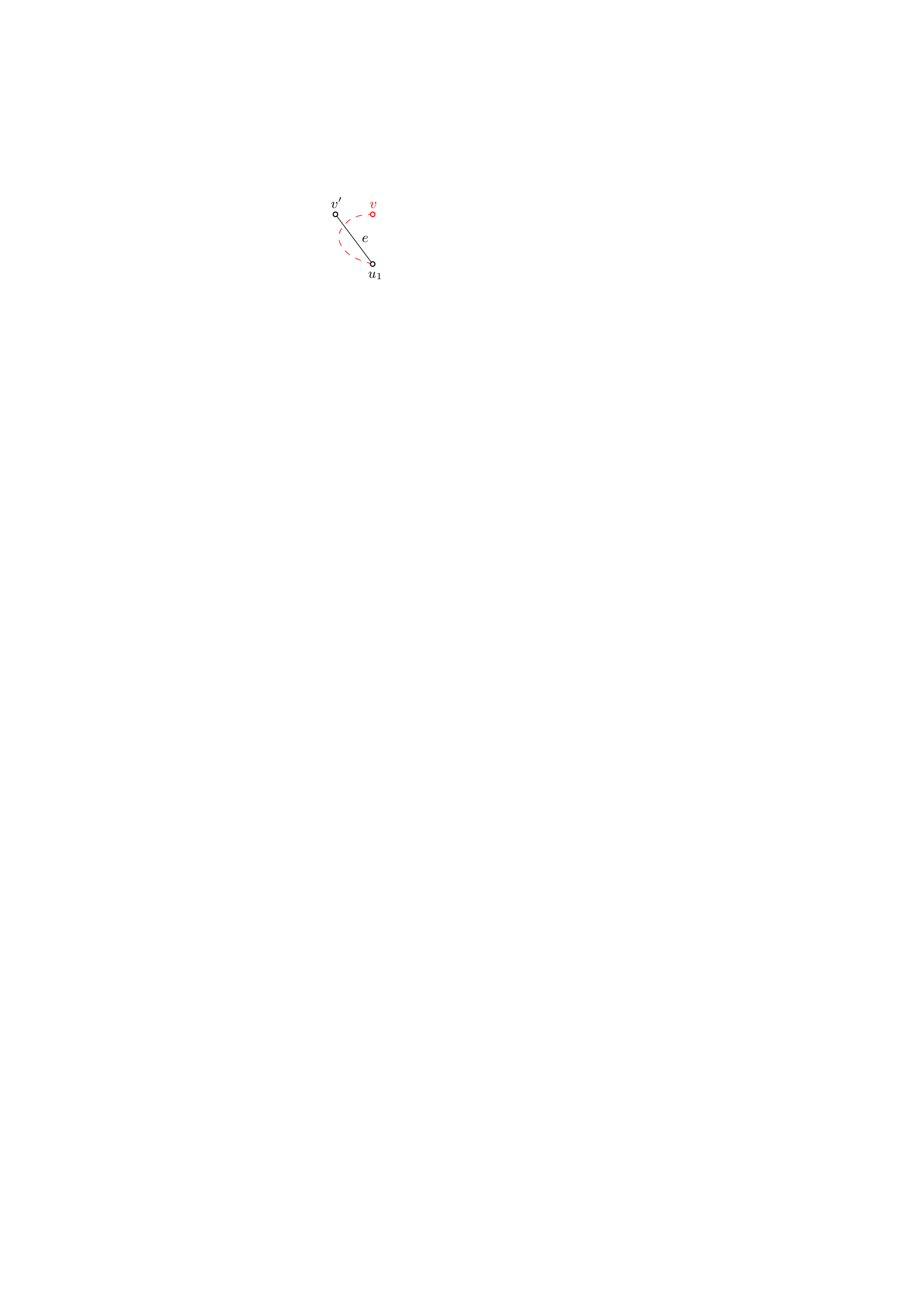}}
	\hfil
	\subcaptionbox{\label{subfig:prohibitedEdgesB}}{
		\includegraphics[page=5, scale=0.9]{figures/prohibitedEdges}}
	\hfil
	\subcaptionbox{\label{subfig:prohibitedEdgesC}}{
		\includegraphics[page=6, scale=0.9]{figures/prohibitedEdges}}
	\caption{%
		The prohibited edges (blue solid) for a half-pathway (red dashed) that ends in a face $f_p$. The thick blue edges are prohibited, because they are crossed by the half-pathway. 
		In~(a) edges $e_1$ and $e_2$ are prohibited, since they are incident to $u_1$.
		In~(b) edge $e_3$ is prohibited, since, in order to cross this edge, the half-pathway would make a self-crossing.
		In~(c) edge $e_4$ is prohibited since it is part of a crossed edge.}
	\label{fig:prohibitedEdges}
\end{figure}

Several works~\cite{Abrego15,Gronau90,Rafla88} that generate simple drawings of complete graphs adopt a weaker definition of isomorphism; two drawings $D_1$ and~$D_2$ are \emph{weakly isomorphic}~\cite{DBLP:journals/dcg/Kyncl11}, if there exists an incidence preserving bijection between their vertices and edges, such that two edges cross in $D_1$  if and only if they~do~in $D_2$. Weakly isomorphic drawings that are non-isomorphic differ~in~the~order in which their edges cross~\cite{DBLP:conf/wg/Gioan05}. Two simple drawings of a complete graph with the same cyclic order of the edges around each vertex (called \emph{rotation system}) are weakly isomorphic, and vice versa~\cite{DBLP:conf/wg/Gioan05,DBLP:journals/combinatorica/PachT06}; hence, generating all simple drawings of a complete graph reduces to finding all rotation systems that determine simple drawings~\cite{DBLP:journals/dcg/Kyncl13}. However, this property holds only for complete graphs~\cite{Abrego15}, while for the complete bipartite graphs, which are more difficult to handle, only partial results exist in this direction~\cite{DBLP:journals/jocg/CardinalF18}. Thus, we decided not to follow this approach. 

\section{Generation Procedure}
\label{sec:enumeration}

Let $\mathcal{C}$ be a beyond-planarity class of topological graphs and let $G$ be a graph with $n \geq 3$ vertices. Assuming~that $G$ is either complete or complete bipartite, we~describe in this section an algorithm to generate all non-isomorphic simple drawings of $G$ that are certificates that $G$ belongs to $\mathcal{C}$ (if any). We stress that, if $G$ is neither complete nor complete bipartite, then it is a more involved task to recognize isomorphic drawings~\cite{gj-cigtnpc-79}, and thus to eliminate them, which is a key point in the efficiency of our approach (we provide more details in Section~\ref{sec:applications}).

Our algorithm aims at computing a set $\mathcal{S}$ containing all non-isomorphic simple drawings of $G$. In the base of the recursion, graph $G$ is a cycle of length $3$ or $4$, depending on whether $G$ is the complete graph $K_3$ or the complete bipartite graph $K_{2,2}$. In the former case, set $\mathcal{S}$ only contains a planar drawing of $K_3$, while in the latter case, $\mathcal{S}$ contains a planar drawing and one with a crossing between two non-adjacent edges. This is because, in both cases, any other drawing is either isomorphic to one of these, or non-simple.

In the recursive step, we consider a vertex $v$ of $G$ and assume that we have recursively computed the set $\mathcal{S}$ for $G\setminus\{v\}$. We may assume w.l.o.g.~that $\mathcal{S}\neq\emptyset$, as otherwise $G$ would not belong to $\mathcal{C}$. Then, we consider each drawing of $\mathcal{S}$~and our goal is to report all non-isomorphic simple drawings of $G$ that have it as a subdrawing. In other words, we aim at reporting all non-isomorphic simple drawings that can be derived by all different placements of vertex $v$ and the routing of its incident edges in the drawings of $\mathcal{S}$. To this end, let $\Gamma$ be the planarization of one of the drawings in $\mathcal{S}$, and let $u_1,\ldots,u_k$ be the neighbors of $v$ in $G$, where $k=\deg(v)$. If $G$ is a complete graph, then $k=n-1$;~otherwise, $G$ is a complete bipartite graph $K_{a,b}$ with $a+b=n$, and $k=a$ or $k=b$ holds.  

We start by computing all possible valid half-pathways for $u_1$ in $\Gamma$ with~respect to $\mathcal{C}$, which corresponds to constructing all possible drawings of edge $(v,u_1)$ that respect simplicity and the restrictions of class $\mathcal{C}$. To compute these half-pathways, we again use recursion. For each half-pathway, we maintain a list of so-called \emph{prohibited} edges, which are not allowed to be crossed~when inserting edge $(u_1,v)$, as otherwise either the simplicity or the crossing restrictions of class $\mathcal{C}$ would be violated; see Fig.~\ref{fig:prohibitedEdges}\arxapp{}{, and Fig.~\ref{fig:insertionExample} in Appendix~\ref{app:example}}. This list is initialized with all edges incident to $u_1$ and is updated at every recursive~step. 

In the base of this inner recursion, we determine all valid half-pathways for $u_1$ of length zero; this means that, for each face $f$ incident to $u_1$, we create a half-pathway that starts at $f$ and has its destination also at $f$, which corresponds to placing $v$ in $f$ and drawing edge $(v,u_1)$ crossing-free. Assume now that we have computed all valid half-pathways of some length $i\ge 0$ in $\Gamma$. We show how to compute all valid half-pathways for $u_1$ of length $i+1$ (if any). Consider a half-pathway $p$ of length $i$. Let $f_p$ be its destination. Every non-prohibited edge $e$ of $f_p$ implies a new half-pathway of length $i+1$, composed of $p$ followed by the edge that is dual to $e$ in $\Gamma$. Note that this process will eventually terminate, since the length of a half-pathway is bounded by the number of edges of $\Gamma$.

For each valid half-pathway $p$ computed by the procedure above, we obtain~a new drawing by inserting $(u_1,v)$ into $\Gamma$ following $p$ and by inserting~$v$ into the destination of $p$. It remains to insert the remaining edges incident to $v$, i.e., $(v,u_2),\ldots,(v,u_k)$, into each of these drawings -- again in all possible ways. For this, we proceed mostly as above with one difference. Instead of half-pathways, we search for valid pathways for each edge $(v,u_i)$, $2\le i \le k$, i.e.,~we~only consider pathways that start in a face incident to $v$ and end in a face incident~to~$u_i$. 

If we find an edge $(v,u_i)$ for which no valid pathway exists, we declare that $\Gamma$ cannot be extended to a simple drawing of $G$ that respects the crossing restrictions of $\mathcal{C}$. Otherwise, the computed drawings of $G$ are added~to~$\mathcal{S}$, once all the drawings of $G\setminus\{v\}$ have been removed from it. To maintain our initial invariant, however, once a new drawing is to be added to $\mathcal{S}$, it will be first checked for isomorphism against all previously added drawings. If there is an isomorphic one, then the current drawing is discarded; otherwise, it is added~to~$\mathcal{S}$. 

We stress that we test isomorphism using Properties~P.\ref{p:vm2} and~P.\ref{p:vm3} of a valid bijection. Since these properties are sufficient but we do not know whether they are also necessary, set $\mathcal{S}$ might contain some isomorphic drawings. However, our experiments indicate that the vast majority of them~will~be~discarded.

\myparagraph{Testing for isomorphism.}
We describe a procedure to test whether the planarizations $\Gamma_1$ and $\Gamma_2$ of two drawings of $G$ comply with Properties~P.\ref{p:vm2} and~P.\ref{p:vm3} of a valid bijection. We start by selecting two edges $e_1=(v_1,w_1)$ and $e_2=(v_2,w_2)$ in $\Gamma_1$ and $\Gamma_2$, respectively, whose end-vertices have compatible types (i.e., $v_1$ and $v_2$ are both real vertices or both crossings, and the same holds for $w_1$ and $w_2$). We bijectively map $e_1$ to $e_2$, $v_1$ to $v_2$, and $w_1$ to $w_2$, which complies with Property~P.\ref{p:vm2}. We call this a \emph{base mapping} and try to extend it to a valid bijection.

We map to each other the face $f_1$ of $\Gamma_1$ that is ``left'' of $e_1$ (when walking along $e_1$ from $v_1$ to $w_1$) and the face $f_2$ of $\Gamma_2$ that is ``left'' of $e_2$ (when walking along $e_2$ from $v_2$ to $w_2$). If the degrees of $f_1$ and $f_2$ are different, then the base mapping cannot be extended. Otherwise, both $f_1$ and $f_2$ have degree $\delta$, and we walk simultaneously along their boundaries, starting at $e_1$ and $e_2$ respectively; in view of Property P.\ref{p:vm3}, for each $i=1,\ldots,\delta$, we bijectively map the $i$-th vertex (either real or crossing) of $f_1$ to the $i$-th vertex of $f_2$, and the $i$-th edge of $f_1$ to the $i$-th edge of $f_2$. If a crossing is mapped to a real vertex, or if the degrees of two mapped vertices are different, then the base mapping cannot be extended.

If the vertices and edges of $f_1$ and $f_2$ have been mapped successfully, we proceed by considering the two maximal connected subdrawings $\Gamma'_1$ and $\Gamma'_2$ of $\Gamma_1$ and $\Gamma_2$, respectively, such that each edge of $\Gamma'_1$ and $\Gamma'_2$ has at least one face incident to it that is already mapped. Consider an edge $e'_1$ of $\Gamma'_1$ that is incident to only one mapped face $f'_1$ (such an edge exists, as long as the base mapping has not been completely extended). Let $e'_2$ be the edge of $\Gamma'_2$ mapped to $e'_1$; note that $e'_2$ must be incident to a face $f'_2$ that is mapped to $f'_1$ and to a face that is not mapped yet. We map to each other the faces incident to $e'_1$ end $e'_2$ that are not mapped yet, and we proceed by applying the procedure described above (i.e., we walk along the boundaries of $f'_1$ and $f'_2$ simultaneously, while ensuring that the mapping remains valid). If this procedure can be performed successfully, then we have computed two subdrawings $\Gamma''_1$ and $\Gamma''_2$, such that $\Gamma'_1\subseteq\Gamma''_1$, $\Gamma'_2\subseteq\Gamma''_2$, and each edge of them has at least one face incident to it that is already mapped. Hence, we can recursively apply the aforementioned procedure to $\Gamma''_1$ and $\Gamma''_2$. 

Drawings $\Gamma_1$ and $\Gamma_2$ are isomorphic, if the base mapping can be eventually extended. If not, then we have to consider another base mapping and check whether this can be extended. Note that the case where $e_1$ is bijectively mapped to $e_2$, $v_1$ to $w_2$, and $w_1$ to $v_2$ defines a different base mapping than the one we were currently considering. If none of the base mappings can be extended, then we consider $\Gamma_1$ and $\Gamma_2$ as non-isomorphic. To reduce the number of base mappings that we have to consider, we first count the number of edges of $\Gamma_1$ and $\Gamma_2$ whose endpoints are both real vertices, both crossings, and those consisting of one real vertex and one crossing. These numbers have to be the same in $\Gamma_1$ and $\Gamma_2$. Since it is enough to consider base mappings only restricted to one of the three types of edges, we choose the type with the smallest positive number of occurrences. 
We summarize the above discussion in the following theorem.

\begin{theorem}\label{thm:algorithm}
Let $G$ be a complete (or a complete bipartite) graph and let $\mathcal{C}$ be a beyond-planarity class of topological graphs. Then, $G$ belongs to $\mathcal{C}$ if and only if, under the restrictions of class $\mathcal{C}$, our algorithm returns a valid drawing of $G$.
\end{theorem}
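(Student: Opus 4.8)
The plan is to establish the two directions of the equivalence separately, with the bulk of the work lying in a careful induction that mirrors the recursive structure of the algorithm. For the ``only if'' direction, suppose $G \in \mathcal{C}$, witnessed by some simple drawing $D$ that respects the crossing restrictions of $\mathcal{C}$. I would argue by induction on $n = |V(G)|$ that $\mathcal{S}$, as computed by the algorithm, contains a drawing isomorphic to $D$ (and hence the algorithm returns a valid drawing). The base case is immediate from the description of the recursion base: for $K_3$ and $K_{2,2}$ every simple drawing is isomorphic to one of the one or two drawings placed in $\mathcal{S}$. For the inductive step, fix the vertex $v$ that the algorithm removes, and let $D' = D \setminus \{v\}$ be the induced subdrawing on $G \setminus \{v\}$; this is a simple drawing respecting $\mathcal{C}$, so by the inductive hypothesis $\mathcal{S}$ for $G \setminus \{v\}$ contains some $D_0$ isomorphic to $D'$. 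The key claim is then that the half-pathway/pathway enumeration, starting from the planarization $\Gamma_0$ of $D_0$, produces (among others) a drawing isomorphic to $D$.

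To prove that claim I would read off from $D$ itself the combinatorial data of how $v$ and its edges sit inside $D'$: the edge $(v,u_1)$ traces a curve in $D'$ that crosses a well-defined sequence of edges in a well-defined order, which is exactly a half-pathway $p_1$ in the dual of the planarization of $D'$; transporting it across the isomorphism $D' \cong D_0$ gives a half-pathway in $\Gamma_0$. One checks this half-pathway is \emph{valid} with respect to $\mathcal{C}$: it violates neither simplicity nor the crossing restrictions precisely because the corresponding configuration already occurs in $D$, which lies in $\mathcal{C}$; and the algorithm's bookkeeping of prohibited edges exactly excludes the crossings that would break simplicity or the $\mathcal{C}$-constraints (Fig.~\ref{fig:prohibitedEdges} enumerates the three reasons an edge is prohibited, and these are the only ways a half-pathway could fail). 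Hence the enumeration of valid half-pathways for $u_1$ reaches $p_1$, so one of the drawings it constructs places $v$ where $D$ does and draws $(v,u_1)$ as in $D$. Iterating the same argument for the remaining edges $(v,u_2),\dots,(v,u_k)$ — now using valid \emph{pathways} since $v$ is already present — yields a drawing that agrees with $D$ on the placement of $v$ and the routing of all of its incident edges, and agrees with $D_0 \cong D'$ elsewhere, hence is isomorphic to $D$. Since every drawing the algorithm adds to $\mathcal{S}$ is checked against the current contents and only discarded when an isomorphic one is already present, $\mathcal{S}$ ends up containing a representative of $D$'s isomorphism class; in particular the algorithm returns a valid drawing of $G$.

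The ``if'' direction is the easy one: if the algorithm returns a valid drawing $D$ of $G$, then by construction $D$ is a simple drawing of $G$ whose every crossing configuration was checked against the restrictions of $\mathcal{C}$ at the step it was created — each edge $(v,u_i)$ is inserted only along a valid pathway, whose definition (items~(i)--(iii) in Section~\ref{sec:preliminaries}) guarantees simplicity and compliance with $\mathcal{C}$ — so $D$ witnesses $G \in \mathcal{C}$. One should also remark on termination, which is needed for the statement to be meaningful: each level of the outer recursion removes a vertex, and each inner half-pathway/pathway recursion is bounded because a (simple) half-pathway has length at most the number of edges of $\Gamma$, as already noted in the text.

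The main obstacle, and the step I would spend the most care on, is the completeness claim inside the inductive step: that transporting the combinatorial trace of $v$ from $D$ across the isomorphism $D' \cong D_0$ really does land on a half-pathway/pathway that the algorithm's enumeration visits, and that the resulting drawing is isomorphic to $D$ and \emph{not} prematurely discarded. This requires (a) checking that an isomorphism of drawings in the sense of Section~\ref{sec:preliminaries} carries faces to faces and the dual-path structure faithfully — which is where Properties~P.\ref{p:vm2} and~P.\ref{p:vm3} are used — so that ``the same'' half-pathway makes sense in $\Gamma_0$; (b) verifying that the prohibited-edge list maintained by the algorithm excludes exactly the edges whose crossing would be illegal, neither more nor less, so that a legal half-pathway of $D$ is never blocked; and (c) observing that although $\mathcal{S}$ may retain a few mutually isomorphic drawings (since P.\ref{p:vm2}--P.\ref{p:vm3} are only known to be sufficient), it never \emph{loses} an isomorphism class, which is all that the theorem needs. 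Points (a) and (b) are where a genuine, if routine, verification against the definitions is unavoidable.
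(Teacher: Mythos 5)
Your proposal is correct and follows essentially the same route as the paper, which states Theorem~\ref{thm:algorithm} as a summary of the preceding discussion: soundness because edges are only inserted along valid (half-)pathways under the prohibited-edge bookkeeping, and completeness because the vertex-by-vertex recursion enumerates all such pathways while the isomorphism test (based on the sufficient Properties~P.\ref{p:vm2}--P.\ref{p:vm3}) can only discard drawings whose class is already represented in $\mathcal{S}$. Your inductive write-up merely makes explicit what the paper leaves informal, including the key observation that $\mathcal{S}$ never loses an isomorphism class even though it may retain duplicates.
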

	
\section{Proof of Concept - Applications}
\label{sec:applications}

In this section we use the algorithm described in Section~\ref{sec:enumeration} to test whether certain complete or complete bipartite graphs belong to specific beyond-planarity graph classes. We give corresponding characterizations and discuss how our findings are positioned within the literature. Our lower bound examples are drawings that certify membership to  particular beyond-planarity graph classes, computed by an implementation (\url{https://github.com/beyond-planarity/complete-graphs}) of our algorithm; for typesetting reasons we redrew them. Our upper bounds are the smallest corresponding instances reported as negative by our algorithm.

\myparagraph{The class of k-planar graphs.}
\label{subsec:application:kPlanar}
%
We start our discussion with the case of complete graphs. As already mentioned in the introduction, the complete graph $K_n$ is $1$-planar if and only if $n \leq 6$~\cite{DBLP:journals/dam/CzapH12}.

For the case of complete $2$-planar graphs, the fact that a $2$-planar graph with $n$ vertices has at most $5n-10$ edges~\cite{PachT97} implies that $K_9$ is not a member of this class. Fig.~7 in~\cite{BinucciGDMPST15}, on the other hand, shows that $K_7$ is $2$-planar. We close this gap by showing, with our implementation, that even $K_8$ is not $2$-planar. 

For the cases of complete $3$-, $4$-, and $5$-planar graphs, the application of a similar density argument as above proves that $K_{10}$, $K_{11}$, and $K_{19}$ are not $3$-, $4$-, and $5$-planar, respectively~\cite{DBLP:journals/corr/Ackerman15,PachRTT06}. With our implementation, we could show that even $K_{9}$ is not $3$-planar, while $K_{10}$ is neither $4$- nor $5$-planar. On the other hand, our algorithm was able to construct $3$- and $4$-planar drawings of $K_{8}$ and $K_{9}$, respectively; see Figs.~\ref{fig:drawing:3planar:complete:k8} and~\ref{fig:drawing:4planar:complete:k9}. Note that a $6$-planar drawing of $K_{10}$ can be easily derived from the $4$-planar drawing of $K_9$ in Fig.~\ref{fig:drawing:4planar:complete:k9} by adding one extra vertex inside the red colored triangle. We have the following characterization. 

\begin{chr} \label{th:complete:kplanar}
For $k \in \{1,2,3,4\}$, the complete graph $K_{n}$ is $k$-planar if and only if $n \leq 5+k$. Also, $K_{n}$ is $5$-planar if and only if $n \leq 9$.
\end{chr}

Note that the $3$-planarity of $K_8$ implies that the chromatic number of \mbox{$3$-planar} graphs is lower bounded by~$8$. Analogous implications can be derived for the classes of $4$-, $5$-, and $6$-planar graphs. Another observation that came out from our experiments is that, up to isomorphism, $K_6$ has~a unique $1$-planar drawing, $K_7$ has only two $2$-planar drawings, and $K_8$ has only three $3$-planar drawings, while the number of non-isomorphic $4$-planar~drawings of $K_9$ is significantly larger, namely $35$. For more details, refer to Table~\ref{table:comparison}, and \arxapp{to~\cite{arxiv}}{to Table~\ref{table:results} in Appendix~\ref{app:numbers}}.

\begin{figure}[t]
\centering
	\subcaptionbox{\label{fig:drawing:3planar:complete:k8}}{	
	\includegraphics[page=1,scale=0.16]{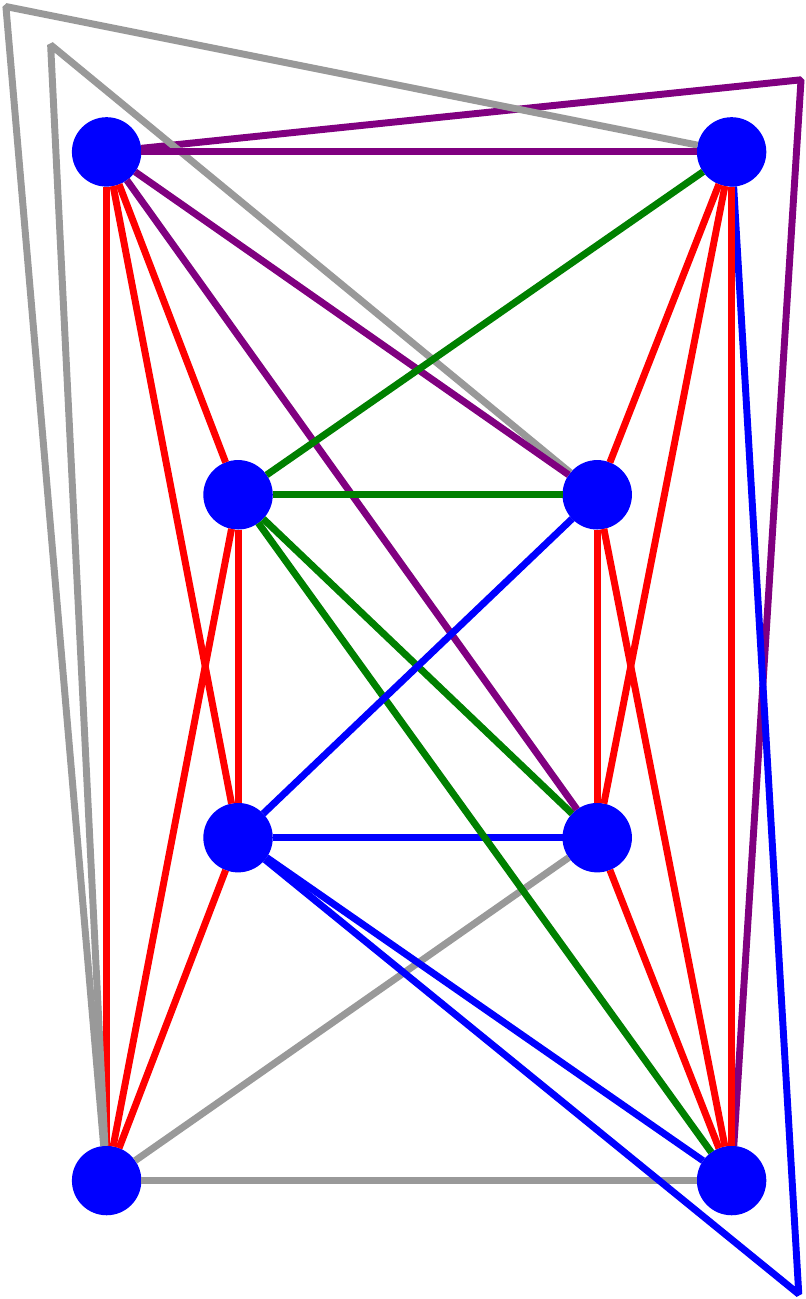}}
	\hfil
	\subcaptionbox{\label{fig:drawing:4planar:complete:k9}}{
	\includegraphics[page=1,scale=0.21]{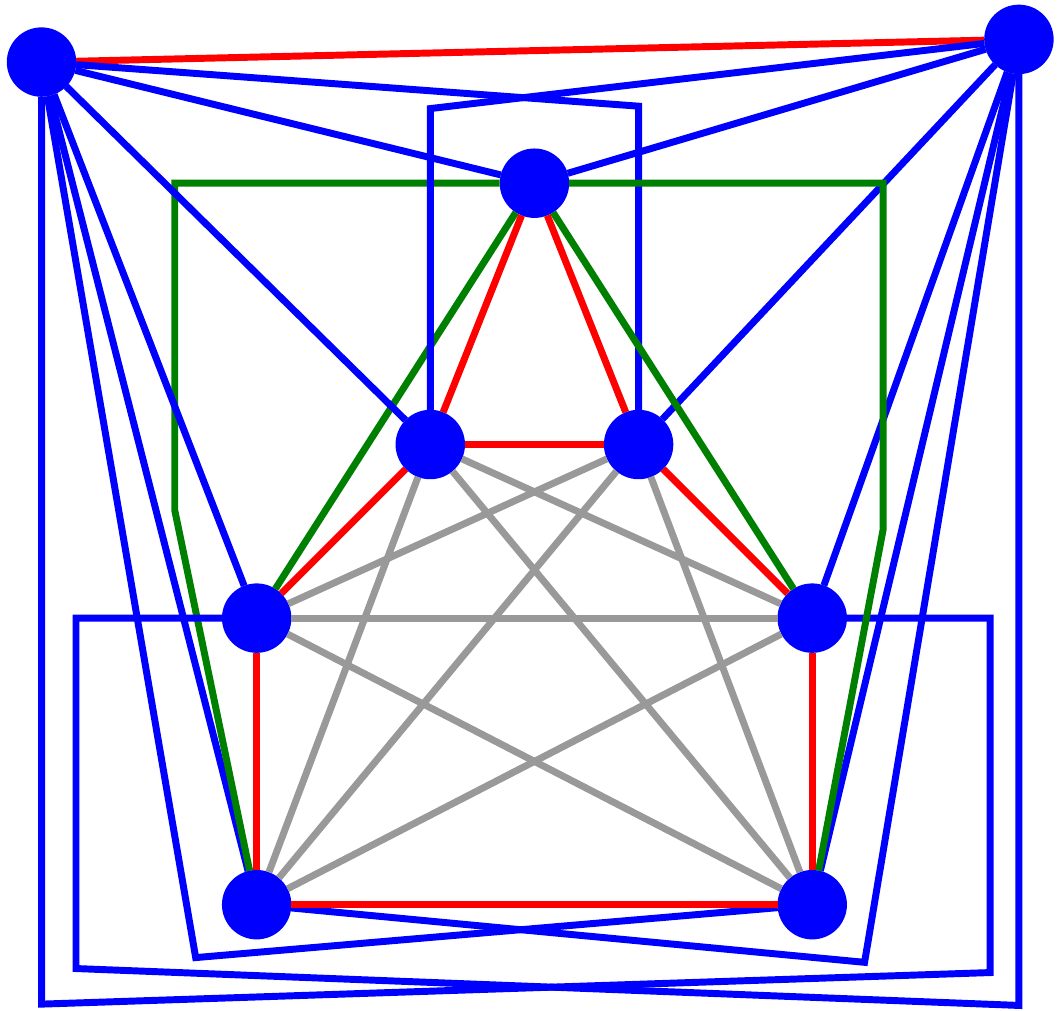}}
	\hfil
	\subcaptionbox{\label{fig:drawing:2planar}}{
	\includegraphics[scale=0.21]{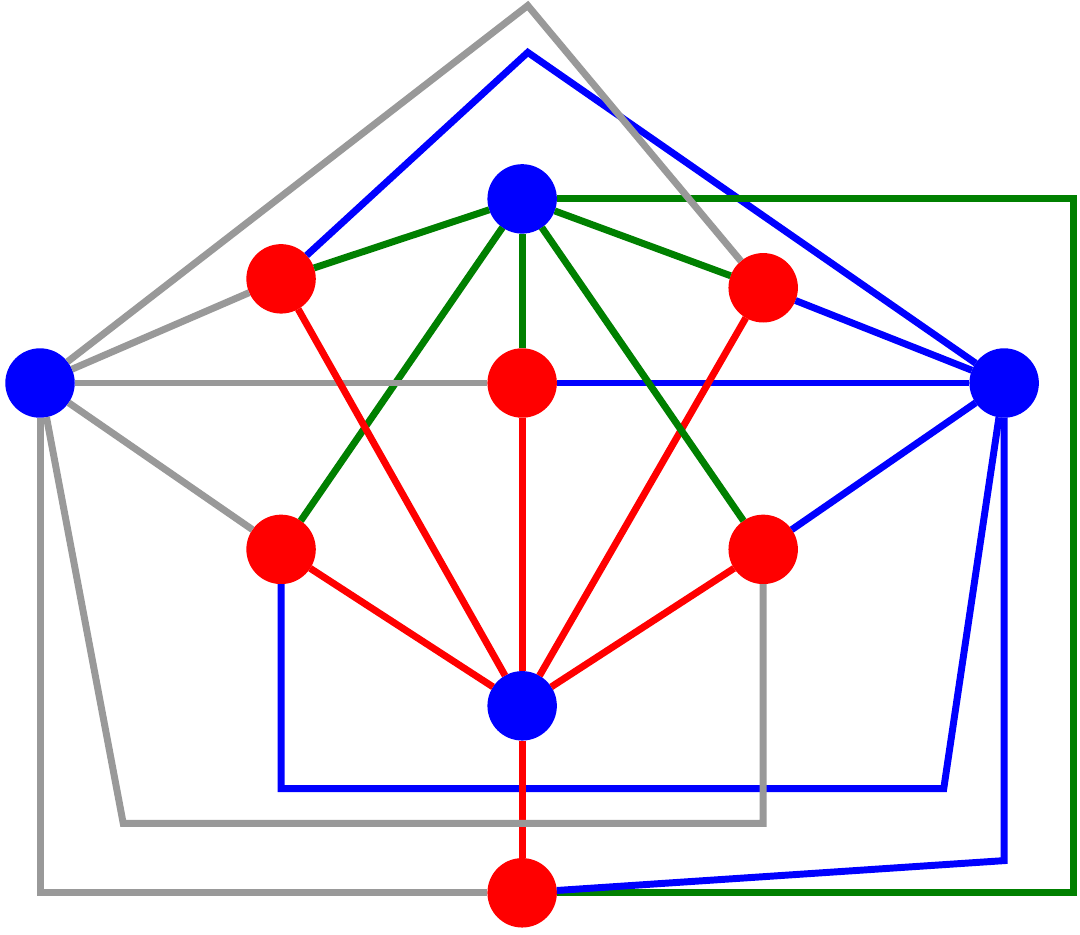}}
	\hfil
	\subcaptionbox{\label{fig:drawing:3planar:bipartite:k49}} {		
	\includegraphics[page=1,scale=0.21]{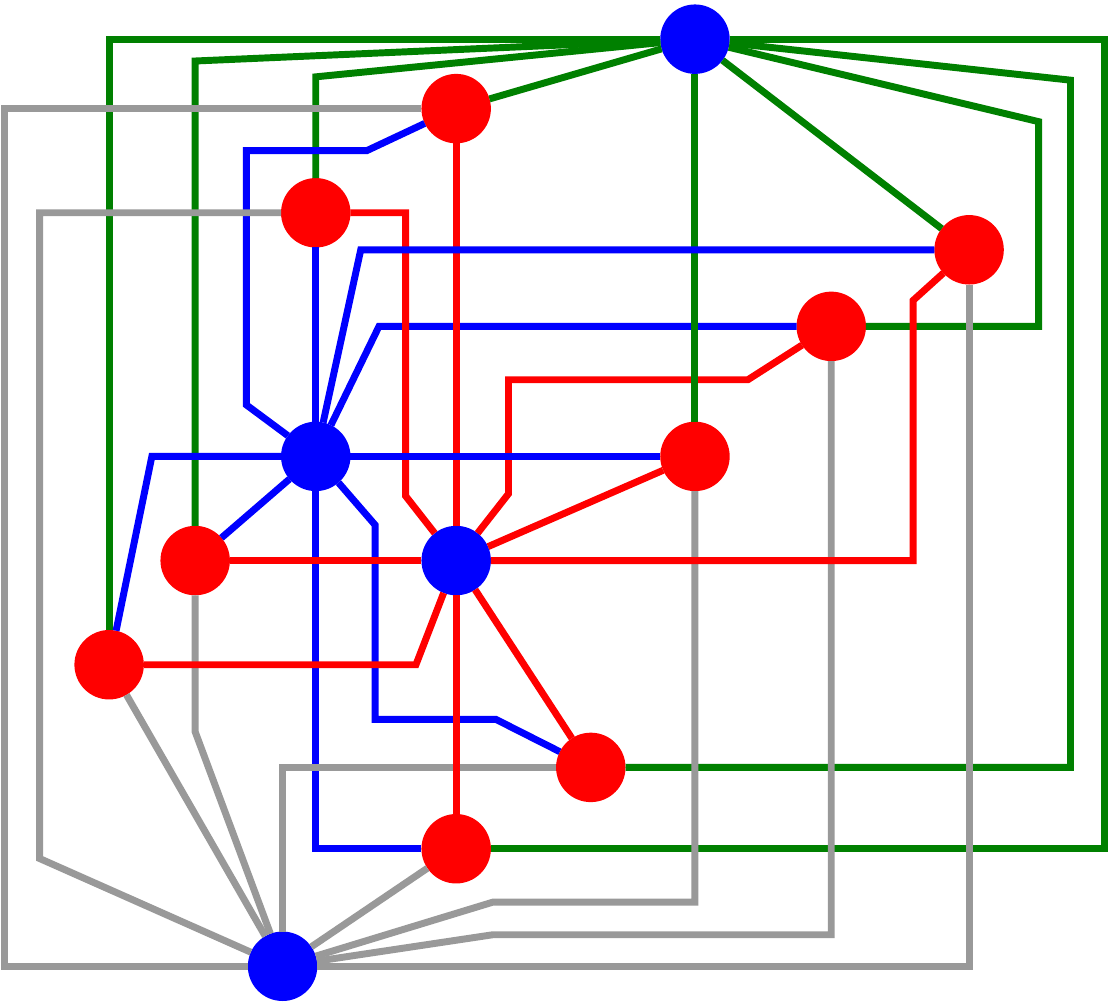}}
	\hfil
	\subcaptionbox{\label{fig:drawing:3planar:bipartite:k56}} {		
	\includegraphics[page=1,scale=0.21]{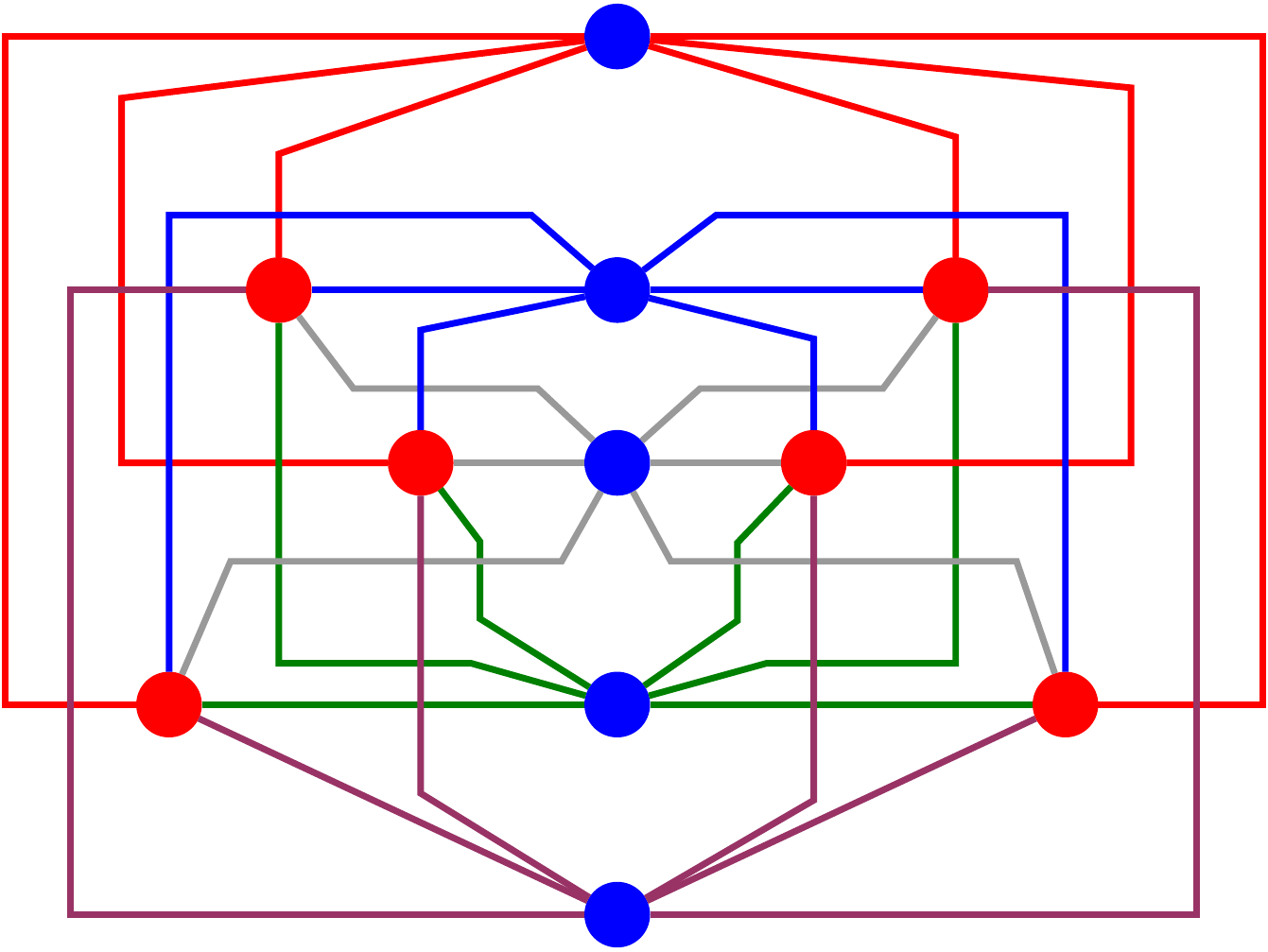}}
	\caption{%
	Illustration of
	(a)~a $3$-planar drawing of~$K_8$,  
	(b)~a $4$-planar drawing of $K_9$,
	(c)~a drawing of $K_{4,6}$ that is both 2-planar and fan-crossing free, 
	(d)~a $3$-planar drawing of $K_{4,9}$, and
	(e)~a $3$-planar drawing of $K_{5,6}$.}
	\label{fig:drawing:kplanar:complete}
\end{figure}

Consider now a complete bipartite graph $K_{a,b}$ with $a \leq b$. Note that $a \leq 2$ implies that $K_{a,b}$ is planar; thus, it trivially belongs to all beyond-planarity graph classes. Also, recall that $K_{a,b}$ is $1$-planar if and only if $a \leq 2$, or $a=3$ and $b \leq 6$, or $a=b=4$~\cite{DBLP:journals/dam/CzapH12}. Further, a recent combinatorial result states that $K_{3,b}$ is $k$-planar if and only if $b \leq 4k+2$~\cite{DBLP:journals/tcs/AngeliniBKKS18}. So, in the following we assume $a \geq 4$.

For complete bipartite $2$-planar graphs, the fact that a bipartite $2$-planar graph with $n$ vertices has at most $3.5n-7$ edges~\cite{DBLP:conf/isaac/AngeliniB0PU18} implies that neither $K_{4,15}$ nor $K_{5,8}$ is $2$-planar. With our implementation, we could show that $K_{4,7}$ and $K_{5,5}$ are not $2$-planar, while $K_{4,6}$ is (see Fig.~\ref{fig:drawing:2planar}), yielding the following characterization.
\begin{chr} \label{th:bipartite:2planar}
The complete bipartite graph $K_{a,b}$ (with $a \le b$) is 2-planar if and only if
\begin{inparaenum}[(i)]
	\item $a \le 2$, or
	\item $a = 3$ and $b \le 10$, or
	\item $a = 4$ and $b \le 6$.
\end{inparaenum}
\end{chr}

As opposed to the corresponding $2$-planar case, there exists no upper bound on the edge density of $3$-planar graphs tailored for the bipartite setting. The upper bound of $5.5n-11$ edges~\cite{PachRTT06} for general $3$-planar graphs with $n$ vertices does not provide any negative instance for $a \leq 5$, and only proves that $K_{6,b}$, with $b \ge 45$, is not $3$-planar. With our implementation, we could provide significant improvements, by showing that $K_{4,10}$, $K_{5,7}$, and $K_{6,6}$ are not $3$-planar, while $K_{4,9}$ and $K_{5,6}$ are (see Figs.~\ref{fig:drawing:3planar:bipartite:k49} and~\ref{fig:drawing:3planar:bipartite:k56}), which yields the following characterization.

\begin{chr} \label{th:bipartite:3planar}
The complete bipartite graph $K_{a,b}$ (with $a \le b$) is 3-planar if and only if
\begin{inparaenum}[(i)]
	\item $a \le 2$, or
	\item $a = 3$ and $b \le 14$, or
	\item $a = 4$ and $b \le 9$, or
	\item $a = 5$ and $b \le 6$.
\end{inparaenum}
\end{chr}

For complete bipartite $4$-planar graphs, we were unable to derive a characterization, but only some partial results, because the search space becomes drastically larger and, as a consequence, our generation technique could not terminate. To give an intuition, note that $K_{4,4}$ has 81817 non-isomorphic $4$-planar drawings, which makes the computation of the corresponding non-isomorphic drawings of $K_{4,5}$ infeasible in reasonable time;  \arxapp{for more details refer to~\cite{arxiv}}{see also Appendix~\ref{app:numbers}}. 

However, we were at least able to report some positive certificate drawings by slightly refining our generation technique. Instead of computing \emph{all} possible non-isomorphic simple drawings of graph $K_{a-1,b}$ or $K_{a,b-1}$, in order to compute the corresponding ones for $K_{a,b}$, we only computed few \emph{samples}, hoping that we will eventually find a positive certificate drawing. With this so-called \emph{DFS-like} approach, we managed to derive $4$-planar drawings for $K_{4,11}$, $K_{5,8}$, and $K_{6,6}$; see Fig.~\ref{fig:drawing:4planar}. We summarize these findings in the following observation.

\begin{figure}[t]
\centering
	\subcaptionbox{\label{fig:drawing:4planar:bipartite:k411}} {		
	\includegraphics[page=1,scale=0.2]{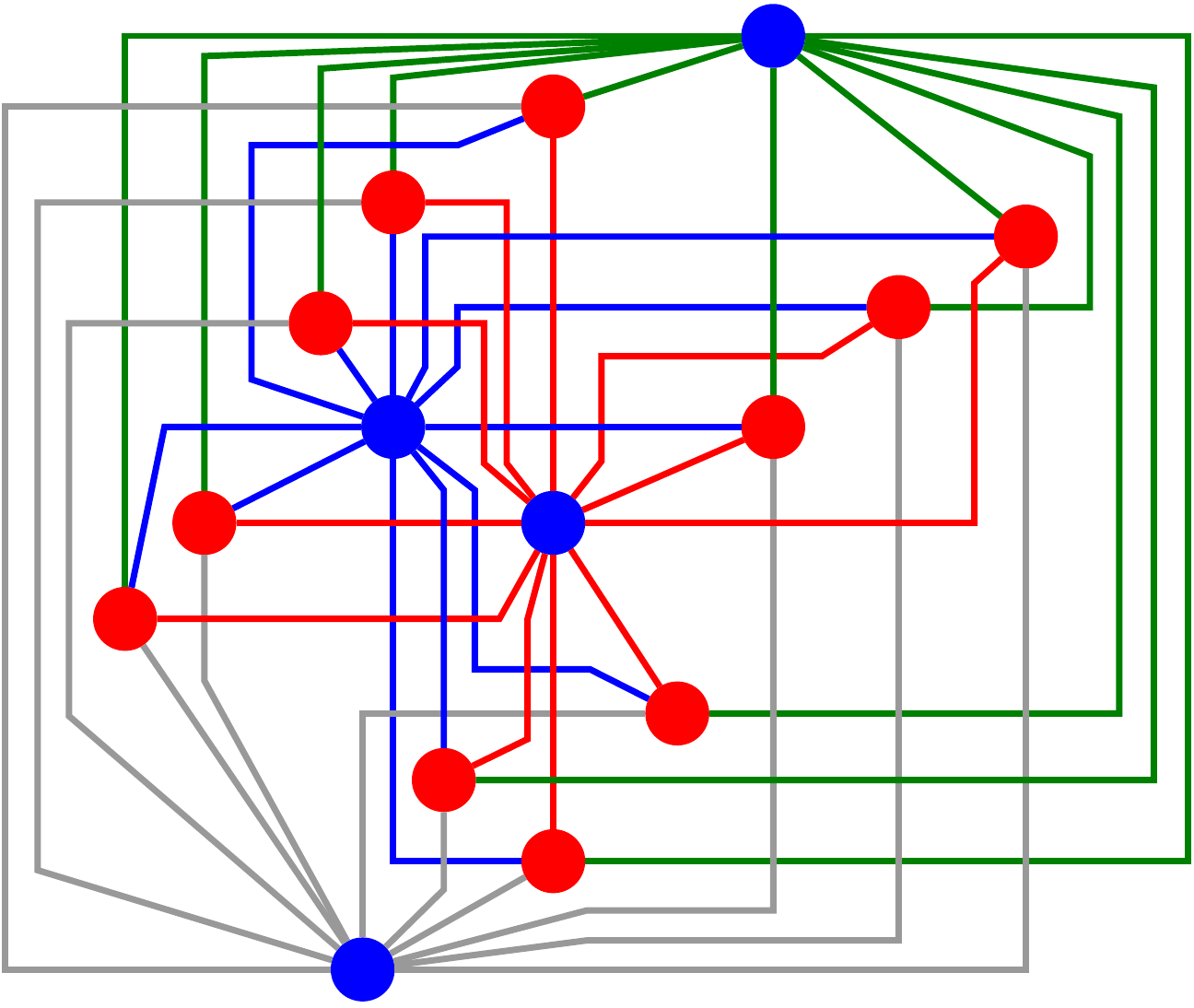}}
	\hfil
	\subcaptionbox{\label{fig:drawing:4planar:bipartite:k58}} {
	\includegraphics[page=1,scale=0.19]{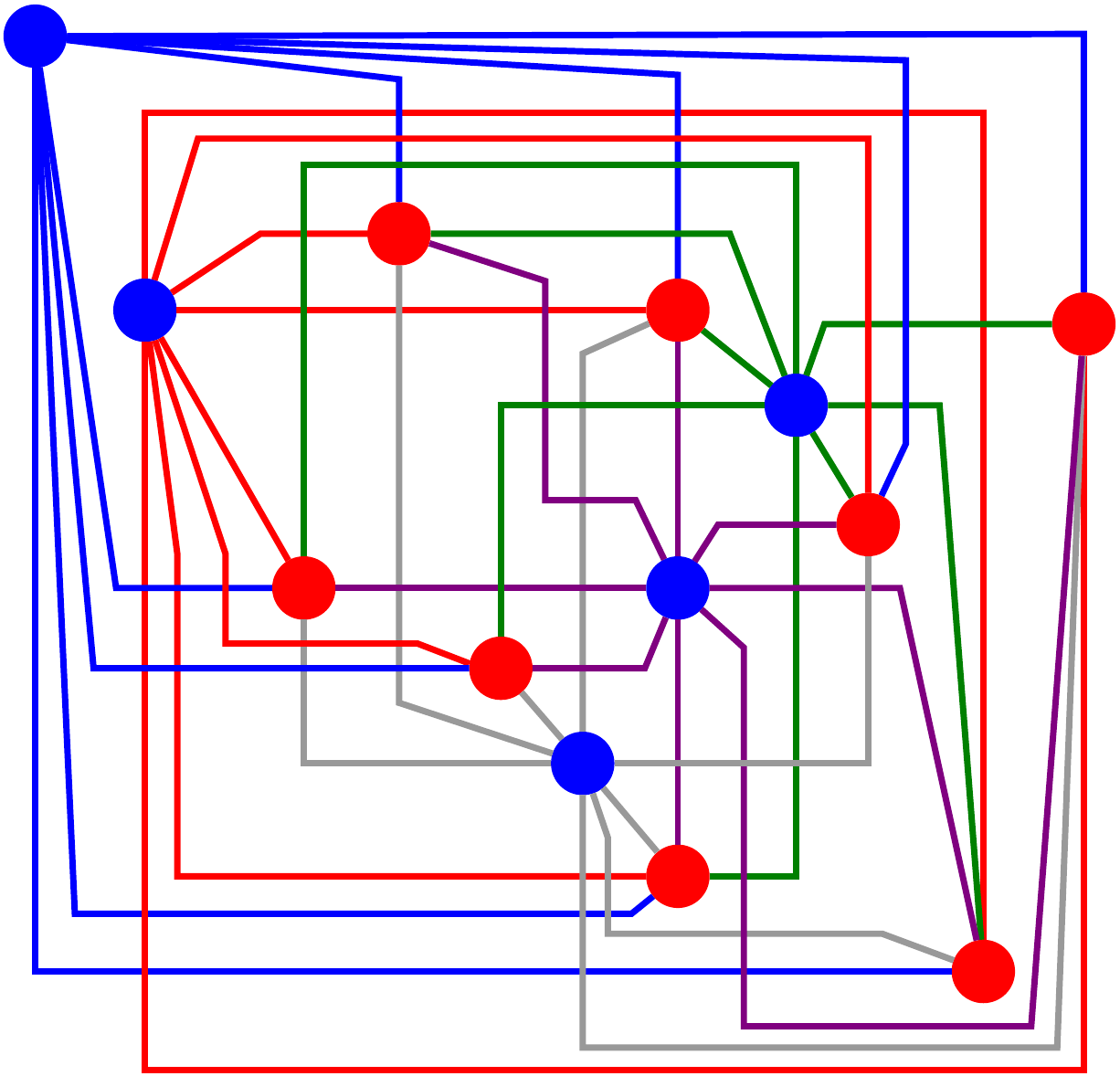}}
	\hfil
	\subcaptionbox{\label{fig:drawing:4planar:bipartite:k66}} {
	\includegraphics[page=1,scale=0.22]{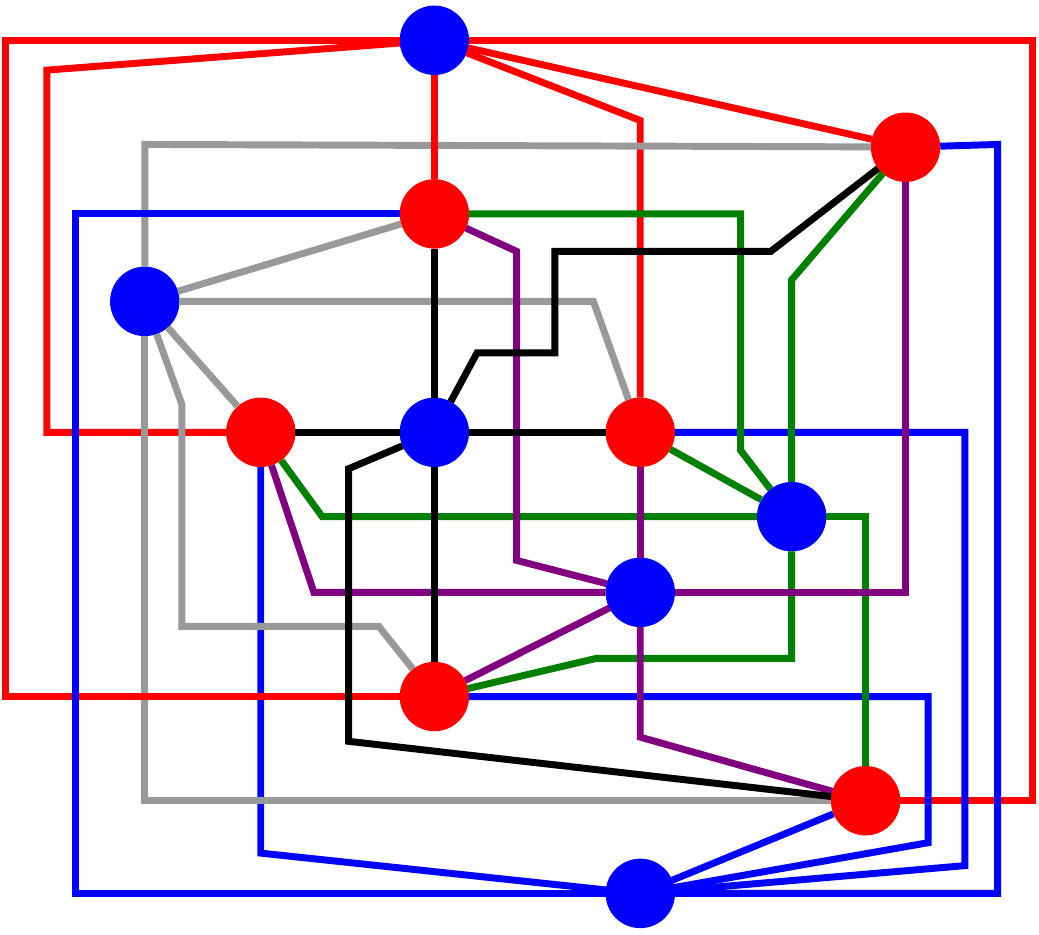}}
	\caption{%
	Illustration of $4$-planar drawings of 
	(a)~$K_{4,11}$, 
	(b)~$K_{5,8}$ and 
	(c)~$K_{6,6}$.}
	\label{fig:drawing:4planar}
\end{figure}

\begin{obs} \label{th:bipartite:4planar}
The complete bipartite graph $K_{a,b}$ (with $a \le b$) is 4-planar if
\begin{inparaenum}[(i)]
	\item $a \le 2$, or
	\item $a = 3$ and $b \le 18$, or
	\item $a = 4$ and $b \le 11$, or
	\item $a = 5$ and $b \le 8$, or
	\item $a = 6$ and $b = 6$.
\end{inparaenum}
Further, $K_{a,b}$ is not 4-planar if $a\ge3$ and $b\ge19$.
\end{obs}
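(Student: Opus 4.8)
The statement decomposes into two independent claims — a positive one (membership for the listed pairs) and a negative one (non-membership whenever $a\ge3$ and $b\ge19$) — and the plan is to reduce both to a handful of base cases by exploiting that $k$-planarity is closed under taking subgraphs: if $H\subseteq G$ and $G$ admits a $k$-planar drawing, then deleting from that drawing the vertices and edges of $G$ not in $H$ yields a $k$-planar drawing of $H$, since no edge can gain crossings. In particular $K_{a,b}\subseteq K_{a',b'}$ whenever $a\le a'$ and $b\le b'$. Hence for the positive part it suffices to produce $4$-planar drawings of the maximal graphs $K_{3,18}$, $K_{4,11}$, $K_{5,8}$, and $K_{6,6}$ (the case $a\le2$ being trivial, as $K_{2,b}$ is planar), while for the negative part it suffices to show that $K_{3,19}$ is not $4$-planar, because this graph is a subgraph of every $K_{a,b}$ with $a\ge3$ and $b\ge19$.

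For $a=3$ I would invoke the combinatorial characterization of Angelini et al.~\cite{DBLP:journals/tcs/AngeliniBKKS18}, namely that $K_{3,b}$ is $k$-planar if and only if $b\le 4k+2$; instantiating $k=4$ gives at once that $K_{3,b}$ is $4$-planar for $b\le18$ and not $4$-planar for $b\ge19$. This single external result therefore settles the entire negative part together with the $a=3$ line of the positive part. For the three remaining base cases I would exhibit the explicit certificate drawings produced by the DFS-like variant of the generation procedure of Section~\ref{sec:enumeration}, depicted in Figs.~\ref{fig:drawing:4planar:bipartite:k411}, \ref{fig:drawing:4planar:bipartite:k58}, and~\ref{fig:drawing:4planar:bipartite:k66}, and verify by direct inspection of each drawing that every edge carries at most four crossings — a finite check. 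By Theorem~\ref{thm:algorithm} these are valid membership certificates, and the subgraph-closure argument above propagates membership to all smaller pairs $(a,b)$ in each family.

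The main point to stress is that the difficulty lies not in this Observation but in what it deliberately leaves open: closing the gap between $b=11$ and $b=18$ for $a=4$ (and the analogous gaps for $a=5,6$ and for $7\le a\le b\le18$) would require running the full, non-DFS generation algorithm, whose search space already blows up for $K_{4,4}$ — which has $81817$ non-isomorphic $4$-planar drawings, rendering the extension to $K_{4,5}$ infeasible in reasonable time. For the stated bounds, however, no case analysis on topological representations is needed: the proof rests entirely on (i) the known result for $K_{3,b}$, (ii) the monotonicity of $k$-planarity under subgraphs, and (iii) the routine finite verification that the three computer-found drawings of $K_{4,11}$, $K_{5,8}$, and $K_{6,6}$ are indeed $4$-planar.
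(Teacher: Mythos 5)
Your proposal is correct and follows essentially the same route as the paper: the $a\le2$ case is trivial by planarity, the $a=3$ line and the entire negative part come from the characterization $K_{3,b}$ is $k$-planar iff $b\le 4k+2$ of~\cite{DBLP:journals/tcs/AngeliniBKKS18} (with subgraph monotonicity propagating $K_{3,19}\notin$ to all $a\ge3$, $b\ge19$), and the $a\in\{4,5,6\}$ lines rest on the DFS-computed certificate drawings of $K_{4,11}$, $K_{5,8}$, and $K_{6,6}$ in Fig.~\ref{fig:drawing:4planar}, again combined with closure under subgraphs. Your added remark that these drawings require only a finite per-edge crossing check is a fair explicit statement of what the paper leaves implicit.
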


\myparagraph{The class of fan-planar graphs.}
\label{subsec:application:fanPlanar}
%
We start our discussion with complete graphs. The fact that a  fan-planar graph with $n$ vertices has at most $5n-10$ edges~\cite{KaufmannU14} implies that $K_9$ is not fan-planar, while Fig.7 in~\cite{BinucciGDMPST15} shows that $K_7$ is. With our implementation, we showed that $K_8$ is not fan-planar, even relaxing~the requirement that an edge crossed by two or more adjacent edges must~be~crossed from the same direction; see, e.g.,~\cite{DBLP:journals/jgaa/Brandenburg18}. This yields the following~characterization.

\begin{chr} \label{th:complete:fanplanar}
The complete graph $K_{n}$ is fan-planar if and only if $n \le 7$.
\end{chr}

Consider now a complete bipartite graph $K_{a,b}$ with $a \leq b$. For $a \leq 4$, $K_{a,b}$ is fan-planar for any value of $b$~\cite{KaufmannU14}. On the other hand, the fact that a bipartite fan-planar graph has at most $4n-12$ edges~\cite{DBLP:conf/isaac/AngeliniB0PU18} implies that $K_{5,9}$ is not fan-planar. Using our implementation, we could show that even $K_{5,5}$ is not fan-planar (again by relaxing the requirement of having the crossings from the same direction). These two results together imply the following characterization. 

\begin{chr} \label{th:bipartite:fanplanar}
The complete bipartite graph $K_{a,b}$ (with $a \le b$) is fan-planar if and only if $a \le 4$.
\end{chr}

\myparagraph{The class of fan-crossing free graphs.}
\label{subsec:application:fcf}
%
A characterization for the case of complete graphs can be derived by combining two known results. First, $K_6$ is fan-crossing free, as it is $1$-planar\arxapp{. We additionally show in~\cite{arxiv}}{; see Table~\ref{table:results} in Appendix~\ref{app:numbers} additionally shows} that, up to isomorphism, $K_6$ has a unique fan-crossing free drawing. Second, the fact that a fan-crossing free graph with $n$ vertices has at most $4n-8$ edges~\cite{DBLP:journals/algorithmica/CheongHKK15} implies that $K_7$ is not fan-crossing free. Hence, we have the following characterization.

\begin{chr}[Cheong et al.~\cite{DBLP:journals/algorithmica/CheongHKK15}, Czap et al.~\cite{DBLP:journals/dam/CzapH12}] \label{th:complete:fcf}
The complete\\graph $K_{n}$ is fan-crossing free if and only if $n \le 6$.
\end{chr}

As already stated, for the complete bipartite fan-crossing free graphs, we provide  \arxapp{in~\cite{arxiv}}{in Appendix~\ref{app:bipartite:fcf}} a combinatorial proof of their characterization. The same result was also obtained by our implementation\arxapp{.}{; see Table~\ref{table:results} in Appendix~\ref{app:numbers}.}

\begin{chr} \label{th:bipartite:fcf}
The complete bipartite graph $K_{a,b}$ (with $a \le b$) is fan-crossing free if and only if
\begin{inparaenum}[(i)]
	\item $a \le 2$, or
	\item $a \le 4$ and $b \le 6$.
\end{inparaenum}
\end{chr}

\myparagraph{The class of gap-planar graphs.}
\label{subsec:application:gapPlanar}
%
A characterization of the complete gap-planar graphs has already been provided~\cite{DBLP:journals/tcs/BaeBCEE0HKMRT18} as follows.
\begin{chr}[Bae et al.~\cite{DBLP:journals/tcs/BaeBCEE0HKMRT18}] \label{th:complete:gapplanar}
The complete graph $K_{n}$ is gap-planar if and only if $n \le 8$.
\end{chr}

For the case of complete bipartite graphs, Bae et al.~\cite{DBLP:journals/tcs/BaeBCEE0HKMRT18} proved that $K_{3,12}$, $K_{4,8}$, and $K_{5,6}$ are gap-planar, while $K_{3,15}$, $K_{4,11}$, and $K_{5,7}$ are not. These negative results were derived using the technique discussed in Section~\ref{sec:introduction} that compares the crossing number of these graphs with their number of edges, which is an upper bound to the number of crossings allowed in a gap-planar drawing. By refining this technique, Bachmaier et al.~\cite{DBLP:conf/gd/BachmaierRS18} proved that even $K_{3,14}$, $K_{4,10}$, and $K_{6,6}$ are not gap-planar. Hence, towards a characterization the cases that are left open are $K_{3,13}$ and $K_{4,9}$. Here, we address one of these two open cases by showing that $K_{4,9}$ is not gap-planar, thus yielding the following observation.

\begin{obs} \label{th:bipartite:gapplanar}
The complete bipartite graph $K_{a,b}$ (with $a \le b$) is gap-planar if
\begin{inparaenum}[(i)]
	\item\label{item:gapplanar:bipartite2} $a \le 2$, or
	\item\label{item:gapplanar:bipartite3} $a = 3$ and $b \le 12$, or
	\item\label{item:gapplanar:bipartite4} $a = 4$ and $b \le 8$, or
	\item\label{item:gapplanar:bipartite5} $a = 5$ and $b \le 6$.
\end{inparaenum}
Further, $K_{a,b}$ is not gap-planar if 
\begin{inparaenum}[(i)]
	\item\label{item:nongapplanar:bipartite3} $a = 3$ and $b \ge 14$, or
	\item\label{item:nongapplanar:bipartite4} $a = 4$ and $b \ge 9$, or
	\item\label{item:nongapplanar:bipartite5} $a = 5$ and $b \ge 7$, or 
	\item\label{item:nongapplanar:bipartite6} $a \ge 6$ and $b \ge 6$.
\end{inparaenum}
\end{obs}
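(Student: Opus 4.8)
The plan is to derive the whole statement from a handful of ``anchor'' instances by monotonicity. Gap-planarity is closed under taking subgraphs: deleting a vertex (or an edge) from a gap-planar drawing, together with its gap assignment, leaves a gap-planar drawing of the subgraph; contrapositively, every supergraph of a non-gap-planar graph is non-gap-planar. Since $K_{a',b'}\subseteq K_{a,b}$ whenever $a'\le a$ and $b'\le b$, it suffices to prove the positive part only for the maximal cases $K_{3,12}$, $K_{4,8}$, and $K_{5,6}$ (plus the trivial planar case $a\le 2$), and the negative part only for the minimal cases $K_{3,14}$, $K_{4,9}$, $K_{5,7}$, and $K_{6,6}$.

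For the positive direction, if $a\le 2$ the graph is planar, hence gap-planar. For the remaining three anchors we reuse the gap-planar drawings of $K_{3,12}$, $K_{4,8}$, and $K_{5,6}$ exhibited by Bae et al.~\cite{DBLP:journals/tcs/BaeBCEE0HKMRT18} (their Figs.~7 and~9); restricting each of these drawings to the required number of vertices on the larger side yields a gap-planar drawing of every instance in cases~(\ref{item:gapplanar:bipartite3})--(\ref{item:gapplanar:bipartite5}).

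For the negative direction, three of the four minimal anchors are already in the literature: $K_{5,7}$ is not gap-planar by the crossing-number versus edge-count argument of Bae et al.~\cite{DBLP:journals/tcs/BaeBCEE0HKMRT18}, while $K_{3,14}$ and $K_{6,6}$ are not gap-planar by the refinement of Bachmaier et al.~\cite{DBLP:conf/gd/BachmaierRS18}. The one genuinely new ingredient is $K_{4,9}$, which we settle with the generation procedure of Section~\ref{sec:enumeration}: we instantiate $\mathcal{C}$ as the class of gap-planar topological graphs and run the algorithm starting from $K_{2,2}$, adding one vertex at a time and keeping at each step the set $\mathcal{S}$ of non-isomorphic simple drawings that still admit a valid gap assignment. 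By Theorem~\ref{thm:algorithm}, the algorithm outputs a drawing of $K_{4,9}$ if and only if $K_{4,9}$ is gap-planar, and it outputs none; hence $K_{4,9}$ is not gap-planar. Combining, $K_{3,b}$ with $b\ge14$, $K_{4,b}$ with $b\ge9$, $K_{5,b}$ with $b\ge7$, and $K_{a,b}$ with $a,b\ge6$ each contain one of the four minimal anchors as a subgraph and are therefore not gap-planar.

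The main difficulty lies entirely in the $K_{4,9}$ step, and it is computational rather than structural. One has to ensure that the intermediate families $\mathcal{S}$ for $K_{4,i}$ with $i\le 9$ stay small enough to enumerate exhaustively, which hinges on the isomorphism test of Section~\ref{sec:enumeration} (Properties~P.\ref{p:vm2} and~P.\ref{p:vm3}) discarding essentially all equivalent drawings at every level. A secondary point requiring care is that the gap-planarity constraint, unlike the $k$-planar or fan-crossing-free ones, is not a purely local restriction on a single crossing but a per-edge budget: when a pathway is tested for validity one must verify that the newly created crossings can be consistently assigned so that no edge---old or new---is charged twice. This is the part of the pipeline where an implementation error would be easiest to introduce, and where the correctness of the conclusion ultimately rests on Theorem~\ref{thm:algorithm}.
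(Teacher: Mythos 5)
Your proposal matches the paper's own argument: the positive cases come from the drawings of $K_{3,12}$, $K_{4,8}$, $K_{5,6}$ in Bae et al.\ (plus planarity for $a\le 2$), the negative cases from $K_{5,7}$ (Bae et al.), $K_{3,14}$ and $K_{6,6}$ (Bachmaier et al.), and the single new ingredient $K_{4,9}$, which the paper likewise settles by running the generation algorithm of Section~\ref{sec:enumeration} under the gap-planarity constraint and invoking Theorem~\ref{thm:algorithm}, with subgraph monotonicity closing the remaining cases. This is essentially the same proof, so nothing further is needed.
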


\myparagraph{The class of quasiplanar graphs.}
\label{subsec:application:quasiPlanar}
%
A characterization for the complete quasiplanar graphs can be also derived by combining two known results. Namely, the fact that a quasiplanar graph with $n$ vertices has at most $6.5n-20$ edges~\cite{DBLP:journals/jct/AckermanT07} implies that $K_{11}$ is not quasiplanar, while  $K_{10}$ is in fact quasiplanar~\cite{franz-quasi-planar}.

\begin{chr}[Ackerman et al.~\cite{DBLP:journals/jct/AckermanT07}, Brandenburg~\cite{franz-quasi-planar}] \label{th:complete:quasiplanar}
The complete graph $K_{n}$ is quasiplanar if and only if $n \le 10$.
\end{chr}

Consider now a complete bipartite graph $K_{a,b}$ with $a \leq b$. First, we observe that for $a \leq 4$, graph $K_{a,b}$ is quasiplanar for any value of $b$, since it is even fan-planar~\cite{KaufmannU14}. On the other hand, the fact that a quasiplanar graph with $n$ vertices has at most $6.5n-20$ edges~\cite{DBLP:journals/jct/AckermanT07} does not provide any negative answer for $a \leq 6$, while for $a=7$ it only implies that $K_{7,52}$ is not quasiplanar. We stress that we were not able to find any improvement on the latter result. The reason is the same as the one that we described for the class of complete bipartite $4$-planar graphs. To give an intuition, we note that $K_{4,4}$ has in total 46711 non-isomorphic quasiplanar drawings, which makes the computation of the corresponding non-isomorphic drawings of $K_{4,5}$ infeasible in reasonable time; \arxapp{refer to~\cite{arxiv} for details}{see also Appendix~\ref{app:numbers}}. Notably, using the DFS-like variant of our algorithm, we were able to derive at least positive certificate drawings for $K_{5,18}$, $K_{6,10}$, and $K_{7,7}$\arxapp{, which are given in~\cite{arxiv}}{; see Figs.~\ref{fig:drawing:quasiplanar5}, \ref{fig:drawing:quasiplanar6}, and~\ref{fig:drawing:quasiplanar7} in Appendix~\ref{app:drawings}}. We summarize these findings in the following observation.

\begin{obs} \label{th:bipartite:quasiplanar}
The complete bipartite graph $K_{a,b}$ (with $a \le b$) is quasiplanar if
\begin{inparaenum}[(i)]
	\item\label{item:quasiplanar:bipartite4} $a \le 4$, or
	\item\label{item:quasiplanar:bipartite5} $a = 5$ and $b \le 18$, or
	\item\label{item:quasiplanar:bipartite6} $a = 6$ and $b \le 10$, or
	\item\label{item:quasiplanar:bipartite7} $a = 7$ and $b \le 7$.
\end{inparaenum}
Further, $K_{a,b}$ is not quasiplanar if $a\ge7$ and $b\ge52$.
\end{obs}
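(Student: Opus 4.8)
The plan is to split Observation~\ref{th:bipartite:quasiplanar} into its positive part (membership certificates) and its negative part (non-membership), treating them by entirely different means. For the positive part, note first that item~(\ref{item:quasiplanar:bipartite4}), $a\le 4$, is not ours to prove: $K_{4,b}$ is fan-planar for every $b$ by~\cite[Fig.~3]{KaufmannU14}, and every fan-planar graph is quasiplanar (a fan-planar drawing has no three pairwise crossing edges, since any two of three mutually crossing edges would cross a common third edge in a forbidden configuration), so $K_{4,b}$ is quasiplanar for all $b$, and this subsumes all smaller cases. For items~(\ref{item:quasiplanar:bipartite5})--(\ref{item:quasiplanar:bipartite7}) it suffices, by monotonicity of the property under taking subgraphs (a subdrawing of a quasiplanar drawing is quasiplanar), to exhibit a single quasiplanar drawing of each of the three extremal graphs $K_{5,18}$, $K_{6,10}$, and $K_{7,7}$; every $K_{a,b}$ with $a=5,b\le 18$ or $a=6,b\le 10$ or $a=7,b\le 7$ is then a subgraph of one of these and inherits a quasiplanar drawing. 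I would run the DFS-like variant of the generation algorithm of Section~\ref{sec:enumeration} with $\mathcal{C}$ instantiated as the quasiplanar class: starting from a quasiplanar drawing of $K_{a-1,b}$ (or $K_{a,b-1}$), insert the new vertex and its incident edges via valid pathways that respect the ``no three mutually crossing edges'' constraint, sampling branches rather than enumerating them exhaustively until a complete drawing is produced. The three resulting drawings are the certificates displayed in Figures~\ref{fig:drawing:quasiplanar5}, \ref{fig:drawing:quasiplanar6}, and~\ref{fig:drawing:quasiplanar7}; one verifies directly from each figure that no three edges pairwise cross and that the drawing is simple, which by Theorem~\ref{thm:algorithm} certifies membership.

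For the negative part, the statement to establish is that $K_{a,b}$ is not quasiplanar whenever $a\ge 7$ and $b\ge 52$. Here I would \emph{not} use the generation algorithm at all — indeed the excerpt explicitly notes it does not terminate even on $K_{4,5}$ in the quasiplanar setting — but instead invoke the edge-density bound for quasiplanar graphs: any quasiplanar graph on $n$ vertices has at most $6.5n-20$ edges~\cite{DBLP:journals/jct/AckermanT07}. Setting $n=a+b$ and imposing $ab > 6.5(a+b)-20$ gives a region of forbidden pairs $(a,b)$; in particular, plugging $a=7$ yields $7b > 6.5(7+b) - 20 = 6.5b + 25.5$, i.e.\ $0.5b > 25.5$, i.e.\ $b > 51$, so $K_{7,b}$ is not quasiplanar for $b\ge 52$. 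For $a\ge 8$ the left-hand side $ab$ grows faster in $a$ than the right-hand side while the threshold on $b$ only decreases, so the same conclusion holds a fortiori for every $a\ge 7$ and $b\ge 52$; this is the content of~\cite[Thm.~5]{DBLP:journals/jct/AckermanT07} already cited in Table~\ref{table:density}. Thus the negative part is a one-line density computation and requires no new idea.

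The main obstacle is squarely on the positive side, and it is computational rather than conceptual: the exhaustive generation of Section~\ref{sec:enumeration} blows up because $K_{4,4}$ alone already has $46711$ non-isomorphic quasiplanar drawings, so building every non-isomorphic drawing of $K_{4,5}$ — let alone iterating up to $K_{5,18}$ or $K_{7,7}$ — is infeasible in reasonable time. The resolution is precisely the DFS-like refinement described in the paper: rather than maintaining the full set $\mathcal{S}$ of non-isomorphic drawings of the smaller graph, one keeps only a handful of samples and hopes a positive certificate survives the vertex-insertion steps up to the target. This sacrifices completeness of the search (so it can only \emph{find} drawings, never prove non-existence), but that is acceptable here because the negative side is handled independently by the density bound. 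The remaining ``gap'' — the question marks in Table~\ref{table:density} for $K_{a,b}$ with $5\le a\le 6$ and moderately large $b$ between our lower bounds and the density threshold, and for $7\le b\le 51$ when $a=7$ — is left open precisely because neither technique reaches it, and I would flag this honestly rather than attempt to close it.
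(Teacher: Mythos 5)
Your proposal is correct and follows essentially the same route as the paper: item~(\ref{item:quasiplanar:bipartite4}) from the fan-planarity of $K_{4,b}$~\cite{KaufmannU14} and the containment of fan-planar in quasiplanar graphs, items~(\ref{item:quasiplanar:bipartite5})--(\ref{item:quasiplanar:bipartite7}) from the DFS-like sampled certificate drawings of $K_{5,18}$, $K_{6,10}$, and $K_{7,7}$ together with subgraph monotonicity, and the negative part from the $6.5n-20$ density bound of~\cite{DBLP:journals/jct/AckermanT07} applied to $K_{7,52}$. The only additions are your explicit arithmetic and monotonicity remarks, which the paper leaves implicit.
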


\begin{table}[t!]
  \caption{A comparison of the number of drawings reported by our algorithm with the elimination of isomorphic drawings (col.~``Non-Iso'') and without it (col.~``All'') for the classes of $1$- and $2$-planar graphs; the corresponding execution times (in sec.) to compute these drawings are reported next to them.}
  \label{table:comparison}
  \centering
  \medskip
  \resizebox{\columnwidth}{!}{
  \begin{tabular}{lc@{\hspace{.9em}}c@{\hspace{.9em}}r@{\hspace{.9em}}r@{\hspace{.9em}}c@{\hspace{.9em}}c@{\hspace{.9em}}c@{\hspace{.9em}}r@{\hspace{.9em}}r@{\hspace{.9em}}r@{\hspace{.9em}}}
    \toprule
     & 
     \multicolumn{5}{c}{complete} & \multicolumn{5}{c}{complete bipartite}\\
    \cmidrule(r{8pt}){2-6} \cmidrule(r{8pt}){7-11}
    Class    &   Graph & Non-Iso. & Time & All & Time & Graph & Non-Iso. & Time & All & Time \\
    \midrule
    1-planar & $K_{4}$ &   2 & 0.043 &   8 & 0.043 & $K_{2,3}$ &     3 &  0.061 &    34 &  0.061 \\
             & $K_{5}$ &   1 & 0.043 &  30 & 0.206 & $K_{3,3}$ &     2 &  0.049 &    84 &  0.539 \\
             & $K_{6}$ &   1 & 0.020 & 120 & 0.737 & $K_{3,4}$ &     3 &  0.065 &   960 &  5.642 \\
             & $K_{7}$ &   0 & 0.006 &   0 & 0.448 & $K_{4,4}$ &     2 &  0.044 &  1584 & 10.871 \\
             &         &     &       &     &       & $K_{4,5}$ &     0 &  0.010 &     0 &  7.198 \\
    \midrule
             &  total: &   4 & 0.112 & 158 & 1.434 &    total: &    10 &  0.229 &  2662 &  24.311\\[0.1ex]
    \midrule \midrule
    2-planar & $K_{4}$ &    2 &  0.028 &     8 &  0.028  & $K_{2,3}$ &   6 & 0.090 &      76 & 0.090 \\
             & $K_{5}$ &    4 &  0.105 &   294 &  2.661  & $K_{3,3}$ &  19 & 0.254 &    2352 & 10.571 \\
             & $K_{6}$ &    6 &  0.233 &  2664 &  3.292  & $K_{3,4}$ &  71 & 1.458 &   52248 & 244.964 \\
             & $K_{7}$ &    2 &  0.119 &  8400 & 55.323  & $K_{4,4}$ &  38 & 1.152 &  168624 & 1128.457 \\
             & $K_{8}$ &    0 &  0.029 &     0 & 51.321  & $K_{4,5}$ &  37 & 1.826 & 1200384 & 8135.843 \\
             &         &      &        &       &         & $K_{5,5}$ &   0 & 0.357 &       0 & 12639.293 \\
    \midrule
             &  total: &   14 &  0.514 & 11366 & 112.625 &    total: & 171 & 5.137 & 1423684 &  22159.218 \\[0.1ex]
    \bottomrule
  \end{tabular}
  }
\end{table}

\section{Conclusions and Open Problems}
\label{sec:conclusions}

%
We conclude this work by noting that our results also have some theoretical implications. In particular, $K_{5,5}$ was conjectured in~\cite{DBLP:conf/isaac/AngeliniB0PU18} not to be fan-planar; Characterization~\ref{th:bipartite:fanplanar} settles in the positive this conjecture. By Characterization~\ref{th:bipartite:fanplanar} and Observation~\ref{th:bipartite:gapplanar}, we deduce that $K_{5,5}$ is a certificate that there exist graphs which are gap-planar but not fan-planar. Since $K_{4,9}$ is fan-planar but not gap-planar, the two classes are incomparable, which answers a related question posed in~\cite{DBLP:journals/tcs/BaeBCEE0HKMRT18} about the relationship between 1-gap-planar graphs and fan-planar graphs.

We stress that the elimination of isomorphic drawings is a key step in our algorithm, as shown in Table~\ref{table:comparison}. For example, to test whether $K_{5,5}$ is $2$-planar without the elimination of intermediate isomorphic drawings, one would need to investigate 1423684 drawings, while in the presence of this step only 171. This significantly reduced the required time to roughly 5 seconds, including the time to perform all isomorphism tests and eliminations. We provide further insights in \arxapp{\cite{arxiv}}{Appendix~\ref{app:numbers}}, where we broaden our description to the other classes\arxapp{}{ (see Table~\ref{table:results})}. 

Our work leaves two open problems. Is it possible to extend our approach to graphs that are neither complete nor complete bipartite, e.g., to $k$-trees or to $k$-degenerate graphs (for small values of $k$)? A major difficulty is that, in the absence of symmetry, discarding isomorphic drawings becomes more complex. A general observation from our proof of concept is that our approach was of limited applicability on the classes of complete bipartite $k$-planar graphs, for $k>3$, and complete bipartite quasiplanar graphs, for which we could report partial results. So, is it possible to broaden these results by deriving improved upper bounds on the edge densities of these classes tailored for the bipartite setting (see, e.g.,~\cite{DBLP:conf/isaac/AngeliniB0PU18}).

\bibliography{abbrv,paper}

\begin{thebibliography}{10}
\providecommand{\url}[1]{\texttt{#1}}
\providecommand{\urlprefix}{URL }
\providecommand{\doi}[1]{https://doi.org/#1}

\bibitem{Abrego15}
\'Abrego, B.M., Aichholzer, O., Fern\'andez-Merchant, S., Hackl, T., Pammer,
  J., Pilz, A., Ramos, P., Salazar, G., Vogtenhuber, B.: All good drawings of
  small complete graphs. In: {EuroCG}. pp. 57--60 (2015)

\bibitem{DBLP:journals/corr/Ackerman15}
Ackerman, E.: On topological graphs with at most four crossings per edge. CoRR
  \textbf{abs/1509.01932} (2015)

\bibitem{DBLP:journals/jct/AckermanT07}
Ackerman, E., Tardos, G.: On the maximum number of edges in quasi-planar
  graphs. J. Comb. Theory, Ser. {A}  \textbf{114}(3),  563--571 (2007).
  \doi{10.1016/j.jcta.2006.08.002}

\bibitem{AgarwalAPPS97}
Agarwal, P.K., Aronov, B., Pach, J., Pollack, R., Sharir, M.: Quasi-planar
  graphs have a linear number of edges. Combinatorica  \textbf{17}(1), ~1--9
  (1997). \doi{10.1007/BF01196127}

\bibitem{DBLP:books/daglib/0019107}
Aigner, M., Ziegler, G.M.: Proofs from {THE} {BOOK} {(3rd.} ed.). Springer
  (2004)

\bibitem{DBLP:journals/tcs/AngeliniBKKS18}
Angelini, P., Bekos, M.A., Kaufmann, M., Kindermann, P., Schneck, T.:
  1-fan-bundle-planar drawings of graphs. Theor. Comput. Sci.  \textbf{723},
  23--50 (2018). \doi{10.1016/j.tcs.2018.03.005}

\bibitem{DBLP:conf/isaac/AngeliniB0PU18}
Angelini, P., Bekos, M.A., Kaufmann, M., Pfister, M., Ueckerdt, T.:
  Beyond-planarity: Tur{\'{a}}n-type results for non-planar bipartite graphs.
  In: {ISAAC}. LIPIcs, vol.~123, pp. 28:1--28:13. Schloss Dagstuhl (2018).
  \doi{10.4230/LIPIcs.ISAAC.2018.28}

\bibitem{DBLP:journals/comgeo/ArleoBGEGLMMWW18}
Arleo, A., Binucci, C., {Di Giacomo}, E., Evans, W.S., Grilli, L., Liotta, G.,
  Meijer, H., Montecchiani, F., Whitesides, S., Wismath, S.K.: Visibility
  representations of boxes in 2.5 dimensions. Comput. Geom.  \textbf{72},
  19--33 (2018). \doi{10.1016/j.comgeo.2018.02.007}

\bibitem{avital-66}
Avital, S., Hanani, H.: Graphs. Gilyonot Lematematika  \textbf{3}, ~2--8 (1966)

\bibitem{DBLP:conf/gd/BachmaierRS18}
Bachmaier, C., Rutter, I., Stumpf, P.: 1-gap planarity of complete bipartite
  graphs. In: Biedl, T.C., Kerren, A. (eds.) {Graph Drawing and Network
  Visualization}. LNCS, vol. 11282, pp. 646--648. Springer (2018)

\bibitem{DBLP:journals/tcs/BaeBCEE0HKMRT18}
Bae, S.W., Baffier, J., Chun, J., Eades, P., Eickmeyer, K., Grilli, L., Hong,
  S., Korman, M., Montecchiani, F., Rutter, I., T{\'{o}}th, C.D.: Gap-planar
  graphs. Theor. Comput. Sci.  \textbf{745},  36--52 (2018).
  \doi{10.1016/j.tcs.2018.05.029}

\bibitem{BekosCGHK14}
Bekos, M.A., Cornelsen, S., Grilli, L., Hong, S., Kaufmann, M.: On the
  recognition of fan-planar and maximal outer-fan-planar graphs. Algorithmica
  \textbf{79}(2),  401--427 (2017)

\bibitem{BinucciGDMPST15}
Binucci, C., {Di Giacomo}, E., Didimo, W., Montecchiani, F., Patrignani, M.,
  Symvonis, A., Tollis, I.G.: Fan-planarity: Properties and complexity. Theor.
  Comp. Sci.  \textbf{589},  76--86 (2015)

\bibitem{franz-quasi-planar}
Brandenburg, F.J.: A simple quasi-planar drawing of {$K_{10}$}. In: Graph
  Drawing. LNCS, vol.~9801, pp. 603--604. Springer (2016)

\bibitem{DBLP:journals/jgaa/Brandenburg18}
Brandenburg, F.J.: A first order logic definition of beyond-planar graphs. J.
  Graph Algorithms Appl.  \textbf{22}(1),  51--66 (2018)

\bibitem{DBLP:journals/ipl/Brandenburg18}
Brandenburg, F.J.: On fan-crossing and fan-crossing free graphs. Inf. Process.
  Lett.  \textbf{138},  67--71 (2018). \doi{10.1016/j.ipl.2018.06.006}

\bibitem{DBLP:journals/jgaa/BruckdorferCGKM17}
Bruckdorfer, T., Cornelsen, S., Gutwenger, C., Kaufmann, M., Montecchiani, F.,
  N{\"{o}}llenburg, M., Wolff, A.: Progress on partial edge drawings. J. Graph
  Algorithms Appl.  \textbf{21}(4),  757--786 (2017). \doi{10.7155/jgaa.00438}

\bibitem{DBLP:journals/jocg/CardinalF18}
Cardinal, J., Felsner, S.: Topological drawings of complete bipartite graphs.
  JoCG  \textbf{9}(1),  213--246 (2018)

\bibitem{DBLP:journals/algorithmica/CheongHKK15}
Cheong, O., Har{-}Peled, S., Kim, H., Kim, H.: On the number of edges of
  fan-crossing free graphs. Algorithmica  \textbf{73}(4),  673--695 (2015).
  \doi{10.1007/s00453-014-9935-z}

\bibitem{DBLP:journals/dam/CzapH12}
Czap, J., Hud{\'{a}}k, D.: 1-planarity of complete multipartite graphs. Disc.
  App. Math.  \textbf{160}(4-5),  505--512 (2012).
  \doi{10.1016/j.dam.2011.11.014}

\bibitem{DidimoEL10}
Didimo, W., Eades, P., Liotta, G.: A characterization of complete bipartite
  {RAC} graphs. Inf. Process. Lett.  \textbf{110}(16),  687--691 (2010)

\bibitem{DidimoEL11}
Didimo, W., Eades, P., Liotta, G.: Drawing graphs with right angle crossings.
  Theor. Comp. Sci.  \textbf{412}(39),  5156--5166 (2011)

\bibitem{Didimo2013}
Didimo, W., Liotta, G.: The crossing-angle resolution in graph drawing. In:
  Thirty Essays on Geometric Graph Theory, pp. 167--184. Springer (2013)

\bibitem{DBLP:journals/corr/abs-1804-07257}
Didimo, W., Liotta, G., Montecchiani, F.: A survey on graph drawing beyond
  planarity. ACM Comput. Surv.  \textbf{52}(1),  4:1--4:37 (Feb 2019)

\bibitem{EadesL13}
Eades, P., Liotta, G.: Right angle crossing graphs and 1-planarity. Disc. Appl.
  Math.  \textbf{161}(7--8),  961--969 (2013)

\bibitem{DBLP:journals/algorithmica/EppsteinKKLLMMV18}
Eppstein, D., Kindermann, P., Kobourov, S.G., Liotta, G., Lubiw, A., Maignan,
  A., Mondal, D., Vosoughpour, H., Whitesides, S., Wismath, S.K.: On the planar
  split thickness of graphs. Algorithmica  \textbf{80}(3),  977--994 (2018).
  \doi{10.1007/s00453-017-0328-y}

\bibitem{gj-cigtnpc-79}
Garey, M., Johnson, D.S.: Computers and Intractability: A Guide to the Theory
  of {NP}-Completeness. W. H. Freeman \& Co., New York, NY, USA (1979)

\bibitem{DBLP:conf/wg/Gioan05}
Gioan, E.: Complete graph drawings up to triangle mutations. In: {WG}. LNCS,
  vol.~3787, pp. 139--150. Springer (2005). \doi{10.1007/11604686\_13}

\bibitem{Gronau90}
Gronau, H.D.O., Harborth, H.: Numbers of nonisomorphic drawings for small
  graphs. Congressus Numerantium  \textbf{71},  105--114 (1990)

\bibitem{Hadwiger43}
Hadwiger, H.: {\"Uber} eine {Klassifikation} der {Streckenkomplexe}.
  Vierteljschr. {Naturforsch}. {Ges}. {Z\"urich}  \textbf{88},  133–--143
  (1943)

\bibitem{DBLP:journals/gc/HartsfieldJR85}
Hartsfield, N., Jackson, B., Ringel, G.: The splitting number of the complete
  graph. Graphs and Combinatorics  \textbf{1}(1),  311--329 (1985).
  \doi{10.1007/BF02582960}

\bibitem{HuangHE08}
Huang, W., Hong, S., Eades, P.: Effects of crossing angles. In: PacificVis
  2008. pp. 41--46. {IEEE} (2008)

\bibitem{KaufmannU14}
Kaufmann, M., Ueckerdt, T.: The density of fan-planar graphs. CoRR
  \textbf{1403.6184} (2014)

\bibitem{Kehribar18}
Kehribar, Z.: {$K_{5,5}$} kann nicht $2$-planar gezeichnet werden: {A}nalyse
  und {B}eweis (2018), {B}achelor {T}hesis, {U}niversit\"at {T}\"ubingen

\bibitem{DBLP:journals/dcg/Kyncl11}
Kyn\v{c}l, J.: Simple realizability of complete abstract topological graphs in
  {P}. Disc. {\&} Comp. Geom.  \textbf{45}(3),  383--399 (2011).
  \doi{10.1007/s00454-010-9320-x}

\bibitem{DBLP:journals/dcg/Kyncl13}
Kyn\v{c}l, J.: Improved enumeration of simple topological graphs. Disc. {\&}
  Comp. Geom.  \textbf{50}(3),  727--770 (2013).
  \doi{10.1007/s00454-013-9535-8}

\bibitem{DBLP:journals/siamjo/Mutzel01}
Mutzel, P.: An alternative method to crossing minimization on hierarchical
  graphs. {SIAM} Journal on Optimization  \textbf{11}(4),  1065--1080 (2001).
  \doi{10.1137/S1052623498334013}

\bibitem{PachRTT06}
Pach, J., Radoi\v{c}i{\'{c}}, R., Tardos, G., T{\'{o}}th, G.: Improving the
  crossing lemma by finding more crossings in sparse graphs. Disc. Comput.
  Geom.  \textbf{36}(4),  527--552 (2006)

\bibitem{PachT97}
Pach, J., T{\'o}th, G.: Graphs drawn with few crossings per edge. Combinatorica
   \textbf{17}(3),  427--439 (1997)

\bibitem{DBLP:journals/combinatorica/PachT06}
Pach, J., T{\'{o}}th, G.: How many ways can one draw a graph? Combinatorica
  \textbf{26}(5),  559--576 (2006). \doi{10.1007/s00493-006-0032-z}

\bibitem{Rafla88}
Rafla, N.H.: The Good Drawings {$D_n$} of the Complete Graph {$K_n$}. Ph.D.
  thesis, McGill. University, Montreal, Quebec (1988)

\bibitem{MR0187232}
Ringel, G.: Ein {S}echsfarbenproblem auf der {K}ugel. Abh. Math. Sem. Univ.
  Hamb.  \textbf{29},  107--117 (1965)

\bibitem{Zarankiewicz54}
Zarankiewicz, K.: On a problem of {P}. {T}ur\'{a}n concerning graphs.
  Fundamenta Mathematicae  \textbf{41},  137--145 (1954)

\end{thebibliography}
\bibliographystyle{splncs04}

\arxapp{}{\clearpage
\appendix

\section*{\Large Appendix}

\section{Preliminary Notions and Definitions}\label{app:preliminaries}

In this paper, we consider graphs containing neither multi-edges nor self-loops. Let $G=(V,E)$ be a graph. A \emph{drawing} of $G$ is a topological representation of $G$ in the plane $\mathbb R^2$ such that each vertex $v \in V$ is mapped to a distinct point $p_v$ of the plane, and each edge $(u,v)\in E$ is drawn as a simple Jordan curve connecting $p_u$ and $p_v$ without passing through any other vertex. Unless otherwise specified, we consider \emph{simple} drawings, in which any two edges intersect in at most one point, which is either a common endpoint or a proper crossing. Hence, two edges are not allowed to cross twice (or more times) and no two edges incident to the same vertex are allowed to cross. 

A drawing without edge crossings is called \emph{planar}. Accordingly, a graph that admits a planar drawing is called \emph{planar}. The \emph{planarization} of a (non-planar) drawing is the planar drawing obtained by replacing each of its crossings with a dummy vertex. The dummy vertices are referred to as \emph{crossing vertices}, while the remaining ones (that is, the ones of the original drawing) as \emph{real vertices}.  A planar drawing divides the plane into connected regions, called \emph{faces}; the unbounded one is called \emph{outer face}. The \emph{degree} of a face is defined as the number of edges on its boundary, counted with multiplicity. The \emph{dual} of a planar drawing $\Gamma$ has a node for each face of $\Gamma$ and an arc between two nodes if the corresponding faces of $\Gamma$ share an edge. Finally, two drawings $\Gamma_1$ and $\Gamma_2$ of a graph $G$ are \emph{isomorphic}~\cite{DBLP:journals/dcg/Kyncl11}, if there exists a homeomorphism of the sphere that transforms $\Gamma_1$ into $\Gamma_2$ (see Fig.~\ref{fig:k5} for an illustration), and \emph{weakly isomorphic}~\cite{DBLP:journals/dcg/Kyncl11} if there exists an incidence preserving bijection between their vertices and their edges, such that two edges of $\Gamma_1$ cross if and only if the corresponding edges of~$\Gamma_2$~cross.

\begin{figure}[h]
	\centering
	\subcaptionbox{\label{fig:k5-1}}{\includegraphics[scale=0.5,page=1]{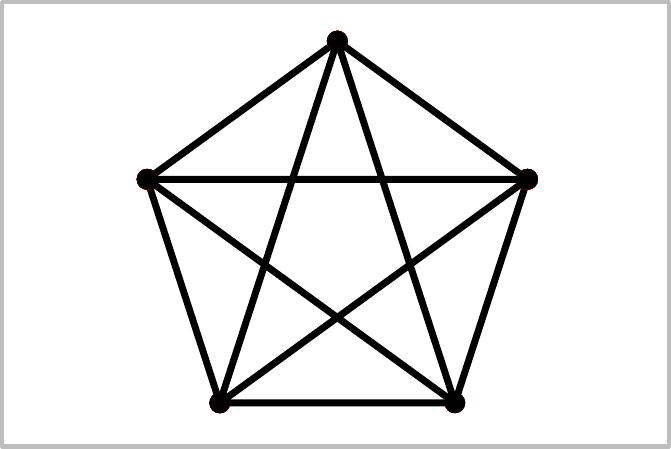}}
	~
	\subcaptionbox{\label{fig:k5-2}}{\includegraphics[scale=0.5,page=2]{k5}}
	~
	\subcaptionbox{\label{fig:k5-3}}{\includegraphics[scale=0.5,page=3]{k5}}
	\caption{Different drawings of $K_5$: 
	The drawing of (a) is isomorphic neither to the one of (b) nor to the one of (c), 
	while the drawings of (b) and~(c) are in fact isomorphic; 
	the colors of the vertices and the gray labels show the vertex and facial correspondences.}
\label{fig:k5}
\end{figure}

\clearpage

\section{A Running Example}\label{app:example}

In Fig.~\ref{fig:insertionExample}, we give an example for the insertion of a node $v$ into a crossing-free 4-cycle, such that $v$ is connected to two vertices $u_1$ and $u_2$. The dashed red edge is the newly inserted edge; the blue edges are prohibited; the turquoise edges are the edges that are marked as prohibited while computing the half-pathway of the red edge. Figs~\ref{subfig:exampleA}--\ref{subfig:exampleJ} illustrate all possible ways for drawing edge $(v,u_1)$. Figs~\ref{subfig:exampleK}--\ref{subfig:exampleO} illustrate all possible ways for inserting edge $(v,u_2)$ into the drawing of Fig.~\ref{subfig:exampleA}. Note that among the drawings that contain the edge $(v,u_2)$ the drawings of Figs.~\ref{subfig:exampleL} and~\ref{subfig:exampleN} are isomorphic, and the same holds for the drawings of Figs.~\ref{subfig:exampleM} and~\ref{subfig:exampleO}. Also, all obtained drawings are legal for the topological graph classes defined in the introduction, except for the class of $1$-planar graphs.

\begin{figure}[h!]
\centering
	\subcaptionbox{\label{subfig:exampleA}} {
	\includegraphics[page=1,scale=0.87]{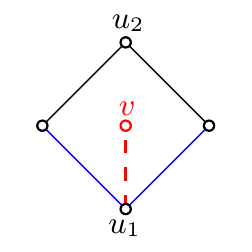}}
	\hfil
	\subcaptionbox{\label{subfig:exampleB}} {
	\includegraphics[page=2,scale=0.87]{figures/example}}
	\hfil
	\subcaptionbox{\label{subfig:exampleC}} {
	\includegraphics[page=3,scale=0.87]{figures/example}}
	\hfil
	\subcaptionbox{\label{subfig:exampleD}} {
	\includegraphics[page=4,scale=0.87]{figures/example}}
	\hfil
	\subcaptionbox{\label{subfig:exampleE}} {
	\includegraphics[page=5,scale=0.87]{figures/example}}
	\\
	\subcaptionbox{\label{subfig:exampleF}} {
	\includegraphics[page=6,scale=0.87]{figures/example}}
	\hfil
	\subcaptionbox{\label{subfig:exampleG}} {
	\includegraphics[page=7,scale=0.87]{figures/example}}
	\hfil
	\subcaptionbox{\label{subfig:exampleH}} {
	\includegraphics[page=8,scale=0.87]{figures/example}}
	\hfil
	\subcaptionbox{\label{subfig:exampleI}} {
	\includegraphics[page=9,scale=0.87]{figures/example}}
	\hfil
	\subcaptionbox{\label{subfig:exampleJ}} {
	\includegraphics[page=10,scale=0.87]{figures/example}}
	\\
	\subcaptionbox{\label{subfig:exampleK}} {
	\includegraphics[page=11,scale=0.87]{figures/example}}
	\hfil
	\subcaptionbox{\label{subfig:exampleL}} {
	\includegraphics[page=12,scale=0.87]{figures/example}}
	\hfil
	\subcaptionbox{\label{subfig:exampleM}} {
	\includegraphics[page=13,scale=0.87]{figures/example}}
	\hfil
	\subcaptionbox{\label{subfig:exampleN}} {
	\includegraphics[page=14,scale=0.87]{figures/example}}
	\hfil
	\subcaptionbox{\label{subfig:exampleO}} {
	\includegraphics[page=15,scale=0.87]{figures/example}}
	\caption{Illustration of a running example.}
	\label{fig:insertionExample}
\end{figure}

\section{Omitted Drawings from Section~\ref*{sec:applications}}\label{app:drawings}

In this section, we provide drawings certifying that certain complete and complete bipartite graphs belongs to specific beyond-planarity graph classes, which were omitted from Section~\ref{sec:applications} due to space constraints; refer to Figs.~\ref{fig:drawing:quasiplanar5} and~\ref{fig:drawing:quasiplanar67}.


\begin{figure}[h!]
\centering
	\includegraphics[page=1,scale=0.25]{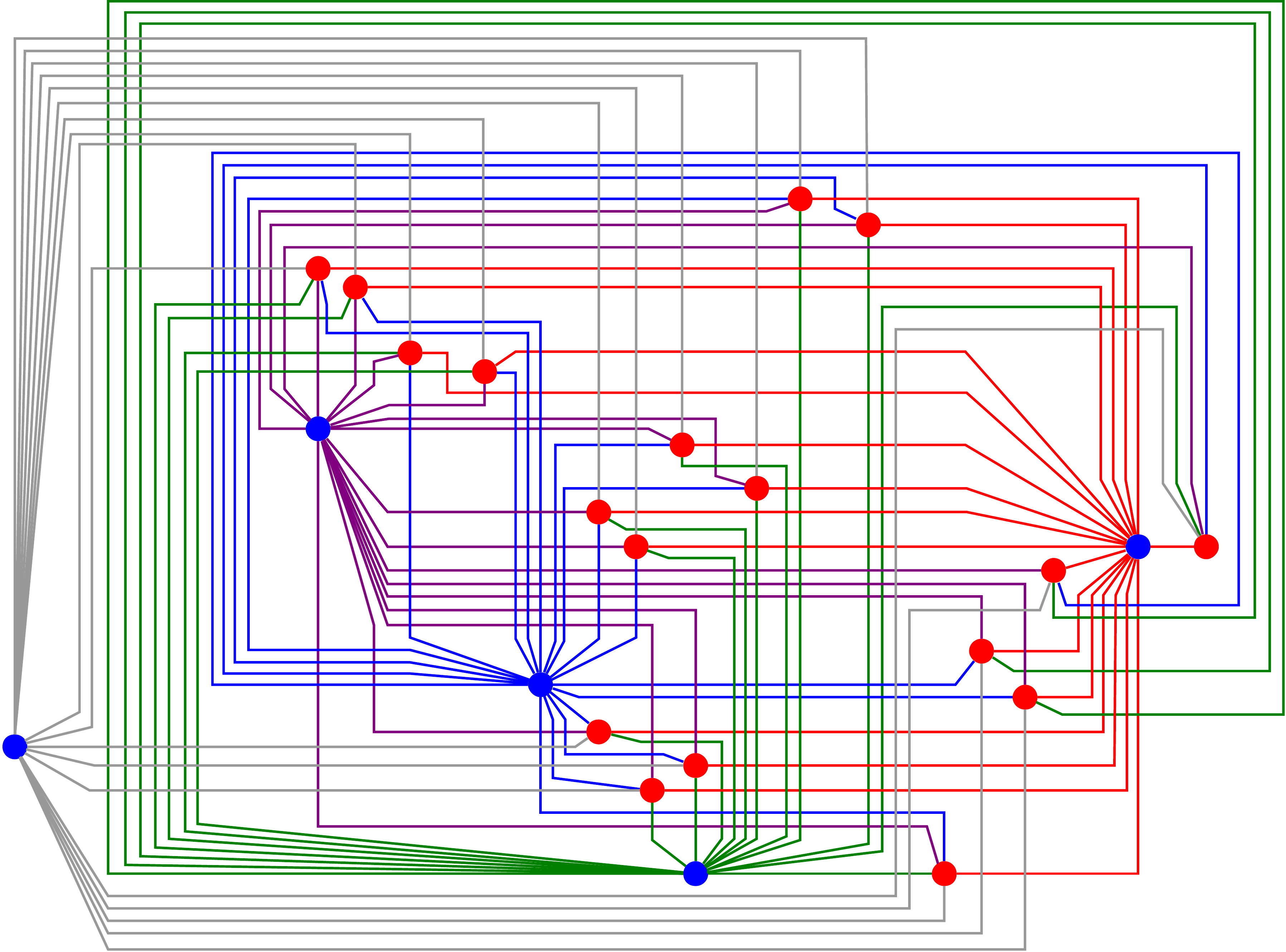}
	\caption{A quasiplanar drawing of $K_{5,18}$.}
	\label{fig:drawing:quasiplanar5}
\end{figure}

\begin{figure}[h!]
	\centering	
	\subcaptionbox{\label{fig:drawing:quasiplanar6}} {		
	\includegraphics[page=1,scale=0.22]{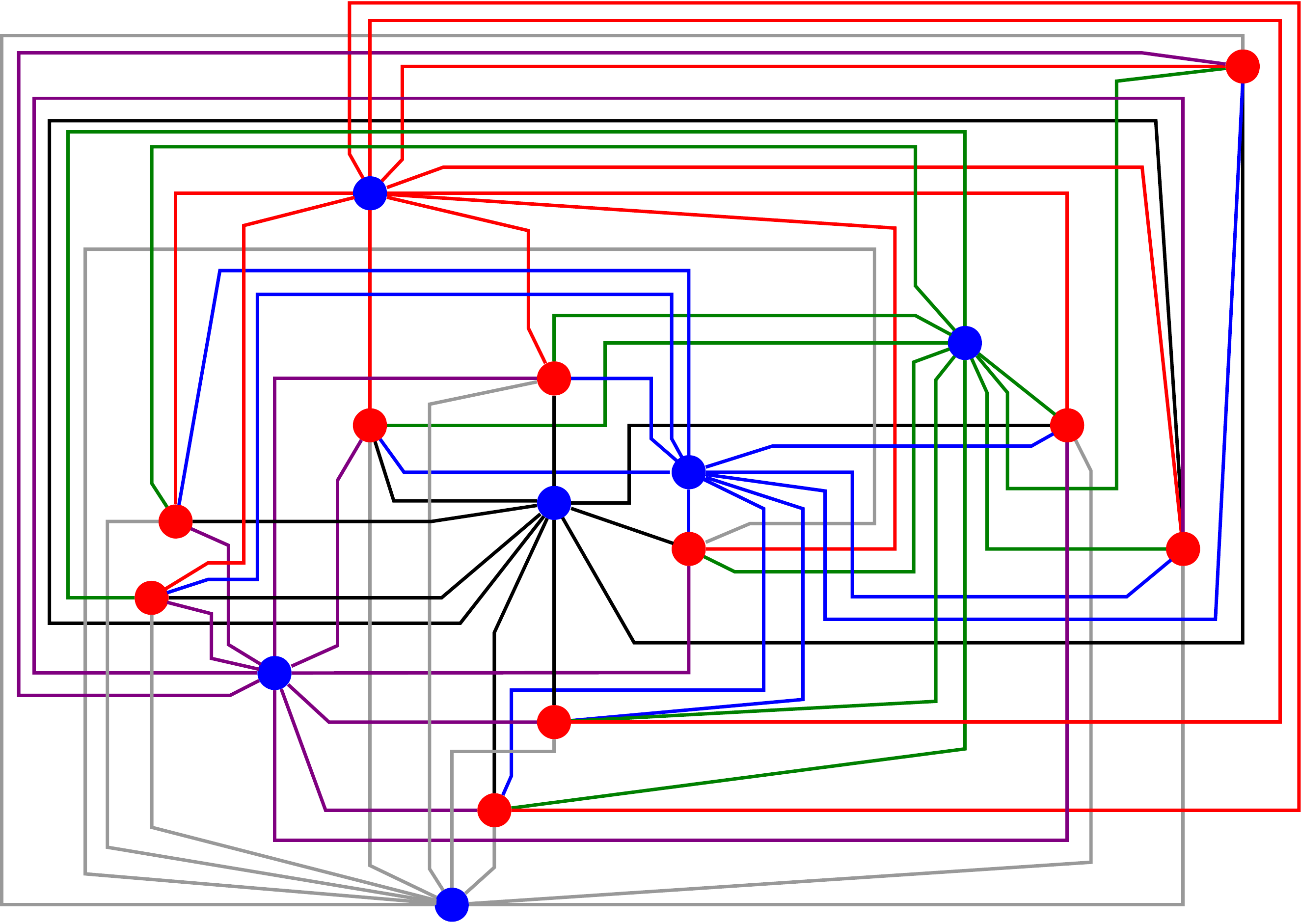}}
	\hfill
	\subcaptionbox{\label{fig:drawing:quasiplanar7}} {	
	\includegraphics[page=1,scale=0.22]{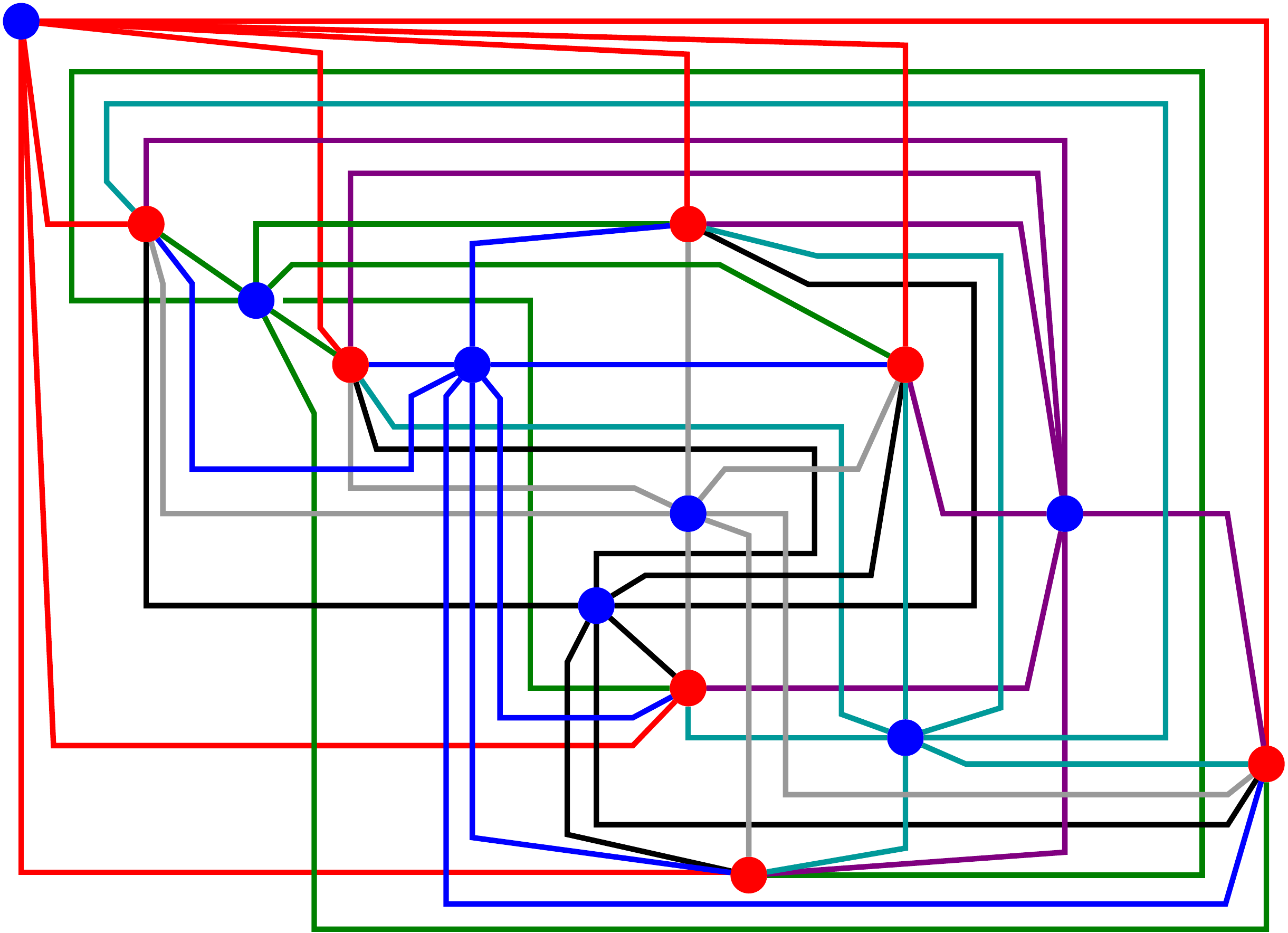}}	
	\caption{Illustration of quasiplanar drawings of (a)~$K_{6,10}$ and (b)~$K_{7,7}$.}
	\label{fig:drawing:quasiplanar67}
\end{figure}

\clearpage
\section{Further insights from our implementation}\label{app:numbers}

\begin{table}[p]
  \caption{A summary of the required time (in sec.) and of the number of general and non-isomorphic drawings for different complete and complete bipartite graphs.}
  \label{table:results}
  \subcaption*{Part A: Results concerning the classes of $k$-planar graphs; $k\in\{1,2,3,4\}$.}
  \medskip
  \centering
  \resizebox{\columnwidth}{!}{
  \begin{tabular}{lc@{\hspace{.9em}}r@{\hspace{.9em}}r@{\hspace{.9em}}r@{\hspace{.9em}}c@{\hspace{.9em}}r@{\hspace{.9em}}r@{\hspace{.9em}}r@{\hspace{.9em}}}
    \toprule
     & \multicolumn{4}{c}{complete} & \multicolumn{4}{c}{complete bipartite}\\
    \cmidrule(r{8pt}){2-5} \cmidrule(r{8pt}){6-9}
    Class &  Graph & General & Non-Iso. & Time & Graph & General & Non-Iso. & Time \\
    \midrule
    1-planar & $K_{4}$ &  8 & 2 & 0.043 & $K_{2,3}$ & 34 & 3 & 0.061 \\
             & $K_{5}$ & 13 & 1 & 0.043 & $K_{3,3}$ & 14 & 2 & 0.049 \\
             & $K_{6}$ &  4 & 1 & 0.020 & $K_{3,4}$ & 16 & 3 & 0.065 \\
             & $K_{7}$ &  0 & 0 & 0.006 & $K_{4,4}$ &  5 & 2 & 0.044 \\
             &         &    &   &       & $K_{4,5}$ &  0 & 0 & 0.010 \\
    \midrule
             & total:  & 25 & 4 & 0.112 &    total: & 69 & 10& 0.229 \\[0.1ex]
    \midrule \midrule
    2-planar & $K_{4}$ & 8  & 2 & 0.028 & $K_{2,3}$ &  76 &  6 & 0.090 \\
             & $K_{5}$ & 89 & 4 & 0.105 & $K_{3,3}$ & 243 & 19 & 0.254 \\
             & $K_{6}$ & 56 & 6 & 0.233 & $K_{3,4}$ & 526 & 71 & 1.458 \\
             & $K_{7}$ & 38 & 2 & 0.119 & $K_{4,4}$ & 310 & 38 & 1.152 \\
             & $K_{8}$ & 0  & 0 & 0.029 & $K_{4,5}$ & 318 & 37 & 1.826 \\
             &         &    &   &       & $K_{5,5}$ &   0 &  0 & 0.357 \\
    \midrule
             & total:  &191 &14 & 0.514 &    total: &1473 &171& 5.137 \\[0.1ex]
    \midrule \midrule
    3-planar & $K_{4}$ &   8 &  2 & 0.042 & $K_{2,3}$ &    76 &    6 &   0.234 \\
             & $K_{5}$ & 109 &  5 & 0.195 & $K_{3,3}$ &   678 &   69 &   1.802 \\
             & $K_{6}$ & 548 & 39 & 0.953 & $K_{3,4}$ &  7141 & 1188 &  16.969 \\
             & $K_{7}$ & 648 & 39 & 3.459 & $K_{4,4}$ & 24058 & 2704 &  97.801 \\
             & $K_{8}$ &  20 &  3 & 1.153 & $K_{4,5}$ & 44822 & 7653 & 310.194 \\
             & $K_{9}$ &   0 &  0 & 0.065 & \cellcolor{gray!25}$K_{5,5}$ & 20043 & 1899 & 199.908 \\
             &         &     &    &       & $K_{5,6}$ &  2516 &  438 &  47.396 \\
             &         &     &    &       & $K_{6,6}$ &     0 &    0 &   4.822 \\
    \midrule
             & total:  &1333 & 88 & 5.867 &    total: & 99334 &13957 & 679.126 \\[0.1ex]
    \midrule \midrule
    4-planar & $K_{4}$  &     8 &    2 &  0.040 & $K_{2,3}$ &     76 &     6 &     0.108 \\
             & $K_{5}$  &   109 &    5 &  0.222 & $K_{3,3}$ &    968 &   102 &     2.146 \\
             & $K_{6}$  &  1374 &   95 &  4.080 & $K_{3,4}$ &  32454 &  6194 &   163.000 \\
             & $K_{7}$  & 14728 & 1266 & 79.842 & $K_{4,4}$ & 681196 & 81817 & 34096.183 \\
             & $K_{8}$  &  7922 &  833 & 84.725 & $K_{4,5}$ &      ? &     ? &         ? \\
             & $K_{9}$  &   353 &   35 & 33.672 &           &        &       &           \\
             & $K_{10}$ &     0 &    0 &  1.175 &           &        &       &           \\
    \midrule
             & total:   & 24494 & 2236 &203.756 &    total: &      ? &     ? &         ? \\[0.1ex]
    \midrule \midrule
    5-planar & $K_{4}$  &       8 &       2 &      0.059 &  &  &  &  \\
             & $K_{5}$  &     109 &       5 &      0.259 &  &  &  &  \\
             & $K_{6}$  &    1752 &     119 &      4.716 &  &  &  &  \\
             & $K_{7}$  &   83710 &    8318 &   1396.781 &  &  &  &  \\
             & $K_{8}$  & 1190765 &  138750 & 262419.413 &  &  &  &  \\
             & $K_{9}$  &  285847 &   29939 &  32299.196 &  &  &  &  \\
             & $K_{10}$ &       0 &    0    &   2783.813 &  &  &  &  \\
		\midrule
             & total:   & 1562191 &  177133 & 298904.237 &  &  &  &  \\[0.1ex]
		\midrule
    \bottomrule
  \end{tabular}
  }
\end{table}

In this section, we present some insights from the computations that we made in order to check whether certain complete and complete bipartite graphs belong to specific graph classes; for a summary refer to Table~\ref{table:results}. Our algorithm was implemented in Java (\url{https://github.com/beyond-planarity/complete-graphs}) and was executed on a Windows machine with 2 cores at 2.9 GHz and~8~GB~RAM. 

As described in Section~\ref{sec:enumeration}, our algorithm constructs all possible drawings of a certain (complete or complete bipartite) graph by adding a single vertex to the non-isomorphic drawings of the subgraph of it without this vertex. Once a new drawing is obtained in this procedure, we compare it for isomorphism against the already computed ones (and possibly discard it). The total number of produced drawings is reported in the column ``General'', while the number of the non-isomorphic ones in the column ``Non-Iso.''. The reported times are in seconds and correspond to the total time needed for generation and filtering for isomorphism. The bottommost row of each section in the table corresponds to a negative instance, as no drawing satisfying the constraints of the respective graph class could be found. The class of complete bipartite $4$-planar graphs and the one of complete bipartite quasiplanar graphs form  exceptions, as for these classes we were not able to report all non-isomorphic drawings of $K_{4,5}$.

\begin{table}[p]
  \subcaption*{Part B: Results concerning the remaining graph classes considered in this paper.}
  \medskip
  \centering
  \resizebox{\columnwidth}{!}{
  \begin{tabular}{lc@{\hspace{.9em}}r@{\hspace{.9em}}r@{\hspace{.9em}}r@{\hspace{.9em}}c@{\hspace{.9em}}r@{\hspace{.9em}}r@{\hspace{.9em}}r@{\hspace{.9em}}}
    \toprule
     & \multicolumn{4}{c}{complete} & \multicolumn{4}{c}{complete bipartite}\\
    \cmidrule(r{8pt}){2-5} \cmidrule(r{8pt}){6-9}
    Class &  Graph & General & Non-Iso. & Time & Graph & General & Non-Iso. & Time \\
    \midrule
    fan-planar & $K_{4}$ &   8 &  2 & 0.034 & $K_{2,3}$ &  76 &  6 & 0.110 \\
               & $K_{5}$ &  89 &  5 & 0.133 & $K_{3,3}$ & 127 &  9 & 0.292 \\
               & $K_{6}$ & 147 & 39 & 0.226 & $K_{3,4}$ & 295 & 43 & 0.757 \\
               & $K_{7}$ &  75 & 39 & 0.405 & $K_{4,4}$ & 255 & 29 & 0.972 \\
               & $K_{8}$ &   0 &  0 & 0.196 & $K_{4,5}$ & 324 & 48 & 1.624 \\
               &         &     &    &       & $K_{5,5}$ &   0 &  0 & 0.637 \\
    \midrule
               & total:  & 319 & 22 & 0.994 &    total: &1077 &135& 4.392 \\[0.1ex]
    \midrule \midrule
    fan-crossing & $K_{4}$ &  8 & 2 & 0.049 & $K_{2,3}$ & 34 & 3 & 0.057 \\
    free         & $K_{5}$ & 13 & 1 & 0.054 & $K_{3,3}$ & 38 & 5 & 0.092 \\
                 & $K_{6}$ &  4 & 1 & 0.038 & $K_{3,4}$ & 28 & 5 & 0.098 \\
                 & $K_{7}$ &  0 & 0 & 0.009 & $K_{4,4}$ & 19 & 4 & 0.106 \\
                 &         &    &   &       & $K_{4,5}$ & 16 & 2 & 0.075 \\
                 &         &    &   &       & $K_{5,5}$ &  0 & 0 & 0.012 \\
    \midrule
                 & total:  & 25 & 4 & 0.150 &    total: &135 & 19& 0.440 \\[0.1ex]
    \midrule \midrule
    gap-planar  & $K_{4}$  &     14 &     2 &     0.135 & $K_{2,3}$ &     169 &     14 &      0.256 \\
                & $K_{5}$  &    243 &    10 &     0.366 & $K_{3,3}$ &    1425 &    266 &      4.359 \\
                & $K_{6}$  &    739 &   237 &     4.726 & $K_{3,4}$ &   16898 &   7466 &    170.396 \\
                & $K_{7}$  &   1124 &   665 &    13.943 & $K_{3,5}$ &  148527 &  56843 &  12032.226 \\
                & $K_{8}$  &      1 &     1 &    16.347 & $K_{4,5}$ &  199778 & 148367 &  28457.751 \\
                & $K_{9}$  &      0 &     0 &     0.019 & $K_{4,6}$ &  408476 & 246318 & 132622.664 \\
                &          &        &       &           & $K_{4,7}$ &  173271 & 101428 &  32958.628 \\
                &          &        &       &           & $K_{4,8}$ &    5981 &   4015 &   2708.278 \\
                &          &        &       &           & $K_{4,9}$ &       0 &      0 &     99.583 \\
    \midrule
                &   total: &   2121 &   915 &    35.536 &    total: &  954525 & 564717 & 209054.141 \\[0.1ex]
    \midrule \midrule
    quasiplanar & $K_{4}$  &      8 &     2 &     0.082 & $K_{2,3}$ &     76 &     6 &     0.187 \\
                & $K_{5}$  &    109 &     5 &     0.193 & $K_{3,3}$ &    604 &    53 &     0.859 \\
                & $K_{6}$  &    936 &    63 &     1.820 & $K_{3,4}$ &  11902 &  2248 &    34.073 \\
                & $K_{7}$  &  16505 &  1607 &    69.943 & $K_{4,4}$ & 386241 & 46711 & 11328.401 \\
                & $K_{8}$  & 173199 & 20980 &  4044.264 & $K_{4,5}$ &      ? &     ? &         ? \\
                & $K_{9}$  & 209248 & 23011 & 35163.772 &           &        &       &           \\
                & $K_{10}$ &     81 &     9 &  7593.865 &           &        &       &           \\
                & $K_{11}$ &      0 &     0 &     5.225 &           &        &       &           \\
		\midrule
                &   total: & 400086 & 45677 & 46879.164 &    total: &      ? &     ? &         ? \\[0.1ex]
    \midrule
    \bottomrule
  \end{tabular}
  }
\end{table}

As a typical example, we describe in the following one intermediate step in our computations; refer to the gray colored entry of Part~A of Table~\ref{table:results}. Our algorithm for reporting that $K_{6,6}$ is not a $3$-planar graph generated at some intermediate step all $3$-planar drawings of $K_{5,5}$, based on the non-isomorphic drawings of $K_{4,5}$. The algorithm reported in total 20043 drawings (including isomorphic ones), which were reduced to 1899 due to the elimination of isomorphic ones. These two steps together required 199.908 seconds. The obtained drawings were extended (by adding one additional vertex and its five incident edges) to 2516 drawings of $K_{5,6}$, which were reduced to 438 due to the filtering for isomorphism. None of these drawings could be extended to a $3$-planar drawing of $K_{6,6}$.

The class of complete bipartite $4$-planar graphs and the class of complete bipartite quasiplanar graphs show the limitations of our approach. We start our discussion with the former class. As already mentioned in Section~\ref{subsec:application:kPlanar}, for the class of complete bipartite $4$-planar graphs we were able to report only some partial results (and not a complete characterization). The reason is depicted in Part~A of Table~\ref{table:results}. Observe that, in order to determine the 81817 non-isomorphic drawings of $K_{4,4}$, our implementation needed to generate 681196 drawings starting from the 6194 non-isomorphic drawings of $K_{3,4}$. This growth in the number of non-isomorphic drawings and the time needed to generate them (i.e., 34096 sec.) form a clear indication of the reason why our implementation failed to report all corresponding drawings of $K_{4,5}$. Similar observations can be made for the class of quasiplanar graphs; see Part~B of Table~\ref{table:results}.

We conclude this section by making some additional observations. First, it is eye-catching from both parts of Table~\ref{table:results} that the number of general and non-isomorphic drawings of the complete graphs are significantly smaller than the corresponding ones for the complete bipartite graphs, which is explained by the fact that the former are very symmetric and denser. 

As it is naturally expected, we also observe that both the number of general drawings and the number of non-isomorphic drawings of a $k$-planar graph increases as $k$ increases (at least for values of $k$ in $\{1,2,3,4\}$). In particular, it seems that this increment becomes significantly larger from $3$- to $4$-planar graphs, both in the complete and in the complete bipartite settings. 

Comparing fan-planar and fan-crossing free graphs, which are in a sense complementary to each other, we observe significant differences in the number of general and non-isomorphic drawings. In particular, the number of non-isomorphic drawings of fan-crossing free graphs are always single digits. 

We finally observe that it is generally not a time-demanding task to conclude that a graph does not belong to a specific class, once all non-isomorphic drawings of its maximal realizable subgraph have been computed. In fact, the bottommost row of every section in Table~\ref{table:results} reports times in the order of few seconds at most.

\section{A Combinatorial Proof of Characterization~\ref*{th:bipartite:fcf}}
\label{app:bipartite:fcf}

In this appendix we give a combinatorial proof for Characterization~\ref{th:bipartite:fcf} of the complete bipartite fan-crossing free graphs.

\rephrase{Characterization}{\ref*{th:bipartite:fcf}}{The complete bipartite graph $K_{a,b}$ (with $a \le b$) is fan-crossing free if and only if
\begin{inparaenum}[(i)]
	\item $a \le 2$, or
	\item $a \le 4$ and $b \le 6$.
\end{inparaenum}
}

The sufficiency of the two conditions is proved by
observing that graph $K_{2,b}$, for any $b$, is planar, and that graph $K_{4,6}$ is fan-crossing free, as shown in Fig.~\ref{fig:drawing:2planar}.

To prove the necessity of the two conditions, we have to show that neither $K_{5,5}$ nor $K_{3,7}$
is fan-crossing free. We will discuss the two cases separately. However, we first observe some
properties that are common for the two cases. 
Recall that we restrict our analysis to simple drawings, in which 
there are no self-crossing edges, two distinct edges cross at most once, and adjacent edges do not
cross. For a complete bipartite graph $K_{a,b}$, we will denote the two partite sets as 
$V_1 = \{u_1,\dots,u_a\}$ and $V_2=\{w_1,\dots,w_b\}$.


The following lemma is important to bound the number of possible configurations 
we have to consider later in the two cases.

\begin{lemma}\label{lemma:fanCrossingFree:Planar22Subgraph}
	Let $\Gamma_{3,5}$ be a fan-crossing free drawing of $K_{3,5}$.
	There is a $K_{2,2}$-subgraph of $K_{3,5}$ whose edges do not 
	cross each other in $\Gamma_{3,5}$.
\end{lemma}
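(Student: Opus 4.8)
The statement asks us to find, inside any fan-crossing free drawing $\Gamma_{3,5}$ of $K_{3,5}$, a $K_{2,2}$-subgraph drawn without any crossing among its four edges. The plan is to argue by contradiction: suppose that \emph{every} $K_{2,2}$-subgraph of $K_{3,5}$ has at least one crossing among its own edges in $\Gamma_{3,5}$, and derive a forbidden fan-crossing configuration, i.e.\ an edge crossed by two adjacent edges. First I would set up notation: write $V_1=\{u_1,u_2,u_3\}$ and $V_2=\{w_1,\dots,w_5\}$, and note that a $K_{2,2}$-subgraph is specified by a choice of two vertices $u_i,u_j$ of $V_1$ and two vertices $w_p,w_q$ of $V_2$; its four edges are $u_iw_p$, $u_iw_q$, $u_jw_p$, $u_jw_q$. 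Since adjacent edges do not cross in a simple drawing, the only possible crossing inside such a $K_{2,2}$ is between the two ``diagonal'' independent pairs $\{u_iw_p, u_jw_q\}$ or $\{u_iw_q, u_jw_p\}$.

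The combinatorial core is a counting argument on these independent crossing pairs. For a fixed pair $u_i,u_j\in V_1$, consider the bipartite ``crossing pattern'' among the edges from $u_i$ and from $u_j$ to the five vertices of $V_2$: this is naturally encoded by which pairs $\{w_p,w_q\}$ force a crossing of $u_iw_p$ with $u_jw_q$ or of $u_iw_q$ with $u_jw_p$. The assumption says that for \emph{every} $2$-subset $\{w_p,w_q\}$ at least one of these two crossings occurs. I would then count, over the $\binom{5}{2}=10$ pairs $\{w_p,w_q\}$, the total number of crossings among the $10$ edges incident to $u_i$ and $u_j$; this total is at least $10$. On the other hand, each individual edge, say $u_iw_p$, can be crossed by at most one edge incident to $u_j$, because if $u_iw_p$ were crossed by both $u_jw_q$ and $u_jw_r$ then $u_iw_p$ is crossed by two adjacent edges (both incident to $u_j$), contradicting fan-crossing freeness. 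Hence each of the $10$ edges carries at most one such crossing, so the total number of crossings of this type is at most $5$ (each crossing is counted from both sides). Comparing $10 \le (\text{total}) \le 5$ already yields a contradiction for a single pair $u_i,u_j$ — so in fact one does not even need all three pairs from $V_1$.

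Let me double-check the bound ``at most $5$'': the crossings in question are between an edge incident to $u_i$ and an edge incident to $u_j$; there are $5$ edges at $u_i$, each crossed at most once by an edge at $u_j$, so at most $5$ such crossings, and symmetrically at most $5$ counting from $u_j$'s side — these are the same crossings, so the count is simply $\le 5$. Meanwhile the contradiction hypothesis forces $\ge 1$ crossing for each of the $10$ pairs $\{w_p,w_q\}$, and distinct pairs give distinct crossings (a crossing between $u_iw_p$ and $u_jw_q$ determines $\{w_p,w_q\}$), so $\ge 10$ crossings — impossible. Therefore some pair $\{w_p,w_q\}$ has neither $u_iw_p\times u_jw_q$ nor $u_iw_q\times u_jw_p$, and the $K_{2,2}$ on $\{u_i,u_j\}\cup\{w_p,w_q\}$ is crossing-free among its own edges, as required.

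\textbf{Main obstacle.} The one subtlety I expect to need care with is the step asserting ``each edge $u_iw_p$ is crossed by at most one edge incident to $u_j$'': this is exactly where fan-crossing freeness enters, and I must make sure the two edges $u_jw_q$, $u_jw_r$ are genuinely \emph{adjacent} (they share $u_j$) and genuinely \emph{distinct} from $u_iw_p$ and from each other, and that $u_iw_p$ is not adjacent to either of them — which holds since $u_i\neq u_j$ and in $K_{3,5}$ the endpoints $w_q,w_r$ in $V_2$ differ from $w_p$ only when $q,r\neq p$; but if, say, $q=p$ we would be looking at $u_iw_p$ and $u_jw_p$, which are adjacent and hence do not cross in a simple drawing, so that case contributes nothing anyway. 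Handling these adjacency bookkeeping cases cleanly is routine but must be written carefully. Everything else is elementary double counting.
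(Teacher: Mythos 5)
Your proof is correct, but it follows a genuinely different route from the paper. The paper argues locally and topologically: it fixes the $K_{2,2}$ on $u_1,u_2,w_1,w_2$, assumes its two independent pairs cross, first rules out (via figures) the configuration with two crossings inside this $K_{2,2}$, and then performs a case analysis on which of three regions the vertex $w_3$ lies in, exhibiting in each case a concrete crossing-free $K_{2,2}$; it is a picture-driven region analysis that in fact only uses the subdrawing on $u_1,u_2,w_1,w_2,w_3$. You instead fix one pair $u_i,u_j$ and double count: each of the $\binom{5}{2}=10$ pairs $\{w_p,w_q\}$ would need its own crossing between an edge at $u_i$ and an edge at $u_j$ (adjacent edges cannot cross in a simple drawing, and a crossing determines the pair, so these witnesses are distinct), while fan-crossing freeness caps the number of such crossings at $5$, since each of the five edges at $u_i$ may be crossed by at most one edge of the fan at $u_j$; the comparison $10\le 5$ is the desired contradiction. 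Your adjacency bookkeeping (the case $w_q=w_p$ contributes nothing because those edges are adjacent) is handled correctly. Your argument is shorter, figure-free, and strictly stronger: since $\binom{b}{2}>b$ already for $b=4$, it shows that every simple fan-crossing free drawing of $K_{2,4}$ contains a plane $K_{2,2}$, which more than suffices for the lemma and for its later uses in the paper (those only invoke the existence of a plane $K_{2,2}$ up to relabeling). What the paper's longer analysis buys is explicit structural information about where the plane $K_{2,2}$ sits relative to a given crossing, in the same hands-on style as the subsequent $K_{5,5}$ and $K_{3,7}$ case analyses, but none of that extra information is actually needed for the lemma's statement.
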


\begin{proof}
	Consider the $K_{2,2}$-subgraph $G$ induced by vertices $u_1, u_2, w_1, w_2$. If no two edges of $G$ cross each other in $\Gamma_{3,5}$,
	then the statement already follows. So, we may assume that two edges, say $(u_1,w_2)$ 
	and $(u_2,w_1)$, have a crossing. 
	
	We first consider the case in which another crossing exists between edges of $G$;
	since $\Gamma_{3,5}$ is simple and fan-crossing free, this crossing must be 
	between $(u_1,w_1)$ and $(u_2,w_2)$. However, it is possible to verify in 
	Fig~\ref{subfig:fanCrossingFree:noTwoCrossingsInKTwoTwoA}
	and Fig.~\ref{subfig:fanCrossingFree:noTwoCrossingsInKTwoTwoB} that it is
	not possible to realize these two crossings without creating a fan-crossing.
	
	\begin{figure}[b]
		\centering
		\subcaptionbox{\label{subfig:fanCrossingFree:noTwoCrossingsInKTwoTwoA}}{
			\includegraphics[page=1, width=0.3\textwidth]{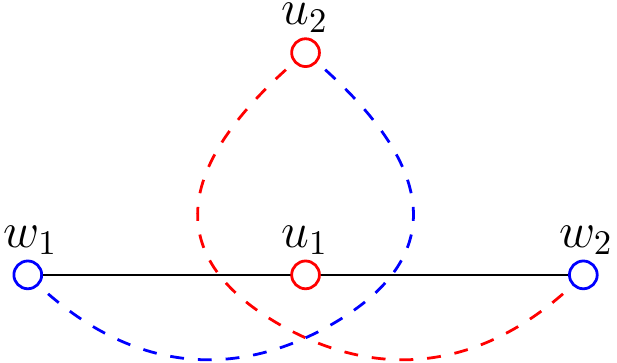}
		}
		\qquad
		\subcaptionbox{\label{subfig:fanCrossingFree:noTwoCrossingsInKTwoTwoB}}{
			\includegraphics[page=2, width=0.3\textwidth]{figures/FanCrFree-planarK22}
		}
		\caption{If $(u_2,w_1)$ crosses $(u_1,w_2)$ and $(u_1,w_1)$ crosses $(u_2,w_2)$, 
			then there exists also a fan-crossing.}
		\label{fig:fanCrossingFree:noTwoCrossingsInKTwoTwo}
	\end{figure}
	
	
	Thus, we may assume that there is no crossing between edges of $G$ other than
	the one involving $(u_1,w_2)$ and $(u_2,w_1)$, as in Fig.~\ref{subfig:fanCrossingFree:crossingFreeKTwoTwoA}, where the three regions
	$R_1$, $R_2$, and $R_3$, in which vertex $w_3$ may lie are shown.
	
	\begin{figure}[tb]
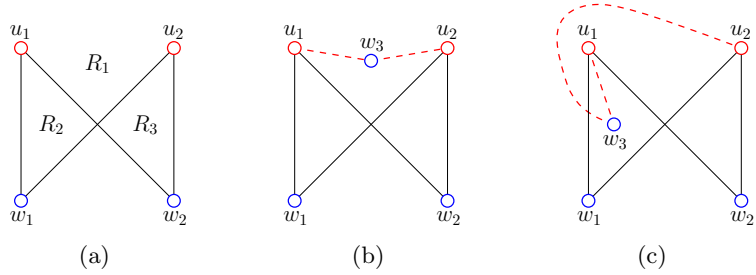

		\centering
		\hfill
		\subcaptionbox{\label{subfig:fanCrossingFree:crossingFreeKTwoTwoA}}{
			\includegraphics[page=3, scale=0.6]{figures/FanCrFree-planarK22}
		}
		\hfill
		\subcaptionbox{\label{subfig:fanCrossingFree:crossingFreeKTwoTwoB}}{
			\includegraphics[page=4, scale=0.6]{figures/FanCrFree-planarK22}
		}
		\hfill
		\subcaptionbox{\label{subfig:fanCrossingFree:crossingFreeKTwoTwoC}}{
			\includegraphics[page=5, scale=0.6]{figures/FanCrFree-planarK22}
		}
		\hfill~
		\caption{(a) The only drawing of $G$ in which
			$(u_1,w_2)$ and $(u_2,w_1)$ cross each other.
			(b) Vertex $w_3$ in region $R_1$. Edges $(u_1,w_3)$ and $(u_2,w_3)$
			(red edges)	are crossing-free.
			(c) Vertex $w_3$ in region $R_2$. Edge $(u_1,w_3)$ is crossing-free;
			edge $(u_2,w_3)$ crosses $(u_1,w_1)$.}
		\label{fig:fanCrossingFree:crossingFreeKTwoTwo}
	\end{figure}
	
	First we consider the case in which $w_3$ lies in $R_1$ (see Fig.~\ref{subfig:fanCrossingFree:crossingFreeKTwoTwoB}).
	The edge $(u_1,w_3)$ can neither cross $(u_1,w_1)$ nor $(u_1,w_2)$,
	since adjacent edges do not cross; it cannot cross $(u_2,w_1)$,
	as otherwise $(u_2,w_1)$ would cross a fan incident to $u_1$; 
	finally, $(u_1,w_3)$ cannot cross $(u_2,w_2)$ without crossing any 
	other edge. Hence, edge $(u_1,w_3)$ does not cross any edge of $G$,
	and the same holds for edge $(u_2,w_3)$. Further, $(u_1,w_3)$ and $(u_2,w_3)$ 
	do not cross each other, since they are adjacent. However, this implies that
	there exist even two $K_{2,2}$-subgraphs whose edges do not cross each other
	in $\Gamma_{3,5}$, namely the one induced by $u_1,u_2,w_3,w_1$, and the one 
	induced by $u_1,u_2,w_3,w_2$.
	
	We now consider the case in which $w_3$ lies in region $R_2$
	(see Fig.~\ref{subfig:fanCrossingFree:crossingFreeKTwoTwoC});
	note that the case in which $w_3$ lies in $R_3$ is symmetric.
	Similar to the previous case, edge $(u_1,w_3)$ cannot cross any
	edge of $G$. Further, the edge $(u_2,w_3)$ can neither cross one of the edges
	$(u_2,w_1)$ or $(u_2,w_2)$ (as adjacent edges do not cross),
	nor $(u_1,w_2)$ (this would be a fan-crossing). Hence, the only possible
	crossing of edge $(u_2,w_3)$ is with $(u_1,w_1)$, as in Fig.~\ref{subfig:fanCrossingFree:crossingFreeKTwoTwoC}. In this case,
	the graph induced by $u_1,u_2,w_2,w_3$ is a $K_{2,2}$-subgraph whose edges
	do not cross each other. 
	
	Since we considered all cases, the statement of the lemma follows.
\end{proof}

By Lemma~\ref{lemma:fanCrossingFree:Planar22Subgraph},
we assume in the following that in any fan-crossing free drawing of
either $K_{3,7}$ or $K_{5,5}$, the $K_{2,2}$-subgraph 
induced by vertices $u_1,u_2,w_1,w_2$ is such that no two of its
edges cross each other. We use the planar drawing of this
$K_{2,2}$-subgraph as a starting point to construct all possible fan-crossing free
drawings of graph $K_{2,5}$, by adding 
vertices $w_3$, $w_4$, and $w_5$ one at a time.

\smallskip

\begin{figure}[htpb]
	\centering
	\subcaptionbox{\label{subfig:fanCrossingFree:addThirdNode:CaseA}}{
		\includegraphics[page=1, width=0.24\textwidth]{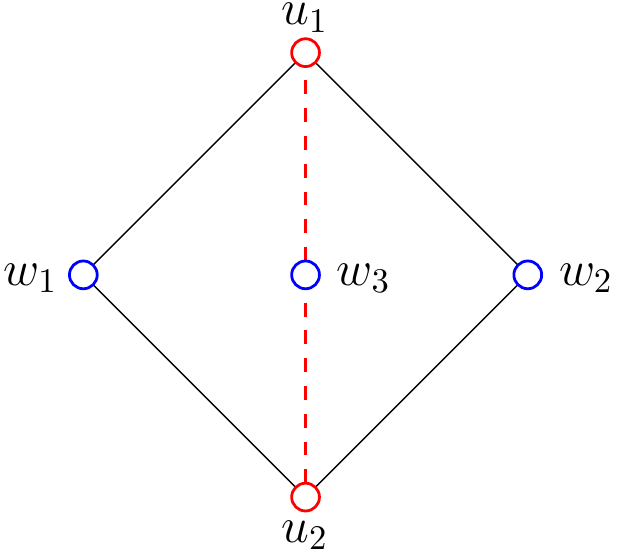}
	}
	\subcaptionbox{\label{subfig:fanCrossingFree:addThirdNode:CaseB}}{
		\includegraphics[page=2, width=0.22\textwidth]{figures/FanCrFree-addThirdNode}
	}
	\subcaptionbox{\label{subfig:fanCrossingFree:addThirdNode:CaseC}}{
		\includegraphics[page=3, width=0.22\textwidth]{figures/FanCrFree-addThirdNode}
	}
	\subcaptionbox{\label{subfig:fanCrossingFree:addThirdNode:CaseD}}{
		\includegraphics[page=4, width=0.22\textwidth]{figures/FanCrFree-addThirdNode}
	}\\
	\subcaptionbox{\label{subfig:fanCrossingFree:addThirdNode:CaseE}}{
		\includegraphics[page=5, width=0.22\textwidth]{figures/FanCrFree-addThirdNode}
	}
	\subcaptionbox{\label{subfig:fanCrossingFree:addThirdNode:CaseF}}{
		\includegraphics[page=6, width=0.22\textwidth]{figures/FanCrFree-addThirdNode}
	}
	\subcaptionbox{\label{subfig:fanCrossingFree:addThirdNode:CaseG}}{
		\includegraphics[page=7, width=0.22\textwidth]{figures/FanCrFree-addThirdNode}
	}
	\subcaptionbox{\label{subfig:fanCrossingFree:addThirdNode:CaseH}}{
		\includegraphics[page=8, width=0.22\textwidth]{figures/FanCrFree-addThirdNode}
	}\\
	\subcaptionbox{\label{subfig:fanCrossingFree:addThirdNode:CaseI}}{
		\includegraphics[page=9, width=0.22\textwidth]{figures/FanCrFree-addThirdNode}
	}
	\subcaptionbox{\label{subfig:fanCrossingFree:addThirdNode:CaseJ}}{
		\includegraphics[page=10, width=0.22\textwidth]{figures/FanCrFree-addThirdNode}
	}
	\subcaptionbox{\label{subfig:fanCrossingFree:addThirdNode:CaseK}}{
		\includegraphics[page=11, width=0.22\textwidth]{figures/FanCrFree-addThirdNode}
	}
	\subcaptionbox{\label{subfig:fanCrossingFree:addThirdNode:CaseL}}{
		\includegraphics[page=12, width=0.22\textwidth]{figures/FanCrFree-addThirdNode}
	}\\
	\subcaptionbox{\label{subfig:fanCrossingFree:addThirdNode:CaseM}}{
		\includegraphics[page=13, width=0.22\textwidth]{figures/FanCrFree-addThirdNode}
	}
	\subcaptionbox{\label{subfig:fanCrossingFree:addThirdNode:CaseN}}{
		\includegraphics[page=14, width=0.22\textwidth]{figures/FanCrFree-addThirdNode}
	}
	\subcaptionbox{\label{subfig:fanCrossingFree:addThirdNode:CaseO}}{
		\includegraphics[page=15, width=0.22\textwidth]{figures/FanCrFree-addThirdNode}
	}
	\subcaptionbox{\label{subfig:fanCrossingFree:addThirdNode:CaseP}}{
		\includegraphics[page=16, width=0.22\textwidth]{figures/FanCrFree-addThirdNode}
	}
	\caption{All the cases that preserve the fan-crossing free property,
		when adding a third node $w_3$ (red) to the plane $K_{2,2}$. The red edges
		indicate the newly added edges.}
	\label{fig:fanCrossingFree:addThirdNode}
\end{figure}

\noindent\textbf{Adding vertex $w_3$.} 
We consider vertex $w_3$ and edges $(u_1,w_3)$ and $(u_2,w_3)$.
By our previous observations for fan-crossing free drawings,
edge $(u_1,w_3)$ is not allowed to cross $(u_1,w_1)$, $(u_1,w_2)$
and $(u_2,w_3)$. However, a crossing of $(u_1,w_3)$ with at most one
of $(u_2,w_1)$ and $(u_2,w_2)$ is possible. For edge
$(u_2,w_3)$, we can observe that this edge is not allowed to cross
$(u_2,w_1)$, $(u_2,w_2)$, and $(u_1,w_3)$, but is allowed to cross
at most one of edges $(u_1,w_1)$ and $(u_1,w_2)$.
Fig.~\ref{fig:fanCrossingFree:addThirdNode} shows all the possible
drawings that can occur when we add the node $w_3$.

Note that many configurations in Fig.~\ref{fig:fanCrossingFree:addThirdNode}
are topologically equivalent, namely:
\begin{itemize}
	\item Fig.\ref{subfig:fanCrossingFree:addThirdNode:CaseA} and
	Fig.\ref{subfig:fanCrossingFree:addThirdNode:CaseJ},
	\item Fig.\ref{subfig:fanCrossingFree:addThirdNode:CaseB},
	Fig.\ref{subfig:fanCrossingFree:addThirdNode:CaseD},
	Fig.\ref{subfig:fanCrossingFree:addThirdNode:CaseF},
	Fig.\ref{subfig:fanCrossingFree:addThirdNode:CaseH},
	Fig.\ref{subfig:fanCrossingFree:addThirdNode:CaseK},
	Fig.\ref{subfig:fanCrossingFree:addThirdNode:CaseM},
	Fig.\ref{subfig:fanCrossingFree:addThirdNode:CaseN} and
	Fig.\ref{subfig:fanCrossingFree:addThirdNode:CaseP},
	\item Fig.\ref{subfig:fanCrossingFree:addThirdNode:CaseC},
	Fig.\ref{subfig:fanCrossingFree:addThirdNode:CaseE},
	Fig.\ref{subfig:fanCrossingFree:addThirdNode:CaseG},
	Fig.\ref{subfig:fanCrossingFree:addThirdNode:CaseI},
	Fig.\ref{subfig:fanCrossingFree:addThirdNode:CaseL} and
	Fig.\ref{subfig:fanCrossingFree:addThirdNode:CaseO}.
\end{itemize}
So there are basically three different configurations that we
have to consider, namely those from
Figs.~\ref{subfig:fanCrossingFree:addThirdNode:CaseA},
\ref{subfig:fanCrossingFree:addThirdNode:CaseB} and
\ref{subfig:fanCrossingFree:addThirdNode:CaseC}.

\medskip

\noindent\textbf{Adding vertex $w_4$.} 
In the next step we use these three configurations to create all drawings with the
additional node $w_4$ and the edges $(w_4,u_1)$ and $(w_4,u_2)$.

First, we consider Fig.~\ref{subfig:fanCrossingFree:addThirdNode:CaseA}.
In this drawing, the edge $(u_1,w_4)$ is not allowed to cross 
edges $(u_1,w_1)$, $(u_1,w_2)$, $(u_1,w_3)$ and $(u_2,w_4)$, but it may
cross at most one of the edges $(u_2,w_1)$, $(u_2,w_2)$ or $(u_2,w_3)$. 
Similar, the edge $(u_2,w_4)$ cannot cross $(u_2,w_1)$, $(u_2,w_2)$, $(u_2,w_3)$ 
and $(u_1,w_4)$, but it may cross at most one of the edges $(u_1,w_1)$, 
$(u_1,w_2)$ or $(u_1,w_3)$. From these conditions we obtain the three 
topologically different drawings in Fig.~\ref{fig:fanCrossingFree:addFourthNodeA}.

\begin{figure}[tb]
	\centering
	\subcaptionbox{\label{subfig:fanCrossingFree:addFourthNode:CaseAA}}{
		\includegraphics[page=1, width=0.22\textwidth]{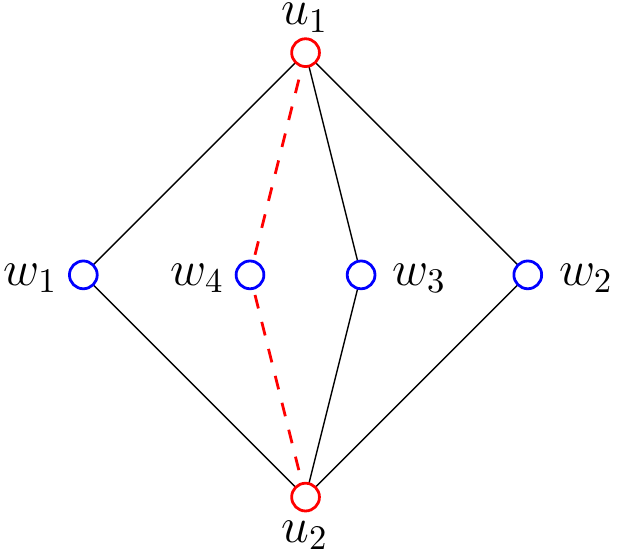}
	}
	\subcaptionbox{\label{subfig:fanCrossingFree:addFourthNode:CaseAB}}{
		\includegraphics[page=2, width=0.22\textwidth]{figures/FanCrFree-addFourthNode}
	}
	\subcaptionbox{\label{subfig:fanCrossingFree:addFourthNode:CaseAC}}{
		\includegraphics[page=3, width=0.22\textwidth]{figures/FanCrFree-addFourthNode}
	}
	\caption{All (topologically distinct) cases that preserve the fan-crossing free property,
		when adding a fourth node $w_4$ to the drawing from
		Fig.~\ref{subfig:fanCrossingFree:addThirdNode:CaseA}.}
	\label{fig:fanCrossingFree:addFourthNodeA}
\end{figure}

Now we consider Fig.~\ref{subfig:fanCrossingFree:addThirdNode:CaseB}.
In this drawing, the edge $(u_1,w_4)$ is not allowed to cross 
edges $(u_1,w_1)$, $(u_1,w_2)$, $(u_1,w_3)$, $(u_2,w_4)$ and $(u_2,w_1)$, but it
may cross at most one of the two edges $(u_2,w_2)$ or $(u_2,w_3)$. 
Further, edge $(u_2,w_4)$ is not allowed to cross $(u_2,w_1)$, $(u_2,w_2)$, 
$(u_2,w_3)$, $(u_1,w_4)$, and $(u_1,w_3)$, but it may cross at most one of the edges
$(u_1,w_1)$ or $(u_1,w_2)$. From these conditions we obtain the seven drawings in Fig.~\ref{subfig:fanCrossingFree:addFourthNode:CaseBA}--\ref{subfig:fanCrossingFree:addFourthNode:CaseBG}.

\begin{figure}[htpb]
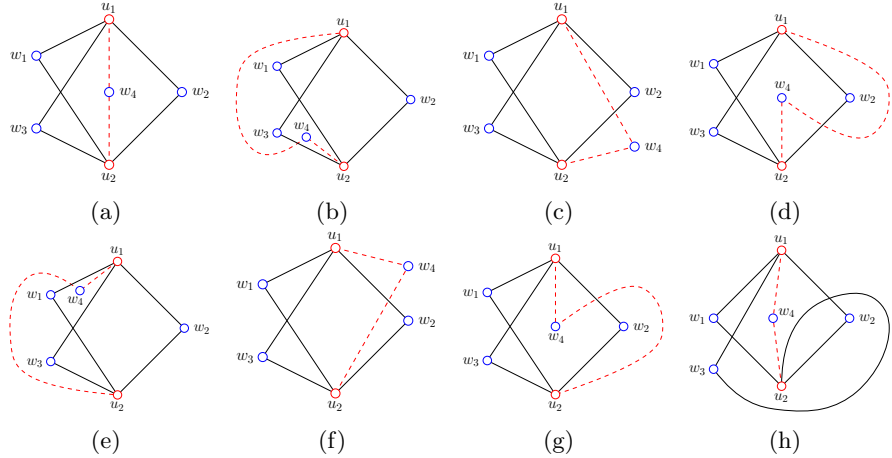

	\centering
	\subcaptionbox{\label{subfig:fanCrossingFree:addFourthNode:CaseBA}}{
		\includegraphics[page=4, width=0.22\textwidth]{figures/FanCrFree-addFourthNode}
	}
	\subcaptionbox{\label{subfig:fanCrossingFree:addFourthNode:CaseBB}}{
		\includegraphics[page=5, width=0.22\textwidth]{figures/FanCrFree-addFourthNode}
	}
	\subcaptionbox{\label{subfig:fanCrossingFree:addFourthNode:CaseBC}}{
		\includegraphics[page=6, width=0.22\textwidth]{figures/FanCrFree-addFourthNode}
	}
	\subcaptionbox{\label{subfig:fanCrossingFree:addFourthNode:CaseBD}}{
		\includegraphics[page=7, width=0.22\textwidth]{figures/FanCrFree-addFourthNode}
	}\\
	\subcaptionbox{\label{subfig:fanCrossingFree:addFourthNode:CaseBE}}{
		\includegraphics[page=8, width=0.22\textwidth]{figures/FanCrFree-addFourthNode}
	}
	\subcaptionbox{\label{subfig:fanCrossingFree:addFourthNode:CaseBF}}{
		\includegraphics[page=9, width=0.22\textwidth]{figures/FanCrFree-addFourthNode}
	}
	\subcaptionbox{\label{subfig:fanCrossingFree:addFourthNode:CaseBG}}{
		\includegraphics[page=10, width=0.22\textwidth]{figures/FanCrFree-addFourthNode}
	}
	\subcaptionbox{\label{fig:fanCrossingFree:addFourthNodeC}}{
		\includegraphics[page=11, width=0.22\textwidth]{figures/FanCrFree-addFourthNode}
	}
	\caption{(a)--(g) The cases that preserve the fan-crossing free property,
		when adding a fourth node $w_4$ to the drawing from
		Fig.~\ref{subfig:fanCrossingFree:addThirdNode:CaseB}.
		(h)~The only case that preserves the fan-crossing free property,
		when adding a fourth node $w_4$ to the drawing from
		Fig.~\ref{subfig:fanCrossingFree:addThirdNode:CaseC}.
	}
	\label{fig:fanCrossingFree:addFourthNodeB}
\end{figure}

In the last step we consider Fig.~\ref{subfig:fanCrossingFree:addThirdNode:CaseC}.
In this drawing, the edge $(u_1,w_4)$ is not allowed to cross edges $(u_1,w_1)$, $(u_1,w_2)$, $(u_1,w_3)$, $(u_2,w_4)$, $(u_2,w_1)$ and $(u_2,w_3)$, and so it may only cross $(u_2,w_2)$.
Further, the edge $(u_2,w_4)$ is not allowed to cross $(u_2,w_1)$, $(u_2,w_2)$, $(u_2,w_3)$,
$(u_1,w_4)$, $(u_1,w_2)$ and $(u_1,w_3)$, and so it may only cross $(u_1,w_1)$.
From these conditions we obtain only one single drawing, namely the one shown in
Fig.~\ref{fig:fanCrossingFree:addFourthNodeC}.

We observe that the configurations from the following figures are topologically equivalent:
\begin{itemize}
	\item Fig.\ref{subfig:fanCrossingFree:addFourthNode:CaseAB} and
	Fig.\ref{subfig:fanCrossingFree:addFourthNode:CaseBA},
	\item Fig.\ref{subfig:fanCrossingFree:addFourthNode:CaseAC},
	Fig.\ref{subfig:fanCrossingFree:addFourthNode:CaseBB},
	Fig.\ref{subfig:fanCrossingFree:addFourthNode:CaseBE} and
	Fig.\ref{fig:fanCrossingFree:addFourthNodeC},
	\item Fig.\ref{subfig:fanCrossingFree:addFourthNode:CaseBC} and
	Fig.\ref{subfig:fanCrossingFree:addFourthNode:CaseBF},
	\item Fig.\ref{subfig:fanCrossingFree:addFourthNode:CaseBD} and
	Fig.\ref{subfig:fanCrossingFree:addFourthNode:CaseBG}.
\end{itemize}
So there are basically five different configurations that we
have to consider, namely those from
Figs.~\ref{subfig:fanCrossingFree:addFourthNode:CaseAA},
\ref{subfig:fanCrossingFree:addFourthNode:CaseAB},
\ref{subfig:fanCrossingFree:addFourthNode:CaseAC},
\ref{subfig:fanCrossingFree:addFourthNode:CaseBC} and
\ref{subfig:fanCrossingFree:addFourthNode:CaseBD}.

\medskip

\noindent\textbf{Adding vertex $w_5$.} 
In the next step we use these five configurations to create all drawings with the
additional node $w_5$ and the edges $(w_5,u_1)$ and $(w_5,u_2)$.

First, we consider Fig.~\ref{subfig:fanCrossingFree:addFourthNode:CaseAA}.
In this drawing, the edge $(u_1,w_5)$ is not allowed to cross 
edges $(u_1,w_1)$, $(u_1,w_2)$, $(u_1,w_3)$, $(u_1,w_4)$ and $(u_2,w_5)$, but it
may cross at most one of the edges
$(u_2,w_1)$, $(u_2,w_2)$, $(u_2,w_3)$ or $(u_2,w_4)$. Similarly, the edge $(u_2,w_5)$
cannot cross $(u_2,w_1)$, $(u_2,w_2)$, $(u_2,w_3)$, $(u_2,w_4)$ and $(u_1,w_5)$, but
it may cross at most one of the edges $(u_1,w_1)$, $(u_1,w_2)$, $(u_1,w_3)$ or $(u_1,w_4)$.
From this, we obtain the three topologically different drawings in
Fig.~\ref{fig:fanCrossingFree:addFifthNodeAA}.

\begin{figure}[htpb]
	\centering
	\subcaptionbox{\label{subfig:fanCrossingFree:addFifthNode:CaseAAA}}{
		\includegraphics[page=1, width=0.22\textwidth]{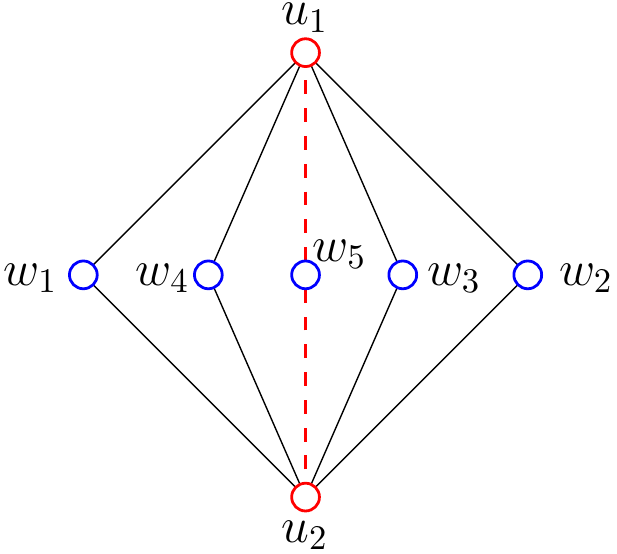}
	}
	\subcaptionbox{\label{subfig:fanCrossingFree:addFifthNode:CaseAAB}}{
		\includegraphics[page=2, width=0.22\textwidth]{figures/FanCrFree-addFifthNode}
	}
	\subcaptionbox{\label{subfig:fanCrossingFree:addFifthNode:CaseAAC}}{
		\includegraphics[page=3, width=0.22\textwidth]{figures/FanCrFree-addFifthNode}
	}
	\caption{All (topologically different) cases that preserve the fan-crossing free property,
		when adding a fifth node $w_5$ to the drawing from
		Fig.~\ref{subfig:fanCrossingFree:addFourthNode:CaseAA}.}
	\label{fig:fanCrossingFree:addFifthNodeAA}
\end{figure}

Second, we consider Fig.~\ref{subfig:fanCrossingFree:addFourthNode:CaseAB}.
In this drawing, the edge $(u_1,w_5)$ is not allowed to cross
edges $(u_1,w_1)$, $(u_1,w_2)$, $(u_1,w_3)$, $(u_1,w_4)$, $(u_2,w_5)$ and $(u_2,w_1)$, but
it may cross at most one of the edges
$(u_2,w_2)$, $(u_2,w_3)$ or $(u_2,w_4)$. Further, the edge $(u_2,w_5)$
cannot cross $(u_2,w_1)$, $(u_2,w_2)$, $(u_2,w_3)$, $(u_2,w_4)$, $(u_1,w_5)$ and $(u_1,w_4)$, but
it may cross at most one of the edges $(u_1,w_1)$, $(u_1,w_2)$ or $(u_1,w_3)$.
From this, we obtain the drawings in Fig.~\ref{fig:fanCrossingFree:addFifthNodeAB}.

\begin{figure}[tb]
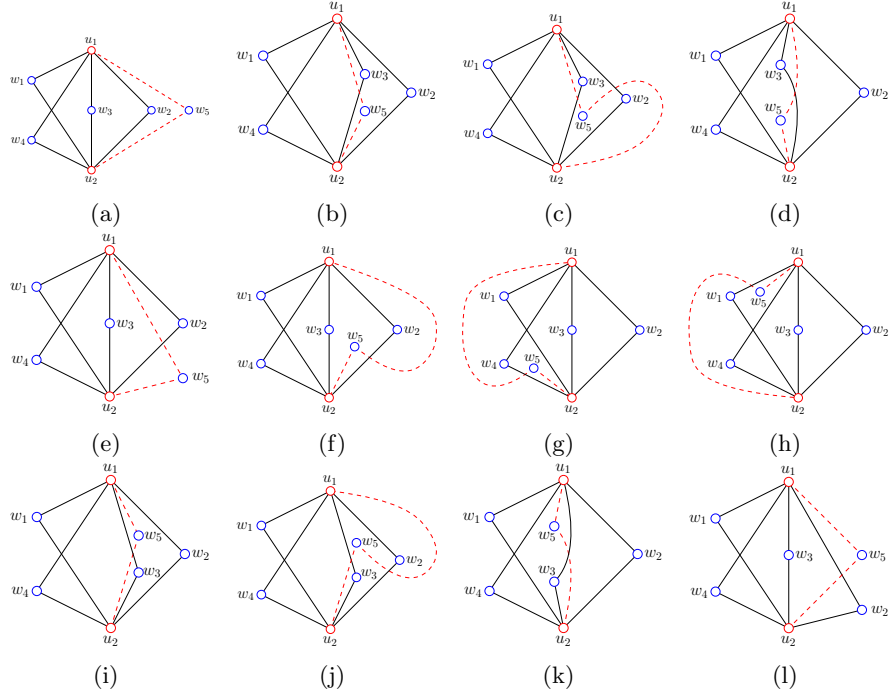

	\centering
	\subcaptionbox{\label{subfig:fanCrossingFree:addFifthNode:CaseABA}}{
		\includegraphics[page=4, width=0.22\textwidth]{figures/FanCrFree-addFifthNode}
	}
	\subcaptionbox{\label{subfig:fanCrossingFree:addFifthNode:CaseABB}}{
		\includegraphics[page=5, width=0.22\textwidth]{figures/FanCrFree-addFifthNode}
	}
	\subcaptionbox{\label{subfig:fanCrossingFree:addFifthNode:CaseABC}}{
		\includegraphics[page=6, width=0.22\textwidth]{figures/FanCrFree-addFifthNode}
	}
	\subcaptionbox{\label{subfig:fanCrossingFree:addFifthNode:CaseABD}}{
		\includegraphics[page=7, width=0.22\textwidth]{figures/FanCrFree-addFifthNode}
	}\\
	\subcaptionbox{\label{subfig:fanCrossingFree:addFifthNode:CaseABE}}{
		\includegraphics[page=8, width=0.22\textwidth]{figures/FanCrFree-addFifthNode}
	}
	\subcaptionbox{\label{subfig:fanCrossingFree:addFifthNode:CaseABF}}{
		\includegraphics[page=9, width=0.22\textwidth]{figures/FanCrFree-addFifthNode}
	}
	\subcaptionbox{\label{subfig:fanCrossingFree:addFifthNode:CaseABG}}{
		\includegraphics[page=10, width=0.22\textwidth]{figures/FanCrFree-addFifthNode}
	}
	\subcaptionbox{\label{subfig:fanCrossingFree:addFifthNode:CaseABH}}{
		\includegraphics[page=11, width=0.22\textwidth]{figures/FanCrFree-addFifthNode}
	}\\
	\subcaptionbox{\label{subfig:fanCrossingFree:addFifthNode:CaseABI}}{
		\includegraphics[page=12, width=0.22\textwidth]{figures/FanCrFree-addFifthNode}
	}
	\subcaptionbox{\label{subfig:fanCrossingFree:addFifthNode:CaseABJ}}{
		\includegraphics[page=13, width=0.22\textwidth]{figures/FanCrFree-addFifthNode}
	}
	\subcaptionbox{\label{subfig:fanCrossingFree:addFifthNode:CaseABK}}{
		\includegraphics[page=14, width=0.22\textwidth]{figures/FanCrFree-addFifthNode}
	}
	\subcaptionbox{\label{subfig:fanCrossingFree:addFifthNode:CaseABL}}{
		\includegraphics[page=15, width=0.22\textwidth]{figures/FanCrFree-addFifthNode}
	}
	\caption{All drawings that preserve the fan-crossing free property,
		when adding a fifth node $w_5$ to the drawing from
		Fig.~\ref{subfig:fanCrossingFree:addFourthNode:CaseAB}.}
	\label{fig:fanCrossingFree:addFifthNodeAB}
\end{figure}

Now we consider the drawing from Fig.~\ref{subfig:fanCrossingFree:addFourthNode:CaseAC}.
In this drawing, the edge $(u_1,w_5)$ is not allowed to cross 
edges $(u_1,w_1)$, $(u_1,w_2)$, $(u_1,w_3)$, $(u_1,w_4)$, $(u_2,w_5)$, $(u_2,w_1)$ and $(u_2,w_4)$, 
but it may cross at most one of the edges $(u_2,w_2)$ or $(u_2,w_3)$.
Further, the edge $(u_2,w_5)$ cannot cross $(u_2,w_1)$, $(u_2,w_2)$, $(u_2,w_3)$, $(u_2,w_4)$,
$(u_1,w_5)$, $(u_1,w_4)$ and $(u_1,w_2)$, but
it may cross at most one of the edges $(u_1,w_1)$ or $(u_1,w_3)$.
From this, we obtain the drawings in Fig.~\ref{fig:fanCrossingFree:addFifthNodeAC}.

\begin{figure}[htpb]
	\centering
	\subcaptionbox{\label{subfig:fanCrossingFree:addFifthNode:CaseACA}}{
		\includegraphics[page=16, width=0.22\textwidth]{figures/FanCrFree-addFifthNode}
	}
	\subcaptionbox{\label{subfig:fanCrossingFree:addFifthNode:CaseACB}}{
		\includegraphics[page=17, width=0.22\textwidth]{figures/FanCrFree-addFifthNode}
	}
	\subcaptionbox{\label{subfig:fanCrossingFree:addFifthNode:CaseACC}}{
		\includegraphics[page=18, width=0.22\textwidth]{figures/FanCrFree-addFifthNode}
	}\\
	\subcaptionbox{\label{subfig:fanCrossingFree:addFifthNode:CaseACD}}{
		\includegraphics[page=19, width=0.22\textwidth]{figures/FanCrFree-addFifthNode}
	}
	\subcaptionbox{\label{subfig:fanCrossingFree:addFifthNode:CaseACE}}{
		\includegraphics[page=20, width=0.22\textwidth]{figures/FanCrFree-addFifthNode}
	}
	\caption{All drawings that preserve the fan-crossing free property,
		when adding a fifth node $w_5$ to the drawing from
		Fig.~\ref{subfig:fanCrossingFree:addFourthNode:CaseAC}.}
	\label{fig:fanCrossingFree:addFifthNodeAC}
\end{figure}

Next, we consider the drawing from Fig.~\ref{subfig:fanCrossingFree:addFourthNode:CaseBC}.
In this drawing, the edge $(u_1,w_5)$ is not allowed to cross 
edges $(u_1,w_1)$, $(u_1,w_2)$, $(u_1,w_3)$, $(u_1,w_4)$, $(u_2,w_5)$, $(u_2,w_1)$ and $(u_2,w_2)$,
but it may cross at most one of the edges $(u_2,w_3)$ or $(u_2,w_4)$.
Further, the edge $(u_2,w_5)$ cannot cross $(u_2,w_1)$, $(u_2,w_2)$, $(u_2,w_3)$, $(u_2,w_4)$,
$(u_1,w_5)$, $(u_1,w_3)$ and $(u_1,w_4)$, but
it may cross at most one of the edges $(u_1,w_1)$ or $(u_1,w_2)$.
From this, we obtain the drawings in Fig.~\ref{fig:fanCrossingFree:addFifthNodeBC}.

\begin{figure}[htpb]
	\centering
	\subcaptionbox{\label{subfig:fanCrossingFree:addFifthNode:CaseBCA}}{
		\includegraphics[page=21, width=0.22\textwidth]{figures/FanCrFree-addFifthNode}
	}
	\subcaptionbox{\label{subfig:fanCrossingFree:addFifthNode:CaseBCB}}{
		\includegraphics[page=22, width=0.22\textwidth]{figures/FanCrFree-addFifthNode}
	}
	\subcaptionbox{\label{subfig:fanCrossingFree:addFifthNode:CaseBCC}}{
		\includegraphics[page=23, width=0.22\textwidth]{figures/FanCrFree-addFifthNode}
	}\\
	\subcaptionbox{\label{subfig:fanCrossingFree:addFifthNode:CaseBCD}}{
		\includegraphics[page=24, width=0.22\textwidth]{figures/FanCrFree-addFifthNode}
	}
	\subcaptionbox{\label{subfig:fanCrossingFree:addFifthNode:CaseBCE}}{
		\includegraphics[page=25, width=0.22\textwidth]{figures/FanCrFree-addFifthNode}
	}
	\subcaptionbox{\label{subfig:fanCrossingFree:addFifthNode:CaseBCF}}{
		\includegraphics[page=26, width=0.22\textwidth]{figures/FanCrFree-addFifthNode}
	}
	\caption{All drawings that preserve the fan-crossing free property,
		when adding a fifth vertex $w_5$ to the drawing from
		Fig.~\ref{subfig:fanCrossingFree:addFourthNode:CaseBC}.}
	\label{fig:fanCrossingFree:addFifthNodeBC}
\end{figure}

Finally, we consider the drawing from Fig.~\ref{subfig:fanCrossingFree:addFourthNode:CaseBD}.
In this drawing, the edge $(u_1,w_5)$ is not allowed to cross 
edges $(u_1,w_1)$, $(u_1,w_2)$, $(u_1,w_3)$, $(u_1,w_4)$, $(u_2,w_5)$, $(u_2,w_1)$ and $(u_2,w_2)$,
but it may cross at most one of the edges $(u_2,w_3)$ or $(u_2,w_4)$.
Further, the edge $(u_2,w_5)$ cannot cross $(u_2,w_1)$, $(u_2,w_2)$, $(u_2,w_3)$, $(u_2,w_4)$,
$(u_1,w_5)$, $(u_1,w_3)$ and $(u_1,w_4)$, but
it may cross at most one of the edges $(u_1,w_1)$ or $(u_1,w_2)$.
From this, we obtain the drawings in Fig.~\ref{fig:fanCrossingFree:addFifthNodeBD}.

\begin{figure}[htpb]
	\centering
	\subcaptionbox{\label{subfig:fanCrossingFree:addFifthNode:CaseBDA}}{
		\includegraphics[page=27, width=0.22\textwidth]{figures/FanCrFree-addFifthNode}
	}
	\subcaptionbox{\label{subfig:fanCrossingFree:addFifthNode:CaseBDB}}{
		\includegraphics[page=28, width=0.22\textwidth]{figures/FanCrFree-addFifthNode}
	}
	\subcaptionbox{\label{subfig:fanCrossingFree:addFifthNode:CaseBDC}}{
		\includegraphics[page=29, width=0.22\textwidth]{figures/FanCrFree-addFifthNode}
	}\\
	\subcaptionbox{\label{subfig:fanCrossingFree:addFifthNode:CaseBDD}}{
		\includegraphics[page=30, width=0.22\textwidth]{figures/FanCrFree-addFifthNode}
	}
	\subcaptionbox{\label{subfig:fanCrossingFree:addFifthNode:CaseBDE}}{
		\includegraphics[page=31, width=0.22\textwidth]{figures/FanCrFree-addFifthNode}
	}
	\subcaptionbox{\label{subfig:fanCrossingFree:addFifthNode:CaseBDF}}{
		\includegraphics[page=32, width=0.22\textwidth]{figures/FanCrFree-addFifthNode}
	}
	\caption{All drawings that preserve the fan-crossing free property,
		when adding a fifth vertex $w_5$ to the drawing from
		Fig.~\ref{subfig:fanCrossingFree:addFourthNode:CaseBD}.}
	\label{fig:fanCrossingFree:addFifthNodeBD}
\end{figure}

We conclude again by observing that the configurations from the following
figures are topologically equivalent:
\begin{itemize}
	
	\item Fig.~\ref{subfig:fanCrossingFree:addFifthNode:CaseAAB} and
	Fig.~\ref{subfig:fanCrossingFree:addFifthNode:CaseABA},
	
	\item Fig.~\ref{subfig:fanCrossingFree:addFifthNode:CaseAAC},
	Fig.~\ref{subfig:fanCrossingFree:addFifthNode:CaseABG},
	Fig.~\ref{subfig:fanCrossingFree:addFifthNode:CaseABH} and
	Fig.~\ref{subfig:fanCrossingFree:addFifthNode:CaseACA},
	
	\item Fig.~\ref{subfig:fanCrossingFree:addFifthNode:CaseABB},
	Fig.~\ref{subfig:fanCrossingFree:addFifthNode:CaseABI} and
	Fig.~\ref{subfig:fanCrossingFree:addFifthNode:CaseBCB}
	
	\item Fig.~\ref{subfig:fanCrossingFree:addFifthNode:CaseABC},
	Fig.~\ref{subfig:fanCrossingFree:addFifthNode:CaseABJ},
	Fig.~\ref{subfig:fanCrossingFree:addFifthNode:CaseACC},
	Fig.~\ref{subfig:fanCrossingFree:addFifthNode:CaseACD},
	Fig.~\ref{subfig:fanCrossingFree:addFifthNode:CaseBCC},
	Fig.~\ref{subfig:fanCrossingFree:addFifthNode:CaseBCD},
	Fig.~\ref{subfig:fanCrossingFree:addFifthNode:CaseBCE},
	Fig.~\ref{subfig:fanCrossingFree:addFifthNode:CaseBCF},
	Fig.~\ref{subfig:fanCrossingFree:addFifthNode:CaseBDD} and
	Fig.~\ref{subfig:fanCrossingFree:addFifthNode:CaseBDF}.
	
	\item Fig.~\ref{subfig:fanCrossingFree:addFifthNode:CaseABC},
	Fig.~\ref{subfig:fanCrossingFree:addFifthNode:CaseABF},
	Fig.~\ref{subfig:fanCrossingFree:addFifthNode:CaseABK},
	Fig.~\ref{subfig:fanCrossingFree:addFifthNode:CaseBDA} and
	Fig.~\ref{subfig:fanCrossingFree:addFifthNode:CaseBDB},
	
	\item Fig.~\ref{subfig:fanCrossingFree:addFifthNode:CaseABE},
	Fig.~\ref{subfig:fanCrossingFree:addFifthNode:CaseABL} and
	Fig.~\ref{subfig:fanCrossingFree:addFifthNode:CaseBCA}
	
	\item Fig.~\ref{subfig:fanCrossingFree:addFifthNode:CaseACE},
	Fig.~\ref{subfig:fanCrossingFree:addFifthNode:CaseBDC},
	Fig.~\ref{subfig:fanCrossingFree:addFifthNode:CaseBDE} and
	Fig.~\ref{subfig:fanCrossingFree:addFifthNode:CaseBDF}
	
\end{itemize}
Note that Fig.~\ref{subfig:fanCrossingFree:addFifthNode:CaseBDF}
appears twice in this list. The reason for this is to make it easy
to recognize the similarity of these two sets of drawings. In fact, we observe
that there are seven different configurations that we
have to consider in the future, namely the configurations from
Figs.~\ref{subfig:fanCrossingFree:addFifthNode:CaseAAA},
\ref{subfig:fanCrossingFree:addFifthNode:CaseAAB},
\ref{subfig:fanCrossingFree:addFifthNode:CaseAAC},
\ref{subfig:fanCrossingFree:addFifthNode:CaseABB},
\ref{subfig:fanCrossingFree:addFifthNode:CaseABC},
\ref{subfig:fanCrossingFree:addFifthNode:CaseABD}, and
\ref{subfig:fanCrossingFree:addFifthNode:CaseABE}.

\begin{figure}[htpb]
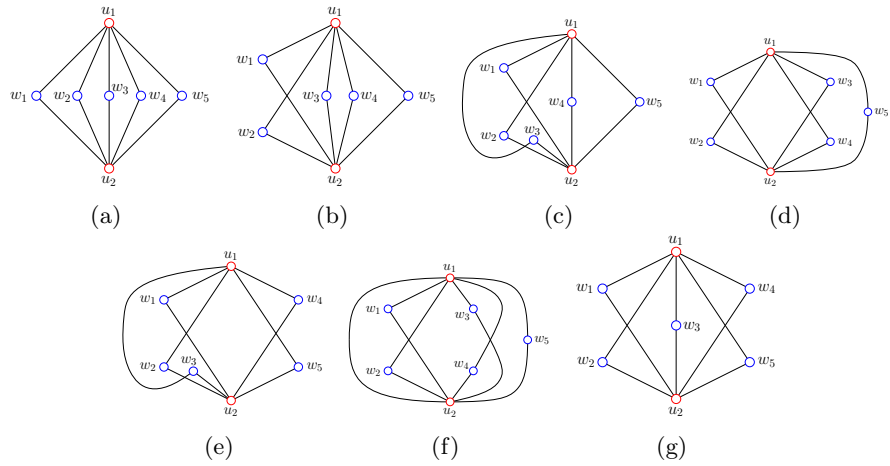

	\centering
	\subcaptionbox{\label{subfig:fanCrossingFree:addFifthNode:allCasesA}}{
		\includegraphics[page=33, width=0.22\textwidth]{figures/FanCrFree-addFifthNode}
	}
	\subcaptionbox{\label{subfig:fanCrossingFree:addFifthNode:allCasesB}}{
		\includegraphics[page=34, width=0.22\textwidth]{figures/FanCrFree-addFifthNode}
	}
	\subcaptionbox{\label{subfig:fanCrossingFree:addFifthNode:allCasesC}}{
		\includegraphics[page=35, width=0.22\textwidth]{figures/FanCrFree-addFifthNode}
	}
	\subcaptionbox{\label{subfig:fanCrossingFree:addFifthNode:allCasesD}}{
		\includegraphics[page=36, width=0.22\textwidth]{figures/FanCrFree-addFifthNode}
	}\\
	\subcaptionbox{\label{subfig:fanCrossingFree:addFifthNode:allCasesE}}{
		\includegraphics[page=37, width=0.22\textwidth]{figures/FanCrFree-addFifthNode}
	}
	\subcaptionbox{\label{subfig:fanCrossingFree:addFifthNode:allCasesF}}{
		\includegraphics[page=38, width=0.22\textwidth]{figures/FanCrFree-addFifthNode}
	}
	\subcaptionbox{\label{subfig:fanCrossingFree:addFifthNode:allCasesG}}{
		\includegraphics[page=39, width=0.22\textwidth]{figures/FanCrFree-addFifthNode}
	}
	\caption{All topologically different drawings of the subgraph $K_{2,5}$ that
		are fan-crossing free. Note that the nodes are relabelled in comparison to the
		Figures above.}
	\label{fig:fanCrossingFree:addFifthNode:AllCases}
\end{figure}

From the above discussion, it follows that any fan-crossing free drawing of $K_{2,5}$ is
topologically equivalent to one of the seven drawings described above, which are reported
again in Fig.~\ref{fig:fanCrossingFree:addFifthNode:AllCases} for the reader's convenience. 
We denote these drawings by $\Gamma_1,\ldots,\Gamma_7$.

Thus, in order to prove that neither $K_{3,7}$ nor $K_{5,5}$ is fan-crossing free,
we have to show that it is not possible to add the remaining vertices (one to 
$V_1$ and two to $V_2$ for $K_{3,7}$, and three to $V_2$ for $K_{5,5}$), without
violating the fan-crossing free property, to any of the seven drawings of $K_{2,5}$.
To make a systematic analysis of these cases, we will make use of the 
following lemmas.

\begin{lemma}\label{obs:1} 
	Consider the region $R$ bounded by a crossing-free edge $(u_1,w_i)$ and by 
	two crossing edges $(u_1,w_j)$, $(w_i,u_2)$, with $1 \leq i,j \leq 5$, 
	in a fan-crossing free drawing of $K_{a,b}$, with $a \geq 3$ and $b\geq 5$ (see Fig.~\ref{subfig:fanCrossingFree:observationA}). If at least four vertices of $V_2$ lie outside $R$, there is no vertex $u_h$, with $ 3 \leq h \leq a$, inside~$R$.
\end{lemma}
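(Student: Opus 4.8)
The plan is a short proof by contradiction combined with a counting argument that exploits the fan-crossing free restriction. Suppose some vertex $u_h$ with $3 \le h \le a$ lies in the interior of $R$. Since $G = K_{a,b}$ is complete bipartite and $u_h \in V_1$ is distinct from both $u_1$ and $u_2$ (as $h \ge 3$), it is joined by an edge to every vertex of $V_2$; in particular it is joined to each of the at least four vertices of $V_2$ lying outside $R$. The idea is that each such edge must leave $R$, and that fan-crossing freeness severely limits how many edges incident to a single vertex can do so.

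First I would pin down the combinatorial structure of $\partial R$: it is a closed Jordan curve consisting of three arcs --- the whole edge $(u_1,w_i)$, the portion of $(u_1,w_j)$ from $u_1$ to the crossing point $c = (u_1,w_j) \cap (w_i,u_2)$, and the portion of $(w_i,u_2)$ from $w_i$ to $c$ --- whose only ``corners'' are the real vertices $u_1$, $w_i$ and the crossing $c$. Then I would let $W' \subseteq V_2$ be the set of vertices outside $R$, so $|W'| \ge 4$ by hypothesis, and note that for each $w \in W'$ the edge $(u_h,w)$ runs from the interior of $R$ to a point outside, hence crosses $\partial R$. Simplicity forbids $(u_h,w)$ from passing through any vertex other than $u_h,w$ or through $c$, and the hypothesis that $(u_1,w_i)$ is crossing-free forbids it from crossing that arc; hence $(u_h,w)$ must properly cross one of the two edges $(u_1,w_j)$ or $(w_i,u_2)$.

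To finish, I would invoke fan-crossing freeness: the edges $\{(u_h,w) : w \in W'\}$ are pairwise distinct and all incident to $u_h$, hence pairwise adjacent, so at most one of them can cross $(u_1,w_j)$ and at most one can cross $(w_i,u_2)$ --- otherwise one of those two edges would be crossed by two edges of a fan centered at $u_h$. Thus at most two edges $(u_h,w)$ with $w \in W'$ can exist, contradicting $|W'| \ge 4$; this contradiction shows no such $u_h$ lies inside $R$.

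The step I expect to require the most care is the topological claim in the second paragraph: one must be sure that ``outside $R$'' is unambiguous, that the vertices of $W'$ genuinely lie in the complement of $\overline R$ (so $w_i \notin W'$, since it lies on $\partial R$), that $\partial R$ is exactly the union of those three arcs together with their three endpoints, and that no edge incident to $u_h$ can escape $R$ ``through a corner'' without actually crossing an edge. Once this is nailed down, the counting is immediate. A minor point worth verifying is that the relevant form of the fan-crossing free condition --- an edge may not be crossed by two edges sharing an endpoint --- is exactly what rules out two fan edges crossing the same boundary edge, independently of any directional refinements.
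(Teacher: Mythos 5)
Your argument is correct and uses the same two ingredients as the paper's one-line proof: every edge from $u_h$ to a vertex of $V_2$ strictly outside $R$ must cross one of the arcs bounding $R$, and fan-crossing freeness forbids two edges of the fan at $u_h$ from crossing the same boundary edge. The only substantive difference is where the crossing-free hypothesis on $(u_1,w_i)$ enters. You use it essentially: discarding $(u_1,w_i)$ as an escape route leaves two admissible boundary edges, hence at most two escaping fan edges, contradicting the four (indeed three) outside vertices. The paper never uses that hypothesis: it pigeonholes the at least four escaping edges over all three boundary edges, so some boundary edge is crossed by two adjacent edges, a fan-crossing. This is not just a stylistic point, because of how Lemma~\ref{obs:1} is applied later: for the drawings $\Gamma_1,\dots,\Gamma_7$ the edge $(u_1,w_i)$ is only known to be crossing-free within the $K_{2,5}$ subdrawing, while in the ambient fan-crossing free drawing of $K_{a,b}$ it may be crossed by edges not belonging to $K_{2,5}$, so the hypothesis you lean on is not available there in its literal form. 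Your proof does establish the lemma exactly as stated, and the repair needed to match the paper's use is immediate within your own framework -- readmit $(u_1,w_i)$ as a third possible escape route, which raises your bound from two to three escaping fan edges and is precisely the paper's pigeonhole -- but as written your version proves a slightly narrower statement than the one the case analysis for $K_{5,5}$ and $K_{3,7}$ actually consumes. Your attention to the topological fine print (the three-arc structure of $\partial R$, excluding passage through $u_1$, $w_i$, and the crossing point, and requiring the four vertices to lie off $\partial R$) is sound and is implicit, not spelled out, in the paper.
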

\begin{proof}
	If $R$ contained a vertex $u_h$, with $ 3 \leq h \leq a$, then at least one of the three 
	edges bounding $R$ would be crossed by at least two of the edges connecting $u_h$ to the 
	at least four vertices of $V_2$ outside $R$, hence creating a fan-crossing.
\end{proof}

\begin{figure}[htpb]
	\centering
	\subcaptionbox{\label{subfig:fanCrossingFree:observationA}}{
		\includegraphics[page=1, scale=0.6]{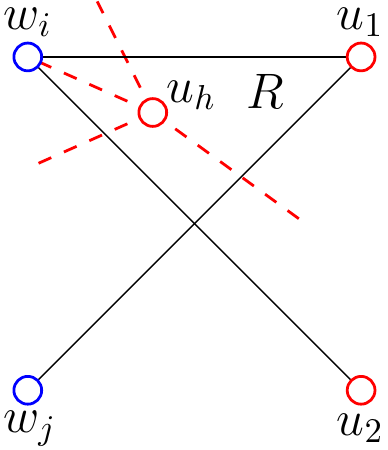}
	}\hfill
	\subcaptionbox{\label{subfig:fanCrossingFree:observationC}}{
		\includegraphics[page=3, scale=0.6]{figures/FanCrFree-observations}
	}\hfill
	\subcaptionbox{\label{subfig:fanCrossingFree:observationB}}{
		\includegraphics[page=2, scale=0.6]{figures/FanCrFree-observations}
	}
	\caption{Three observations for placing vertices from the set $\{u_3,u_4,u_5\}$.}
	\label{fig:fanCrossingFree:observation}
\end{figure}

\begin{lemma}\label{obs:3}  
	Consider the region $R_1$ bounded by two crossing-free edges $(u_1,w_k)$ and $(u_2,w_k)$, 
	and by two crossing edges $(u_1,w_j)$ and $(w_i,u_2)$, with $1 \leq i,j,k \leq 5$, 
	in a fan-crossing free drawing of $K_{a,b}$, with $a \geq 3$ and $b\geq 5$ (see 
	Fig.~\ref{subfig:fanCrossingFree:observationC}--\ref{subfig:fanCrossingFree:observationB}).
	Also, consider the two regions $R_2$ and $R_3$ bounded by $(u_1,w_i)$, $(u_1,w_j)$, 
	and $(u_2,w_i)$, and bounded by $(u_2,w_i)$, $(u_2,w_j)$, and $(u_1,w_j)$, respectively.
	
	Then, there is at most one vertex $u_h$, with $3 \leq h \leq a$, in $R_1$. 
	Also, if $u_h$ lies in $R$, then there exist at most two vertices of $V_2 \setminus \{w_i,w_j,w_k\}$
	outside $R_1$; one of these vertices lies in $R_2$ and the other in $R_3$.
\end{lemma}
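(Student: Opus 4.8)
The plan is to argue as in the proof of Lemma~\ref{obs:1}: whenever a vertex $u_h$ with $3\le h\le a$ sits inside a bounded region, one counts the edges of $u_h$ forced to cross that region's boundary and invokes the fan-crossing free condition — no edge may be crossed by two edges sharing an endpoint. To this end, fix the local picture: let $x$ be the crossing of $(u_1,w_j)$ and $(w_i,u_2)$, so that $\partial R_1$ is formed by the crossing-free edges $(u_1,w_k),(u_2,w_k)$ together with the arc $\alpha\subset(u_1,w_j)$ from $u_1$ to $x$ and the arc $\beta\subset(w_i,u_2)$ from $x$ to $u_2$; note that $w_i,w_j$ lie outside $R_1$ (they are the apices of $R_2$ and $R_3$), $w_k\in\partial R_1$, and $\alpha$ is shared by $R_1$ and $R_2$, $\beta$ by $R_1$ and $R_3$ (see Figs.~\ref{subfig:fanCrossingFree:observationC}--\ref{subfig:fanCrossingFree:observationB}). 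The observation used throughout is: if $u_h$ is interior to $R_1$ and $w$ is outside $R_1$, then $(u_h,w)$ crosses $\partial R_1$, crosses each of its four arcs at most once, and, when $w\in\{w_i,w_j\}$, cannot cross $\alpha$ or $\beta$ — indeed $(u_h,w_i)$ is adjacent to $(w_i,u_2)$ so cannot cross $\beta$, and if it crossed $\alpha$ then $(u_1,w_j)$ would be crossed by $(u_h,w_i)$ and $(w_i,u_2)$, which share $w_i$; symmetrically for $(u_h,w_j)$. Hence $(u_h,w_i)$ and $(u_h,w_j)$ each cross one of $(u_1,w_k),(u_2,w_k)$, and they cross different ones, since two edges of $u_h$ on the same arc would form a fan-crossing.

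For the first claim I would suppose $u_h$ and $u_{h'}$ both lie interior to $R_1$. As $(u_h,w_i),(u_{h'},w_i)$ share $w_i$ they cross distinct arcs among $(u_1,w_k),(u_2,w_k)$, and likewise $(u_h,w_j),(u_{h'},w_j)$; a short check thus forces, up to renaming, that $(u_1,w_k)$ is crossed by $(u_h,w_i)$ and $(u_{h'},w_j)$ while $(u_2,w_k)$ is crossed by $(u_h,w_j)$ and $(u_{h'},w_i)$, so that no further edge incident to $u_h$ or $u_{h'}$ may cross either of $(u_1,w_k),(u_2,w_k)$. Removing from $R_1$ the three legs of $u_h$ that reach $\partial R_1$ (at a point of $(u_1,w_k)$, at $w_k$, and at a point of $(u_2,w_k)$) cuts it into cells; locating $u_{h'}$ in one of them, a case analysis over its position shows that at least one of the two legs of $u_{h'}$ that must reach $(u_1,w_k)$ or $(u_2,w_k)$ is blocked — it can only cross one of those arcs a second time, or cross $(u_h,w_i)$, $(u_h,w_j)$ or $(u_h,w_k)$, in a way that creates a fan-crossing. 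This contradiction yields the first claim.

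For the second claim, assume $u_h\in R_1$, so that $(u_1,w_k)$ and $(u_2,w_k)$ are already crossed by $(u_h,w_i)$ and $(u_h,w_j)$. Any further edge $(u_h,w_\ell)$ with $w_\ell$ outside $R_1$ and $\ell\notin\{i,j,k\}$ then cannot cross $(u_1,w_k)$ or $(u_2,w_k)$ (a second edge of $u_h$ on either arc is a fan-crossing) and must leave $R_1$ through $\alpha$ or through $\beta$; at most one such edge uses $\alpha$ and at most one uses $\beta$ (again by the fan-crossing rule at $u_h$), so at most two vertices of $V_2\setminus\{w_i,w_j,w_k\}$ lie outside $R_1$. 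Finally, an edge leaving through $\alpha$ enters the triangle $R_2$; having already been crossed there by $(u_1,w_j)$ it cannot also be crossed by $(u_1,w_i)$, and tracking it across $R_2$ one concludes its endpoint lies in $R_2$; symmetrically an edge leaving through $\beta$ ends at a vertex in $R_3$; hence the outside vertices are distributed one in $R_2$ and one in $R_3$.

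The main obstacle is the bookkeeping inside the first claim: one must make the case distinction on the position of $u_{h'}$ relative to the legs of $u_h$ genuinely exhaustive, and throughout both claims one must check that every asserted forbidden crossing is actually ruled out by adjacency, by simplicity, or by the fan-crossing rule, and that the asserted routings respect the cyclic order of edges around the relevant vertices. As elsewhere in Appendix~\ref{app:bipartite:fcf}, I would present this with one figure per case.
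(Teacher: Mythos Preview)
Your overall plan is sound and close to the paper's, but you miss the key shortcut that makes the first claim almost immediate, and there is a small gap in the second claim.

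\textbf{First claim.} You correctly argue that $(u_h,w_i)$ and $(u_h,w_j)$ must each cross one of $(u_1,w_k),(u_2,w_k)$, and different ones. But you then treat this assignment as symmetric (``up to renaming''), which forces you into the promised case analysis on the position of $u_{h'}$. The paper observes more: the assignment is \emph{not} symmetric. If $(u_h,w_i)$ left $R_1$ through $(u_2,w_k)$ while $(u_h,w_j)$ left through $(u_1,w_k)$, then inside $R_1$ these two curves would have to cross each other (they are adjacent, so this is forbidden). Hence $(u_h,w_i)$ is forced to cross $(u_1,w_k)$ and $(u_h,w_j)$ to cross $(u_2,w_k)$. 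The same forcing applies to $u_{h'}$, so $(u_{h'},w_i)$ also crosses $(u_1,w_k)$, and $(u_1,w_k)$ is then crossed by the fan $(u_h,w_i),(u_{h'},w_i)$ --- done. In particular, the ``swapped'' configuration you set out to analyse (with $(u_{h'},w_i)$ through $(u_2,w_k)$ and $(u_{h'},w_j)$ through $(u_1,w_k)$) is already impossible for this reason, so your case analysis, while not wrong, is superfluous.

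\textbf{Second claim.} Your count ``at most two'' is fine. The gap is in the localisation: after $(u_h,w_\ell)$ leaves $R_1$ through $\alpha$ and enters $R_2$, you rule out that it exits $R_2$ through $(u_1,w_i)$, but you do not rule out that it exits through the arc of $(u_2,w_i)$ between $w_i$ and $x$. That exit is in fact allowed --- neither $(u_2,w_i)$ nor $(u_h,w_\ell)$ acquires a fan-crossing --- so $w_\ell$ need not lie in $R_2$. The paper handles this by noting that in this case $(u_h,w_\ell)$ crosses both $(u_1,w_j)$ and $(u_2,w_i)$; then no second edge $(u_h,w_{\ell'})$ can leave $R_1$ at all (all four boundary arcs are now crossed by an edge of $u_h$), so there is only one outside vertex. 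Thus the statement ``one in $R_2$ and the other in $R_3$'' applies precisely to the case where two outside vertices exist; you should add this case distinction.
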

\begin{proof}
	Suppose that a vertex $u_h$, with $3 \leq h \leq a$, lies in $R$. 
	We first claim that edges $(u_h,w_i)$ and $(u_h,w_j)$ cross edges $(u_1,w_k)$ 
	and $(u_2,w_k)$, respectively; see Fig.~\ref{subfig:fanCrossingFree:observationC}. 
	Namely, edge $(u_h,w_i)$ cannot cross $(u_1,w_j)$, 
	as this edge already crosses $(u_2,w_i)$. 
	Also, edge $(u_h,w_j)$ cannot cross $(u_2,w_i)$, since this edge already 
	crosses $(u_1,w_j)$. Finally, if $(u_h,w_i)$ crosses $(u_2,w_k)$, and $(u_h,w_j)$ 
	crosses $(u_1,w_k)$, then $(u_h,w_i)$ and $(u_h,w_j)$ cross each other, which is not allowed. 
	The claim follows. 
	
	We now argue that there exists no vertex $u_z$, with $3 \leq z \neq h \leq a$, in $R_1$.
	In fact, in this case, $(u_z,w_i)$ and $(u_z,w_j)$ would have to cross edges $(u_1,w_k)$
	and $(u_2,w_k)$, respectively, for the same reasons as above. Hence, both $(u_1,w_k)$ 
	and $(u_2,w_k)$ would cross fans incident to $w_i$ and $w_j$, respectively.
	
	We conclude the proof by considering the possible placement of a vertex 
	$w_x \in V_2 \setminus \{w_i,w_j,w_k\}$ outside $R_1$. Refer to Fig.~\ref{subfig:fanCrossingFree:observationB}.
	If $w_x$ lies neither in  $R_2$ nor in $R_3$, then the only possibility to connect it 
	to $u_h$ is to cross 
	both edges $(u_1,w_j)$ and $(u_2,w_i)$; in this case, however, it is not possible to place
	any other vertex of $V_2 \setminus \{w_i,w_j,w_k\}$ outside $R_1$, as this would require 
	an additional crossing between one of the edges bounding $R_1$ and an edge incident to $u_h$,
	which would create a fan-crossing. On the other hand, if $w_x$ lies in $R_2$, we can draw 
	$(u_h,w_x)$ by crossing only $(u_1,w_j)$, which still leaves the option to place an additional 
	vertex $w_y \in V_2 \setminus \{w_i,w_j,w_k,w_x\}$ in $R_3$, and draw $(u_h,w_y)$ by crossing only
	$(u_2,w_i)$. However, once this edge has been drawn, we cannot add any additional vertex of $V_2$
	outside $R_1$, and the statement follows.
\end{proof}



We show how to exploit these lemmas to complete the proof of Characterization~\ref{th:bipartite:fcf}.

\section{Graph $K_{5,5}$ is not fan-crossing free}\label{se:k55}

We consider the seven drawings $\Gamma_1,\dots,\Gamma_7$ of $K_{2,5}$ and show that none
of them can be extended to a fan-crossing free drawing of $K_{5,5}$.

\medskip

\noindent\textbf{The drawing $\Gamma_1$.} 
In drawing $\Gamma_1$ we have five topologically similar regions, denoted by 
$R_1,\ldots,R_5$ as in Fig.~\ref{fig:fanCrossingFree:drawingA}.
\begin{figure}[tb]
	\centering
	\includegraphics[page=1, width=0.5\textwidth]{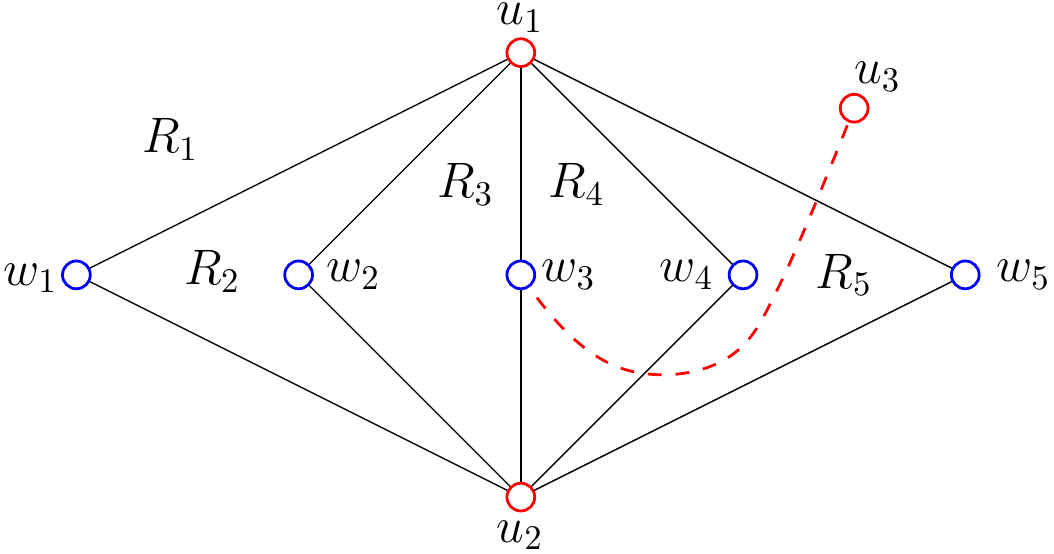}
	\caption{The regions in $\Gamma_1$. The dashed red line represents the edge $(u_3,w_3)$.}
	\label{fig:fanCrossingFree:drawingA}
\end{figure}

\begin{figure}[b!]
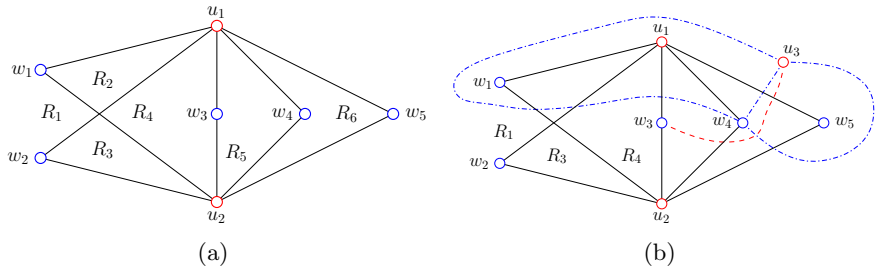

	\centering
	\hfill~
	\subcaptionbox{\label{fig:fanCrossingFree:drawingB}}{
		\includegraphics[page=2, width=0.46\textwidth]{figures/FanCrFree-contradictionK55}
	}
	\hfill
	\subcaptionbox{\label{fig:fanCrossingFree:drawingBB}}{
		\includegraphics[page=3, width=0.46\textwidth]{figures/FanCrFree-contradictionK55}
	}
	\hfill~
	\caption{
		(a)~The regions in $\Gamma_2$.
		(b)~Node $u_3$ is placed in region $R_1$ and the edge $(u_3,w_3)$ (dashed red)
		crosses $(u_1,w_5)$	and $(u_2,w_4)$. The dashed dotted blue lines show the edge $(u_3,w_4)$
		that cannot be drawn fan-crossing free in this setting.}
	\label{fig:drawings-2}
\end{figure}
We assume w.l.o.g.~that $u_3$ is in the region $R_1$. The edge $(u_3,w_3)$ can only be realized
by crossing exactly one edge incident to $u_1$, say $(u_1,w_5)$, and exactly one edge incident 
to $u_2$, say $(u_2,w_4)$ (see dashed red edge in Fig.~\ref{fig:fanCrossingFree:drawingA}). 
Now the edge $(u_3,w_4)$ is not realizable. Namely, we are not allowed to draw this edge 
through the regions $R_2$, $R_3$ and $R_4$, since in this case it would cross at least two edges incident to $u_1$ or at least two edges incident to $u_2$ -- a fan-crossing. Also, it cannot cross 
the edge $(u_1,w_5$), as this would also create a fan-crossing. The only option left is to cross 
the edge $(u_2,w_5)$; however, this implies that $(u_3,w_4)$ crosses $(u_3,w_3)$, which is not allowed.
Hence, $\Gamma_1$ cannot be a subdrawing of a fan-crossing free drawing of $K_{5,5}$.

\medskip

\noindent\textbf{The drawing $\Gamma_2$.} 
In drawing $\Gamma_2$ we have six regions, denoted by $R_1,\ldots,R_6$ as in Fig.~\ref{fig:fanCrossingFree:drawingB}.
We observe some consequences of the previous lemmas:
\begin{inparaenum}[(i)]
	\item vertices $u_3,u_4,u_5$ can lie neither in $R_2,R_3$ (Lemma~\ref{obs:1}) nor in $R_4$ (Lemma~\ref{obs:3});
	\item vertices $u_3,u_4,u_5$ cannot lie in $R_5$, since the edge connecting one of them 
	to $w_1$ would cross both edges $(u_1,w_4)$ and $(u_1,w_5)$ (Lemma~\ref{obs:3});
	\item at most one of vertices $u_3,u_4,u_5$ can lie in $R_6$ (Lemma~\ref{obs:3}).
\end{inparaenum}

By the previous analysis, at least two vertices, say $u_3$ and $u_4$, lie in region $R_1$. 
At most one of the edges $(u_3,w_3)$ and $(u_4,w_3)$ can cross $(u_1,w_2)$ and $(u_2,w_1)$.
Thus, at least one of these edges has to cross either $(u_1,w_4)$ and $(u_2,w_5)$,
or $(u_1,w_5)$ and $(u_2,w_4)$, assume w.l.o.g.~the latter pair; see Fig.~\ref{fig:fanCrossingFree:drawingBB}. However, this implies that it is no longer possible
to draw the edge $(u_3,w_4)$ fan-crossing free.
Hence, $\Gamma_2$ cannot be a subdrawing of a fan-crossing free drawing of $K_{5,5}$.

\medskip

\noindent\textbf{The drawing $\Gamma_3$.} 
In drawing $\Gamma_3$ we have seven regions, denoted by $R_1,\ldots,R_7$ as in Fig.~\ref{fig:fanCrossingFree:drawingC}.
\begin{figure}[tb]
	\centering
	\includegraphics[page=5, width=0.5\textwidth]{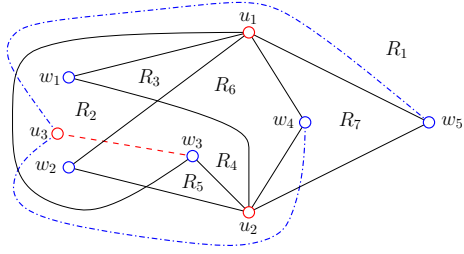}
	\caption{The regions in $\Gamma_3$. The dashed dotted blue lines indicate the two 
		edges $(u_3,w_4)$ and $(u_3,w_5)$ that are not drawable without fan-crossings
		when adding vertex $u_3$ in $R_2$.}
	\label{fig:fanCrossingFree:drawingC}
\end{figure}
We observe some consequences of the previous lemmas:
\begin{inparaenum}[(i)]
	\item vertices $u_3,u_4,u_5$ can lie neither in $R_3$ nor in $R_5$ (Lemma~\ref{obs:1});
	\item vertices $u_3,u_4,u_5$ cannot lie in $R_6$, since the edge connecting one of them 
	to $w_1$ would cross both edges $(u_1,w_4)$ and $(u_1,w_5)$ (Lemma~\ref{obs:3});
	\item vertices $u_3,u_4,u_5$ cannot lie in $R_7$, since the edges connecting one of them 
	to $w_1$ and $w_2$ would both cross $(u_1,w_3)$ (Lemma~\ref{obs:3}).
\end{inparaenum}
Since $u_3,u_4,u_5$ are all connected to $w_1$ and $w_2$, at least one of them, say $u_3$, must 
lie in region $R_2$. First we consider the edge $(u_3,w_3)$. This edge can neither
cross $(u_1,w_3)$, nor $(u_2,w_2)$, nor both $(u_1,w_1)$ and $(u_1,w_2)$, nor both
$(w_1,u_1)$ and $(w_1,u_2)$. So, the only option for edge $(u_3,w_3)$ is to cross $(u_1,w_2)$
(see Fig.~\ref{fig:fanCrossingFree:drawingC}).
However, since the edges $(u_3,w_4)$ and $(u_3,w_5)$ can cross neither $(u_1,w_2)$ nor $(u_3,w_3)$,
they both cross $(u_1,w_3)$, making a fan-crossing. 
Hence, $\Gamma_3$ cannot be a subdrawing of a fan-crossing free drawing of $K_{5,5}$.

\medskip

\noindent\textbf{The drawing $\Gamma_4$.} 
In drawing $\Gamma_4$ we have seven regions, denoted by $R_1,\ldots,R_7$, as in Fig.~\ref{fig:fanCrossingFree:drawingD}.
\begin{figure}[tb]
	\centering
	\includegraphics[page=6, width=0.5\textwidth]{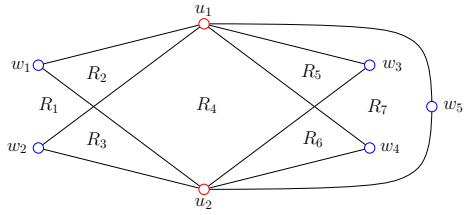}
	\caption{The regions in $\Gamma_4$.}
	\label{fig:fanCrossingFree:drawingD}
\end{figure}
By the lemmas above we conclude that vertices $u_3,u_4,u_5$ can lie neither in
$R_2,R_3,R_5,R_6$ (Lemma~\ref{obs:1}) nor in $R_4$ (Lemma~\ref{obs:3}).
Also, at most one of them can lie in $R_1$ and $R_7$ (Lemma~\ref{obs:3}).
So we cannot place all of the three nodes $u_3,u_4,u_5$ together with their edges
and obtain a fan-crossing free drawing. Hence, $\Gamma_4$ cannot be a subdrawing 
of a fan-crossing free drawing of $K_{5,5}$.

\medskip

\noindent\textbf{The drawing $\Gamma_5$.} 
In drawing $\Gamma_5$ we have eight regions,
denoted by $R_1,\ldots,R_8$ as in Fig.~\ref{fig:fanCrossingFree:drawingE}.
\begin{figure}[tb]
	\centering
	\includegraphics[page=7, width=0.5\textwidth]{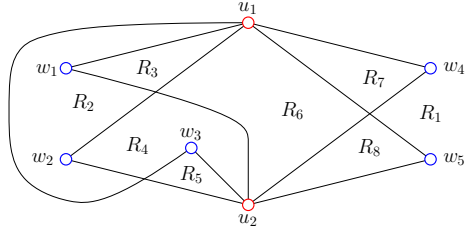}
	\caption{The regions in $\Gamma_5$.}
	\label{fig:fanCrossingFree:drawingE}
\end{figure}
The lemmas above imply that:
\begin{inparaenum}[(i)]
	\item vertices $u_3,u_4,u_5$ can lie neither in $R_3,R_5,R_7,R_8$ (Lemma~\ref{obs:1})
	nor in $R_6$ (Lemma~\ref{obs:1});
	\item vertices $u_3,u_4,u_5$ cannot lie in $R_4$, since the edges connecting one of them, 
	say $u_3$, to $w_4$ and $w_5$ would cross $(u_1,w_2)$ and $(u_2,w_2)$, respectively. 
	But then edge $(u_3,w_4)$ is not allowed to cross $(u_1,w_3)$ and $(u_2,w_2)$, and so this edge
	is not drawable at all;
	\item vertices $u_3,u_4,u_5$ cannot lie in $R_2$, since the edges connecting one of them, 
	say $u_3$, to $w_4$ and $w_5$ would cross $(u_1,w_3)$ and $(u_2,w_3)$, respectively. But then edge 
	$(u_3,w_5)$ is not allowed to cross $(u_1,w_3)$ and $(u_2,w_2)$, and so this edge
	is not drawable at all.
\end{inparaenum}
Thus, the only option left is that all the vertices $u_3,u_4$, and $u_5$ are in $R_1$.
However, edges $(u_3,w_3)$, $(u_4,w_3)$ and $(u_5,w_3)$ are neither allowed
to cross $(u_2,w_2)$, nor to cross both $(u_1,w_3)$ and $(u_1,w_2)$,
nor to cross both $(u_2,w_4)$ and $(u_2,w_1)$,
nor to cross both $(u_1,w_4)$ and $(u_1,w_5)$, and so they cannot be drawn.
Hence, $\Gamma_5$ cannot be a subdrawing of a fan-crossing free drawing of $K_{5,5}$.

\medskip

\noindent\textbf{The drawing $\Gamma_6$.}
In drawing $\Gamma_6$, we have seven regions,
denoted by $R_1,\ldots,R_7$ as in Fig.~\ref{fig:fanCrossingFree:drawingF}.
\begin{figure}[b]
	\centering
	\includegraphics[page=8, width=0.5\textwidth]{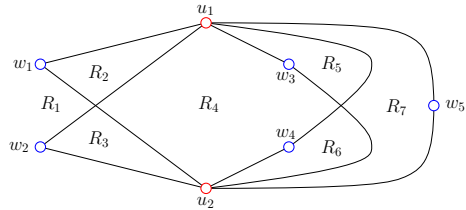}
	\caption{The regions in $\Gamma_6$.}
	\label{fig:fanCrossingFree:drawingF}
\end{figure}
The lemmas above imply that:
\begin{inparaenum}[(i)]
	\item vertices $u_3,u_4,u_5$ can lie neither in regions $R_2,R_3,R_5,R_6$ (Lemma~\ref{obs:1}) nor in region $R_7$ (Lemma~\ref{obs:3});
	\item vertices $u_3,u_4,u_5$ cannot lie in $R_4$, since the edge connecting one of them 
	to $w_1$ (to $w_2$) would cross both edges $(u_1,w_3)$ and $(u_1,w_4)$ (both edges 
	$(u_2,w_3)$ and $(u_2,w_4)$), hence creating a fan-crossing.
\end{inparaenum}
The only option left is that all vertices $u_3,u_4$ and $u_5$ are in $R_1$.
However, the edge connecting one of them to $w_3$ can neither cross $(u_1,w_4)$, 
nor $(u_2,w_3)$, nor the pair $(u_1,w_2)$ and $(u_1,w_1)$, nor the pair $(u_2,w_1)$ and $(u_2,w_2)$.
So the only option for this edge is to cross the edges $(u_1,w_2)$ and $(u_2,w_1)$. 
However, this cannot be done for all the three vertices.
Hence, $\Gamma_6$ cannot be a subdrawing of a fan-crossing free drawing of $K_{5,5}$.

\medskip

\noindent\textbf{The drawing $\Gamma_7$.}
In drawing $\Gamma_7$, we have seven regions,
denoted by $R_1,\ldots,R_7$ as in Fig.~\ref{fig:fanCrossingFree:drawingG}.
\begin{figure}[tb]
	\centering
	\includegraphics[page=9, width=0.5\textwidth]{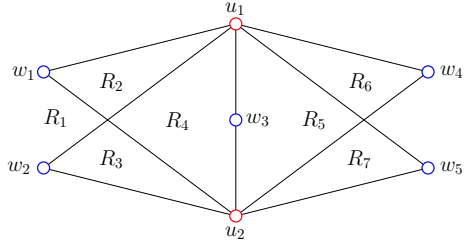}
	\caption{The regions in $\Gamma_7$.}
	\label{fig:fanCrossingFree:drawingG}
\end{figure}
By the lemmas above we get that vertices $u_3,u_4,u_5$ can lie neither in regions $R_2,R_3,R_6,R_7$ (Lemma~\ref{obs:1}) nor in regions $R_4,R_5$ (Lemma~\ref{obs:3}). 
Thus, the only option left is that all of them are in $R_1$.
However, the edge connecting one of them to $w_3$ can cross neither both edges 
$(u_1,w_2)$ and $(u_1,w_1)$, nor both edges $(u_2,w_1)$ and $(u_2,w_2)$, not both
edges $(u_1,w_5)$ and $(u_1,w_4)$, nor both edges $(u_2,w_4)$ and $(u_2,w_5)$.
Hence each of such edges must cross either both edges
$(u_1,w_2)$ and $(u_2,w_1)$, or both edges $(u_1,w_5)$ and $(u_2,w_4)$.
Since there are three such edges, there must be a fan-crossing.
Hence, $\Gamma_7$ cannot be a subdrawing of a fan-crossing free drawing of $K_{5,5}$.

So we can conclude that there is no fan-crossing free drawing of $K_{5,5}$.

\section{Graph $K_{3,7}$ is not fan-crossing free}

We now turn our attention to graph $K_{3,7}$. As in Section~\ref{se:k55}, we consider each of the seven drawings $\Gamma_1,\ldots,\Gamma_7$ of $K_{2,5}$ separately and try to add more vertices to them without violating the fan-crossing free property.

\medskip

\noindent\textbf{The drawing $\Gamma_1$.}
We already proved that in the presence of vertex $u_3$, drawing 
$\Gamma_1$ cannot be a subdrawing of a fan-crossing free drawing of $K_{3,5}$.

\medskip

\noindent\textbf{The drawing $\Gamma_2$.}
In drawing $\Gamma_2$, we have six regions $R_1,\ldots,R_6$ as in 
Fig.~\ref{fig:fanCrossingFree:drawingB}.
As in the corresponding case of Section~\ref{se:k55}, vertex $u_3$ can lie neither in $R_2,R_3$ (Lemma~\ref{obs:1}), nor in $R_4$ (Lemma~\ref{obs:3}), nor in $R_5$.

First, we assume that $u_3$ is in the region $R_1$. 
Then, similarly to the corresponding case of Section~\ref{se:k55} for $K_{5,5}$, 
it is not possible to draw $(u_3,w_3)$ through the regions $R_5$ and $R_6$. 
So, $(u_3,w_3)$ must cross $(u_1,w_2)$ and $(u_2,w_1)$; also, the edge $(u_3,w_4)$ must cross 
one of $(u_1,w_5)$ or $(u_2,w_5)$, say the former; see Fig.~\ref{fig:fanCrossingFree:general:drawingBB}, in which the regions $R'_1$ and $R'_6$ 
are the faces delimited by the black and red lines in the plane graph.

\begin{figure}[tb]
	\centering
	 \subcaptionbox{\label{fig:fanCrossingFree:general:drawingBB}}{
	 \includegraphics[page=1, width=0.49\textwidth]{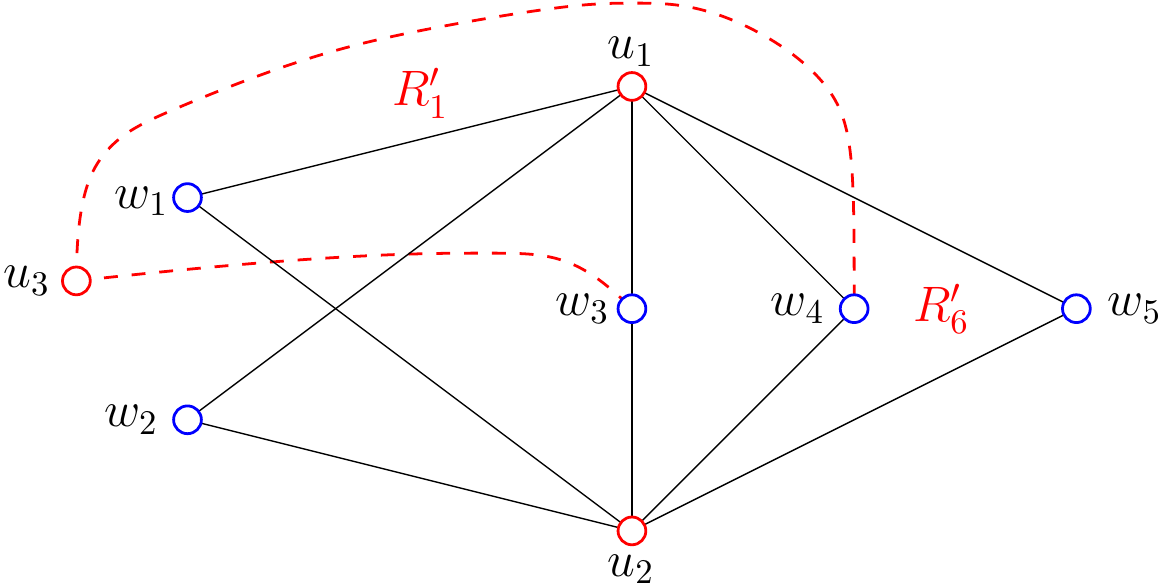}}
	 \hfil
	 \subcaptionbox{\label{fig:fanCrossingFree:general:drawingBC}}{
	 \includegraphics[page=2, width=0.49\textwidth]{figures/FanCrFree-contradictionK47}}
	\caption{
	(a)~$u_3$ is in $R_1$; 
	$(u_3,w_3)$ crosses $(u_1,w_2)$ and $(u_2,w_1)$; 
	$(u_3,w_4)$ crosses $(u_1,w_5)$, and 
	(b)~$w_6$ is in $R'_6$; 
	$(u_3,w_6)$ crosses $(u_2,w_5)$;
	$(u_1,w_6)$ crosses $(u_2,w_4)$.}
\end{figure}

Now, consider vertex $w_6\in V_2$. This vertex cannot be in region $R_2$. Indeed, if $w_6$ were in
$R_2$, then both edges $(u_2,w_6)$ and $(u_3,w_6)$ must cross $(u_1,w_1)$, since they are allowed 
to cross neither $(u_2,w_1)$ nor $(u_1,w_2)$; this yields a fan-crossing. Also, vertex $w_6$ cannot 
be in $R_3$, due to a similar observation. If $w_6$ were in $R_3$, then both edges $(u_1,w_6)$ and 
$(u_3,w_6)$ must cross $(u_2,w_2)$, since they are allowed to cross neither $(u_2,w_1)$ nor 
$(u_1,w_2)$, which again yields a fan-crossing.
Finally, $w_6$ cannot be in one of the regions $R_4$ or $R_5$, as otherwise the edge from $u_3$ to 
$w_6$ cannot be drawn without introducing a fan-crossing. Thus, 
$w_6$ can be only in either $R_6$ or $R_1$. We consider each of these two cases separately. 

First, consider the case in which $w_6$ is in $R_6$. Since $(u_3,w_6)$ is allowed to cross neither 
$(u_1,w_5)$ nor $(u_3,w_4)$, and since this edge cannot cross $(u_1,w_4)$ without introducing a 
fan-crossing, vertex $w_6$ must be in $R'_6$. To avoid introducing any fan-crossing, edge $(u_3,w_6)$ 
must cross $(u_2,w_3)$. Similarly, the edge $(u_1,w_6)$ must cross $(u_2,w_4)$; see dashed dotted blue 
edges in Fig.~\ref{fig:fanCrossingFree:general:drawingBC}.
	
Next, we argue for vertex $w_7$, which cannot lie in $R'_6$, since by the arguments above $(u_3,w_7)$ 
would have to cross $(u_2,w_5)$. So, $w_7$ must be in $R_1$. It is not difficult to see that 
$(u_1,w_7)$ cannot cross an edge incident to $u_1$, or one of the edges $(u_3,w_3)$, $(u_3,w_4)$ and 
$(u_2,w_4)$, or the pair $(u_2,w_3)$ and $(u_2,w_1)$. As a result, $w_7$ must be in $R'_1$. But then it 
is easy to see that $(u_2,w_7)$ yields inevitably a fan-crossing. In fact, this edge is not allowed to 
cross any edge incident to $u_2$, or one of the edges $(u_3,w_3)$, $(u_3,w_6)$ and $(u_1,w_6)$, or both 
edges $(u_1,w_3)$, $(u_1,w_2)$. 

It remains to consider the case in which $w_6$ in $R_1$. We claim that in this case $w_6$ is 
inevitably in $R'_1$. To see this, observe that $(u_1,w_6)$ can cross neither any edge incident 
to $u_1$, nor one of the edges $(u_3,w_3)$ and $(u_3,w_4)$, nor both edges $(u_2,w_4)$ and $(u_2,w_5)$, 
nor both edges $(u_2,w_3)$ and $(u_2,w_1)$. As in the previous case, we next argue for $w_7$, which 
cannot be in $R_6$ (by the arguments above). So, we can assume that $w_7$ is in $R_1$, too. By the same 
arguments as for $w_6$, we obtain that $w_7$ is in $R'_1$. For both edges $(u_2,w_6)$ and $(u_2,w_7)$ 
the following holds. They can cross neither $(u_3,w_3)$, nor the pair $(u_1,w_3)$ and $(u_1,w_2)$, 
nor the pair $(u_1,w_4)$ and $(u_1,w_5)$. Thus, both have to cross $(u_3,w_4)$, yielding a 
fan-crossing.

From the case analysis above, we conclude that in the case in which $u_3$ is in $R_1$, it is not 
possible to augment $\Gamma_2$ to a fan-crossing free drawing of $K_{3,7}$.

We now consider the second case of our analysis for $u_3$, in which $u_3$ is in $R_6$. By 
Lemma~\ref{obs:3}, the edges $(u_3,w_1)$ and $(u_3,w_2)$ cross the edges $(u_1,w_5)$ and $(u_2,w_5)$, 
respectively; see
Fig~\ref{fig:fanCrossingFree:general:drawingBD}. It follows that $(u_3,w_3)$ crosses either $(u_1,w_4)$ 
or $(u_2,w_4)$. W.l.o.g.~assume that $(u_3,w_3)$ crosses $(u_1,w_4)$. Then, $(u_3,w_4)$ must be planar, as 
otherwise regardless of its drawing a fan-crossing is introduced.

\begin{figure}[tb]
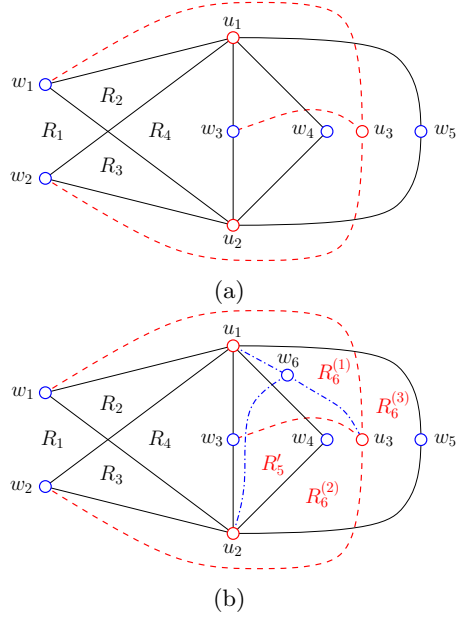

	\centering
	 \subcaptionbox{\label{fig:fanCrossingFree:general:drawingBD}}{
	 \includegraphics[page=3, width=0.49\textwidth]{figures/FanCrFree-contradictionK47}}
	 \hfil
	 \subcaptionbox{\label{fig:fanCrossingFree:general:drawingBF}}{
	 \includegraphics[page=4, width=0.49\textwidth]{figures/FanCrFree-contradictionK47}}
	\caption{
	(a)~$u_3$ is in $R_6$; 
	$(u_3,w_1)$ crosses $(u_1,w_5)$;
	$(u_3,w_2)$ crosses $(u_2,w_5)$; 
	$(u_3,w_3)$ crosses $(u_1,w_4)$, and 
	(b)~$w_6$ is in $R_6^{(1)}$; 
	$(u_2,w_6)$ crosses $(u_3,w_3)$.}
\end{figure}

We next argue for $w_6$. This vertex can be neither in $R_2$ nor in $R_3$, as otherwise either the edge $(u_2,w_6)$ or the $(u_1,w_6)$ would yield a fan-crossing, respectively. Vertex $w_6$ can be neither in $R_1$ nor in $R_4$ as well, because in both cases edge $(u_3,w_6)$ yields a fan-crossing. In follows that $w_6$ is either in $R_5$ or in $R_6$. We consider each of these two cases separately in the following. 
First suppose that $w_6$ is in $R_6$. The edges incident to $u_3$ partition $R_6$ into three subregions, which we denote by $R_6^{(1)}$, $R_6^{(2)}$ and $R_6^{(3)}$; refer to the regions annotated with red labels in Fig~\ref{fig:fanCrossingFree:general:drawingBF}.
If $w_6$ is in $R_6^{(2)}$, then both edges $(u_2,w_6)$ and $(u_3,w_6)$ can be even crossing-free, but 
nevertheless $(u_1,w_6)$ yields a fan-crossing. If $w_6$ is in $R_6^{(3)}$, then the edges $(u_1,w_6)$ 
and $(u_2,w_6)$ yield a fan-crossing. It follows that $w_6$ is in $R_6^{(1)}$.
	
We next observe that $(u_1,w_6)$ can cross none of the edges $(u_1,w_4)$, $(u_1,w_5)$, $(u_3,w_1)$ and $(u_3,w_3)$, which implies that $(u_1,w_6)$ is crossing-free; see dashed-dotted blue edged in Fig~\ref{fig:fanCrossingFree:general:drawingBF}). Analogously, $(u_3,w_6)$ must be also crossing-free.
Now consider the edge $(u_2,w_6)$. Observe that this edge cannot cross an edge incident to $u_2$, or two edges 	incident to $u_1$, or two edges incident to $u_3$. We conclude that the only way to draw $(u_2,w_6)$ is by crossing $(u_1,w_4)$ and $(u_3,w_3)$.
	
On the other hand, vertex $w_7$ cannot lie in $R_6$, since the same arguments as for $w_6$ above imply that $(u_2,w_7)$ must cross $(u_1,w_4)$ and $(u_3,w_3)$, thus forming a fan-crossing. So, vertex $w_7$ is in $R_5$. In particular, in the presence of the edge $(u_3,w_7)$, vertex $w_7$ must be in the region denoted by $R'_5$ (see Fig~\ref{fig:fanCrossingFree:general:drawingBF}). However, this implies that the edge $(u_1,w_7)$ yields a fan-crossing; indeed, this edge cannot cross $(u_3,w_3)$, or any edge incident to $u_1$, or the pair of edges $(u_2,w_6)$ and $(u_2,w_3)$, or the pair $(u_2,w_4)$ and $(u_2,w_5)$.
	
\begin{figure}[tb]
	\centering
	\includegraphics[page=5, width=0.49\textwidth]{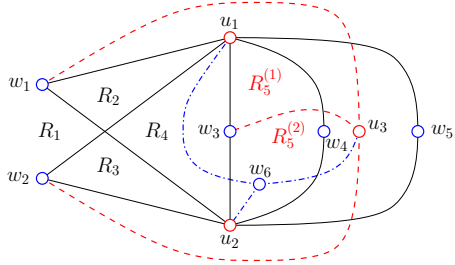}
	\caption{Vertex $w_6$ is in $R_5$; 
	$(u_3,w_6)$ crosses $(u_2,w_4)$;
	$(u_1,w_6)$ crosses $(u_2,w_3)$.}
	\label{fig:fanCrossingFree:general:drawingBE}
\end{figure}

It remains to consider the case in which $w_6$ is in $R_5$. In the presence of the edge $(u_3,w_3)$, this region is partitioned into two subregions $R_5^{(1)}$ and $R_5^{(2)}$; refer to the regions annotated with red labels in Fig~\ref{fig:fanCrossingFree:general:drawingBE}.
Suppose first that $w_6$ is in $R_5^{(1)}$. Then, it is not difficult to see that $(u_3,w_6)$ inevitably yields a fan-crossing, as this edge cannot cross $(u_1,w_4)$ or $(u_3,w_3)$.
It follows that $w_6$ must be in $R_5^{(2)}$. In this case, the edge $(u_1,w_6)$ has to cross $(u_2,w_3)$. Also, the edge $(u_3,w_6)$ has to cross $(u_2,w_4)$, and the edge $(u_2,w_6)$ must be crossing-freee; see dashed-dotted blue edges in Fig~\ref{fig:fanCrossingFree:general:drawingBE}).
Now, observe that $R_5$ cannot contain any other vertex among those that are not in the drawing constructed so far, since the edges incident to this vertex would cross exactly the same edges as the edges incident to $w_6$, which inevitably yield fan-crossings. Since these vertices cannot also lie in $R_6$ (as we consider this case earlier), we conclude that $\Gamma_2$ cannot be a subdrawing of a fan-crossing free drawing of $K_{3,7}$, when $u_3$ is in $R_6$.

Our analysis for vertex $u_3$ suggests that $\Gamma_2$ cannot be a subdrawing of a fan-crossing free drawing of $K_{3,7}$.

\medskip

\noindent\textbf{The drawing $\Gamma_3$.}
In drawing $\Gamma_3$ we have seven regions $R_1,\ldots,R_7$ as in Fig.~\ref{fig:fanCrossingFree:drawingC}.
As proved in Section~\ref{se:k55}, vertex $u_3$ can lie neither in regions $R_3,R_5$ (Lemma~\ref{obs:1}), nor in regions $R_6,R_7$ (Lemma~\ref{obs:3}).
Thus, vertex $u_3$ must lie in one of the following regions: $R_2$, $R_4$ or $R_1$. We consider each of these case separately.  

We first consider the case in which $u_3$ is in $R_4$. By Lemma~\ref{obs:3}, $w_6$ must be either in $R_4$ or in $R_5$. Both cases are not possible, because edge $(u_1,w_6)$ inevitably yields a fan-crossing (in fact, this edge cannot cross any of the edges $(u_1,w_2)$, $(u_2,w_1)$, or $(u_2,w_2)$). This rules out the case in which $u_3$ is in $R_4$.

Next, we consider the case in which $u_3$ is in $R_2$. By Lemma~\ref{obs:3}, $w_6$ must be either in $R_2$ or in $R_3$. Both cases are not possible, because edge $(u_2,w_6)$ inevitably yields a fan-crossing (in fact, this edge cannot cross any of the edges $(u_2,w_2)$, $(u_1,w_3)$, or $(u_1,w_2)$). So, the case in which $u_3$ is in $R_2$ is also ruled out.

Finally, we consider the case in which  $u_3$ is in $R_1$. By Lemma~\ref{obs:3} (for $R=R_1$ [$R=R_7$],
$w_i=w_2$, $w_j=w_3$ and $w_k=w_5$ [$w_k=w_4$]), it follows that the edge $(u_3,w_2)$ has to cross the pair of edges $(u_1,w_5)$ and $(u_1,w_4)$, which yields a fan-crossing. As a consequence, $u_3$ cannot be in $R_1$.

Our case analysis on $u_3$ implies that $\Gamma_3$ cannot be a subdrawing of a fan-crossing free drawing of $K_{3,7}$.

\medskip

\noindent\textbf{The drawing $\Gamma_4$.}
Next, we consider drawing $\Gamma_4$, which has seven regions $R_1,\ldots,R_7$ as in Fig.~\ref{fig:fanCrossingFree:drawingD}.
As proved in Section~\ref{se:k55}, vertex $u_3$ can lie neither in regions $R_2,R_3,R_5,R_6$ (Lemma~\ref{obs:1}), nor in region $R_4$ (Lemma~\ref{obs:3}).
This implies that $u_3$ is either in $R_1$ or in $R_7$. 
Suppose that $u_3$ is in $R_7$; the case where $u_3$ is in $R_1$ is symmetric. By Lemma~\ref{obs:3},
the edges $(u_3,w_1)$ and $(u_3,w_2)$ cross $(u_1,w_5)$ and $(u_2,w_5)$, respectively; see Fig.~\ref{fig:fanCrossingFree:general:drawingDB}. It follows that the edges $(u_3,w_3)$, $(u_3,w_4)$ and $(u_3,w_5)$ must be crossing-free.

\begin{figure}[tb]
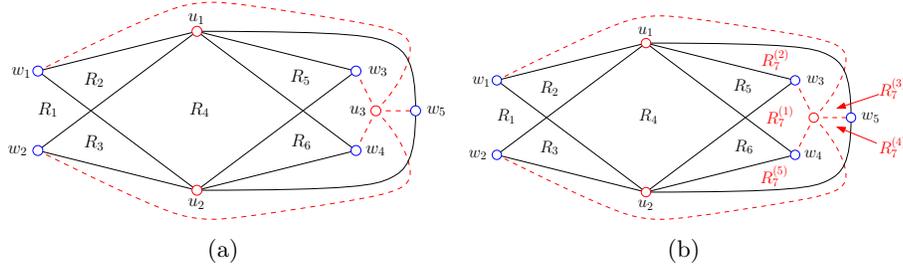

	\centering
	 \subcaptionbox{\label{fig:fanCrossingFree:general:drawingDB}}{
	 \includegraphics[page=6, width=0.48\textwidth]{figures/FanCrFree-contradictionK47}}
	 \hfil
	 \subcaptionbox{\label{fig:fanCrossingFree:general:drawingDC}}{
	 \includegraphics[page=7, width=0.48\textwidth]{figures/FanCrFree-contradictionK47}}
	\caption{
	(a)~$u_3$ is in $R_7$ of $\Gamma_4$ and 
	(b)~$R_7$ is partitioned into 5 subregions $R_7^{(1)}, \ldots R_7^{(5)}$.}
\end{figure}

Next, we argue for vertex $w_6\in V_2$. It is immediate to see that this vertex cannot lie in one of the regions $R_2,R_3,R_5$ and $R_6$, as otherwise either the edge $(u_1,w_6)$ or the edge $(u_2,w_6)$ yields a fan-crossing. We next claim  that $w_6$ is not in $R_1$, as well. Indeed, if $w_6$ were in $R_1$, then the edge $(u_3,w_6)$ would create a fan-crossing. It follows that $w_6$ is either in $R_4$ or in $R_7$.

\begin{figure}[b]
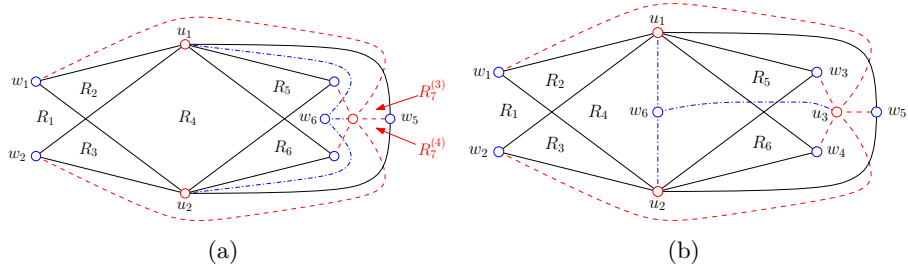

	\centering
	 \subcaptionbox{\label{fig:fanCrossingFree:general:drawingDD}}{
	 \includegraphics[page=8, width=0.48\textwidth]{figures/FanCrFree-contradictionK47}}
	 \hfil
	 \subcaptionbox{\label{fig:fanCrossingFree:general:drawingDE}}{
	 \includegraphics[page=9, width=0.48\textwidth]{figures/FanCrFree-contradictionK47}}
	\caption{Illustration of the cases in which
	(a)~$w_6$ is in $R_7^{(1)}$ and 
	(b)~$w_6$ is in $R_4$.}
\end{figure}

First, we consider the case in which $w_6$ is in $R_7$. Observe that the edges incident to $u_3$ partition the region $R_7$ into $5$ new regions, denoted by $R_7^{(1)}, \ldots R_7^{(5)}$; see Fig.~\ref{fig:fanCrossingFree:general:drawingDC}.
If $w_6$ is in $R_7^{(2)}$, then it is easy to both two edges $(u_1,w_6)$ and $(u_3,w_6)$ can be present, this is not true for the edge $(u_2,w_6)$, which inevitably yields a fan-crossing; in fact, the edge $(u_2,w_6)$ cannot cross an edge incident to $u_2$, or $(u_3,w_2)$, or $(u_1,w_4)$, or two edges incident to $u_1$. This already implies that $(u_2,w_6)$ must cross $(u_3,w_4)$ and $(u_3,w_3)$, which is not allowed. So, $w_6$ cannot be in $R_7^{(2)}$. By symmetry, $w_6$ cannot be in $R_7^{(5)}$ either.
If $w_6$ is in $R_7^{(3)}$, then the edge $(u_1,w_6)$ inevitably yields a fan-crossing; indeed, $(u_1,w_6)$ cannot cross
an edge incident to $u_1$ (especially the edge $(u_1,w_5)$), or the edge $(u_3,w_1)$, nor two edges incident to $w_5$ (namely, the edges $(u_3,w_5)$ and $(u_2,w_5)$), or two edges incident to $u_3$ (namely, the edges $(u_3,w_5)$ and $(u_2,w_2)$. Hence, $w_6$ can be neither in $R_7^{(3)}$ nor in $R_7^{(4)}$ (by symmetry). 
Fig.~\ref{fig:fanCrossingFree:general:drawingDD} illustrates the case in which $w_6$ is in $R_7^{(1)}$. In this case, we first observe that $(u_1,w_6)$ must cross $(u_3,w_3)$. Also, $(u_2,w_6)$ must cross $(u_3,w_4)$. Hence, $(u_3,w_6)$ must be crossing-free. Since it is easy to see that $w_7$ can be neither in $R_7$ nor in $R_4$, the case in which $w_6$ is in $R_7$ is ruled out.

To complete the analysis of drawing $\Gamma_4$, it remains to consider the case, in which $w_6$ is in $R_4$. In this case, the edge $(u_3,w_6)$ has to cross both edges $(u_1,w_4)$ and $(u_2,w_3)$. In addition, both edges $(u_1,w_6)$ and $(u_2,w_6)$ must be crossing-free; see dashed-dotted blue edges in Fig.~\ref{fig:fanCrossingFree:general:drawingDE}.
We next argue for $w_7$, which by symmetry can only be in $R_4$. However, in this case, $(u_3,w_7)$ would yield a fan-crossing. 
Since $w_6$ can be neither in $R_7$ nor in $R_4$, we conclude that the drawing $\Gamma_4$ cannot be a subdrawing of a fan-crossing free drawing of $K_{3,7}$.

\medskip

\noindent\textbf{The drawing $\Gamma_5$.}
In drawing $\Gamma_5$, we have eight regions $R_1,\ldots,R_8$ as in Fig.~\ref{fig:fanCrossingFree:drawingE}.
As proved in Section~\ref{se:k55}, vertex $u_3$ can lie neither in regions $R_3,R_5,R_7,R_8$ (Lemma~\ref{obs:1}), nor in regions $R_1,R_6$ (Lemma~\ref{obs:3}).
It follows that vertex $u_3$ lies either in $R_2$ or in $R_4$. 

Suppose first that $u_3$ lies in $R_4$. We next argue for vertex $w_6\in V_2$. By Lemma~\ref{obs:3}, vertex $w_6$ must be in $R_4$ or $R_5$. Both cases, however, are not possible, as the edge $(u_1,w_6)$ inevitably yields a fan-crossing; in fact, this edge cannot cross any of the edges $(u_1,w_2)$, $(u_2,w_1)$, and $(u_2,w_2)$, and as a results when $u_3$ is in $R_4$, a fan-crossing is always yielded.

Suppose now that $u_3$ is in $R_2$. Again we continue by arguing for $w_6$. By Lemma~\ref{obs:3}, this vertex must be either in $R_2$ or $R_3$. However, again both cases are not possible, as the edge $(u_2,w_6)$ inevitably yields a fan-crossing; in fact, this edge is cannot cross any of the edges $(u_2,w_2)$, $(u_1,w_3)$, and $(u_1,w_2)$, and as a result when $u_3$ is in $R_2$, a fan-crossing is always yielded.
We conclude that the drawing $\Gamma_5$ cannot be a subdrawing of a fan-crossing free drawing of $K_{3,7}$.

\medskip

\noindent\textbf{The drawing $\Gamma_6$.}
In drawing $\Gamma_6$, we have seven regions $R_1,\ldots,R_7$ as in Fig.~\ref{fig:fanCrossingFree:drawingF}.
As proved in Section~\ref{se:k55}, vertex $u_3$ can lie neither in regions $R_2,R_3,R_5,R_6$ (Lemma~\ref{obs:1}), nor in regions $R_1,R_4,R_7$ (Lemma~\ref{obs:3}).
Hence, in this case we can directly conclude that the drawing $\Gamma_5$ cannot be a subdrawing of a fan-crossing free drawing of $K_{3,7}$.

\medskip

\noindent\textbf{The drawing $\Gamma_7$.}
In drawing $\Gamma_7$, we have seven regions $R_1,\ldots,R_7$ as in Fig.~\ref{fig:fanCrossingFree:drawingG}.
As proved in Section~\ref{se:k55}, vertex $u_3$ can lie neither in regions $R_2,R_3,R_6,R_7$ (Lemma~\ref{obs:1}), nor in regions $R_4,R_5$ (Lemma~\ref{obs:3}).

Hence, vertex $u_3$ can only be in $R_1$. The edge $(u_3,w_3)$ can be drawn without introducing fan-crossings only if it crosses either both edges $(u_1,w_2)$ and $(u_2,w_1)$, or both edges $(u_1,w_5)$ and $(u_1,w_4)$. Assume w.l.o.g.~the former. Then, each of the edges $(u_3,w_1)$, $(u_3,w_2)$, $(u_3,w_4)$ and $(u_3,w_5)$ must be crossing-free; see
Fig.~\ref{fig:fanCrossingFree:general:drawingGB}.
We next argue for $w_6\in V_2$. We first observe that $w_6$ cannot lie in one of the regions $R_2,R_3,R_6$ or $R_7$, since
then the edge $(u_2,w_6)$ (if $w_6$ is in $R_2$ or $R_3$) or the edge $(u_1,w_6)$ (if $w_6$ is in $R_6$ or $R_7$) would yield fan-crossings. Also, $w_6$ is not in $R_4$, as otherwise $(u_3,w_6)$ would yield fan-crossings.
Thus, $w_6$ must be either in $R_1$ or in $R_5$ (by symmetry, $w_7$ must be either in $R_1$ or in $R_5$).

\begin{figure}[tb]
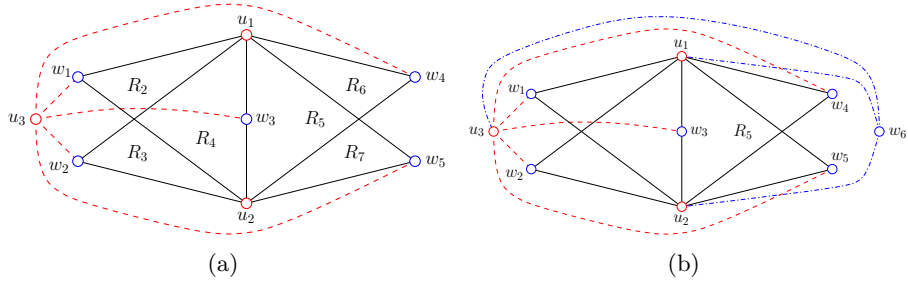

	\centering
	 \subcaptionbox{\label{fig:fanCrossingFree:general:drawingGB}}{
	 \includegraphics[page=10, width=0.48\textwidth]{figures/FanCrFree-contradictionK47}}
	 \hfil
	 \subcaptionbox{\label{fig:fanCrossingFree:general:drawingGD}}{
	 \includegraphics[page=11, width=0.48\textwidth]{figures/FanCrFree-contradictionK47}}
	\caption{Illustration of the cases in which
	(a)~$u_3$ is in $R_1$ and 
	(b)~$w_6$ is in $R_1$.}
\end{figure}

If $w_6\in V_2$ is in $R_1$, then it must be placed in the outer face of the drawing containing $u_3$, as otherwise one of the edges $(u_1,w_6)$ or $(u_2,w_6)$ would yield a fan-crossing. Further, the edge $(u_1,w_6)$ must cross $(u_3,w_4)$ and the edge $(u_2,w_6)$ must cross $(u_3,w_5)$. This implies that there $w_6$ is the only vertex in $R_1$, and the edge $(u_3,w_6)$ must be crossing-free; see Fig.~\ref{fig:fanCrossingFree:general:drawingGD}.
As a result, $w_7$ is in $R_5$ (recall that $w_7$ can either be in $R_1$ or in $R_5$). In this case, $(u_3,w_7)$ has to cross both edges $(u_1,w_5)$ and $(u_2,w_4)$, and additionally one of the edges $(u_1,w_6)$ or $(u_2,w_6)$, which yields a a fan-crossing. This rules out the case, in which $w_6\in V_2$ is in $R_1$.
	
To complete the case analysis, it remains to consider the case in which $w_6$ is in $R_5$. In this case, $(u_3,v)$ must cross both edges $(u_1,w_5)$ and $(u_2,w_4)$. This already implies that there $w_6$ is the only vertex in $R_5$ and that each of the edges $(u_1,w_6)$ and $(u_2,w_6)$ must be drawn crossing-free; see Fig.~\ref{fig:fanCrossingFree:general:drawingGE}. However, in this case $w_7$ must be in $R_1$, which yields a case that we have already ruled out.
We conclude that the drawing $\Gamma_7$ cannot be a subdrawing of a fan-crossing free drawing of $K_{3,7}$, which concludes the proof of Characterization~\ref{th:bipartite:fcf}.

\begin{figure}[h]
	\centering
	\includegraphics[page=12, width=0.49\textwidth]{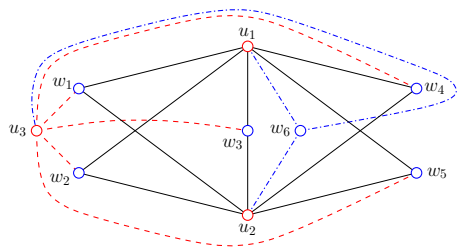}
	\caption{$w_6$ is in $R_5$.}
	\label{fig:fanCrossingFree:general:drawingGE}
\end{figure}

}
\end{document}